\title{Interference Alignment for the Multi-Antenna Compound Wiretap Channel}
\newcounter{actr}
{\begin{list}{(\alph{actr})}{\usecounter{actr}}}{\end{list}}
\newcounter{ictr}
{\begin{list}{(\roman{ictr})}{\usecounter{ictr}}}{\end{list}}
\newtheorem{remark}{Remark}
\newtheorem{thm}{Theorem}
\newtheorem{lemma}{Lemma}
\newtheorem{prop}{Proposition}
\newtheorem{defn}{Definition}
\newenvironment{new-proof}[1]
{{\em Proof }:\\}%
{ \noindent\qed }
\newcommand{\qed}{\rule[0.1ex]{1.4ex}{1.6ex}}
\newcommand{\defeq}{\stackrel{\Delta}{=}}
\newcommand{\pe}{\Pr(\eps)}
\newcommand{\mrm}{\mathrm}
\newcommand{\ba}{{\mathbf{a}}}
\newcommand{\cA}{{\mathcal{A}}}
\newcommand{\bb}{{\mathbf{b}}}
\newcommand{\cC}{{\mathcal{C}}}
\newcommand{\cE}{{\mathcal{E}}}
\newcommand{\bg}{{\mathbf{g}}}
\newcommand{\bgt}{{\tilde{\bg}}}
\newcommand{\cG}{{\mathcal{G}}}
\newcommand{\bh}{{\mathbf{h}}}
\newcommand{\bht}{{\tilde{\bh}}}
\newcommand{\cN}{{\mathcal{N}}}
\newcommand{\cS}{{\mathcal{S}}}
\newcommand{\cT}{{\mathcal{T}}}
\newcommand{\bu}{{\mathbf{u}}}
\newcommand{\bv}{{\mathbf{v}}}
\newcommand{\bx}{{\mathbf{x}}}
\newcommand{\xt}{{\tilde{x}}}
\newcommand{\yt}{{\tilde{y}}}
\newcommand{\zt}{{\tilde{z}}}
\newcommand{\al}{\alpha}
\newcommand{\bal}{{\boldsymbol{\al}}}
\newcommand{\bt}{\boldsymbol{t}}
\newcommand{\g}{\gamma}
\newcommand{\del}{\delta}
\newcommand{\eps}{\varepsilon}
\DeclareMathAlphabet{\mathbsf}{OT1}{cmss}{bx}{n}
\DeclareMathAlphabet{\mathssf}{OT1}{cmss}{m}{sl}
\DeclareSymbolFont{bsfletters}{OT1}{cmss}{bx}{n}
\DeclareSymbolFont{ssfletters}{OT1}{cmss}{m}{n}
\DeclareMathSymbol{\bsfGamma}{0}{bsfletters}{'000}
\DeclareMathSymbol{\ssfGamma}{0}{ssfletters}{'000}
\DeclareMathSymbol{\bsfDelta}{0}{bsfletters}{'001}
\DeclareMathSymbol{\ssfDelta}{0}{ssfletters}{'001}
\DeclareMathSymbol{\bsfTheta}{0}{bsfletters}{'002}
\DeclareMathSymbol{\ssfTheta}{0}{ssfletters}{'002}
\DeclareMathSymbol{\bsfLambda}{0}{bsfletters}{'003}
\DeclareMathSymbol{\ssfLambda}{0}{ssfletters}{'003}
\DeclareMathSymbol{\bsfXi}{0}{bsfletters}{'004}
\DeclareMathSymbol{\ssfXi}{0}{ssfletters}{'004}
\DeclareMathSymbol{\bsfPi}{0}{bsfletters}{'005}
\DeclareMathSymbol{\ssfPi}{0}{ssfletters}{'005}
\DeclareMathSymbol{\bsfSigma}{0}{bsfletters}{'006}
\DeclareMathSymbol{\ssfSigma}{0}{ssfletters}{'006}
\DeclareMathSymbol{\bsfUpsilon}{0}{bsfletters}{'007}
\DeclareMathSymbol{\ssfUpsilon}{0}{ssfletters}{'007}
\DeclareMathSymbol{\bsfPhi}{0}{bsfletters}{'010}
\DeclareMathSymbol{\ssfPhi}{0}{ssfletters}{'010}
\DeclareMathSymbol{\bsfPsi}{0}{bsfletters}{'011}
\DeclareMathSymbol{\ssfPsi}{0}{ssfletters}{'011}
\DeclareMathSymbol{\bsfOmega}{0}{bsfletters}{'012}
\DeclareMathSymbol{\ssfOmega}{0}{ssfletters}{'012}
\renewcommand{\pe}{\Pr(e)}
\renewcommand{\defeq}{\triangleq}
\newcommand{\rvW}{{\mathssf{W}}}    
\newcommand{\rvm}{{\mathssf{m}}}    
\newcommand{\rvbm}{{\mathbsf{m}}}
\newcommand{\rvn}{{\mathssf{n}}}    
\newcommand{\rvs}{{\mathssf{s}}}    
\newcommand{\rvu}{{\mathssf{u}}}    
\newcommand{\rvv}{{\mathssf{v}}}    
\newcommand{\rvbv}{{\mathbsf{v}}}
\newcommand{\rvw}{{\mathssf{w}}}    
\newcommand{\rvbw}{{\mathbsf{w}}}
\newcommand{\rvx}{{\mathssf{x}}}    
\newcommand{\rvbx}{{\mathbsf{x}}}
\newcommand{\rvy}{{\mathssf{y}}}    
\newcommand{\rvyt}{{\tilde{\rvy}}}
\newcommand{\rvby}{{\mathbsf{y}}}
\newcommand{\rvz}{{\mathssf{z}}}    
\newcommand{\rvzt}{{\tilde{\rvz}}}
\newcommand{\rvbz}{{\mathbsf{z}}}
\newcommand{\rvV}{{\mathssf{V}}}
\newcommand{\var}{\mrm{var}}
\newcommand{\bbe}{\mathbf{b}}
\newcommand{\cH}{{\mathcal H}}
\newcommand{\dent}{\hbar}
\newtheorem{assump}{Assumption}
\begin{document}

\author{
\authorblockN{Ashish Khisti}\\
\authorblockA{ECE Dept. \\
University of Toronto\\
Toronto, ON, M1B 5P1\\
akhisti@comm.utoronto.ca} \thanks{This work was support by a Natural Science and Engineering Research Council (NSERC) Discovery Grant. Part of the work has been presented at the Information Theory and Applications Worksop (ITA), San Diego, 2010~\cite{khistiITA10}.}}

\maketitle

\begin{abstract}
We study a wiretap channel model where the sender has $M$ transmit antennas and there are two groups consisting of $J_1$ and $J_2$ receivers respectively. Each receiver has  a single antenna. We consider two scenarios. First we consider the compound wiretap model ---  group $1$ constitutes the set of legitimate receivers, all interested in a common message, whereas  group $2$ is the set of eavesdroppers. We establish new lower and upper bounds on the secure degrees of freedom.  Our lower bound is based on the recently proposed \emph{real interference alignment} scheme. The upper bound provides the first known example which illustrates that the \emph{pairwise upper bound} used in earlier works is not tight. 

The second scenario we study is the compound private broadcast channel. Each group is interested in a message that must  be protected from the other group. Upper and lower bounds on the degrees of freedom are developed by  extending the results on the compound wiretap channel. 
\end{abstract}

\section{Introduction}
Wyner's wiretap channel~\cite{wyner:75Wiretap} is an information theoretic model for secure communications at the physical layer. In this model, there are three terminals --- a sender, a receiver and an eavesdropper. A wiretap code simultaneously  meets a reliability constraint with respect to the legitimate receiver and a secrecy constraint with respect to the eavesdropper. In recent times, there has been a significant interest in applying this model to wireless communication systems.  Some recent works include secure communications over fading channels~\cite{khistiTchamWornell:07,gopalaLaiElGamal:06Secrecy,liangPoor07}, multi-antenna wiretap channels~\cite{khistiWornellEldar:07,khistiWornell:09a,khistiWornell:09b,LyLiu:09,LiuLiu:09,LiuShamai:09,shafieeLiuUlukus:07,EkremUlukus:07,Ghadamali} and several multiuser extensions of the wiretap channel. 

The wiretap channel requires that channel statistics of all the terminals be globally known.  This model is justified in applications where  the receiver channels are degraded. The wiretap code can be designed for the strongest (worst-case) eavesdropper in the class of all eavesdropper channels. However in many cases of practical interest, such as in the case of  multi-antenna channels, the receivers cannot be ordered in this fashion. There is no natural choice for the ``worst-case" eavesdropper and the ordering of the eavesdroppers depends on the transmit directions. Hence it is natural to study an extension of the wiretap channel that explicitly incorporates the lack of knowledge of the receiver channels i.e., the compound wiretap channel. This model was recently studied in~\cite{liangKramer:08, Liuprabhakaran:07,khistiTchamWornell:07}.  The channels of the legitimate receiver and  the eavesdropper take one of finitely many values. Note that this problem is equivalent to broadcasting a common message to multiple intended receivers, one corresponding to each channel state, while keeping the message secure against a collection of non-colluding eavesdroppers. A lower bound on the secrecy capacity is established in~\cite{liangKramer:08}. One special case where the optimality of this scheme holds is the deterministic wiretap channel with a single realization of the legitimate receiver. In this case the lower bound coincides with a natural \emph{pairwise upper bound} on the secrecy capacity. The pairwise bound is obtained as follows. We consider the secrecy capacity associated with one particular pair of legitimate receiver and eavesdropper by ignoring the presence of all other terminals. Clearly this constitutes an upper bound on the capacity. The pairwise upper bound is obtained by selecting the pair with the smallest capacity.
The pairwise upper bound was also used in establishing the secrecy capacity in~\cite{khistiTchamWornell:07,Liuprabhakaran:07,gopalaLaiElGamal:06Secrecy} for a class of parallel reversely degraded compound wiretap channels. In~\cite{khistiTchamWornell:07} the authors consider the case of multiple legitimate receivers and one eavesdropper and introduce a new class of \emph{secure multicast codes} that achieve the pairwise upper bound. When specialized to the case of no eavesdroppers, the resulting scheme yields a different coding scheme than the  vector codebook approach in~\cite{elGamal:80}.
The case when there is one legitimate receiver and multiple eavesdroppers is settled in~\cite{Liuprabhakaran:07,gopalaLaiElGamal:06Secrecy}. A new coding scheme  is proposed that meets the pairwise upper bound. Some other recent works on the compound wiretap channel include~\cite{EkremUlukus:09,PeronDiggavi:09}.

To the best of our knowledge, no upper bounds, besides the pairwise upper bound, are known for the compound wiretap channel. In this paper we study the multi-input-single-output (MISO) wiretap channel, where both the legitimate receivers and the eavesdroppers channel take one of finitely many states. We develop a new upper bound on secrecy-rate that is tighter than the pairwise upper bound and establishes that in general there is a loss in degrees of freedom due to uncertainty of channel state information at the transmitter. In addition we develop new lower bounds that  combine the real interference alignment technique recently proposed in~\cite{realInterf1,realInterf2,compoundMIMO1,compoundMIMO2} with wiretap code constructions.  Our achievable degrees of freedom remain constant, independent of the number of states of the legitimate receiver and eavesdropper. In contrast we observe that naive approaches based on time-sharing only achieve vanishing degrees of freedom as the number of states increase.

We also study an extension of the compound MISO wiretap channel when there are two messages, that we refer to as the \emph{compound private broadcast}. To our knowledge the private broadcast model  is first proposed by Cai and Lam~\cite{caiLam00}. While~\cite{caiLam00} only studies the deterministic broadcast channel,  more recent works~\cite{LiuMaricSpasojevicYates:07,LiuLiu:09} study a larger class of channels including the discrete memoryless channels and the multi-input-multi-output Gaussian channels. The present paper extends this model to the case when each receiver's channel takes one of finitely many states. Lower and upper bounds on the sum of the secure degrees of freedom are developed. While we restrict our analysis to the above mentioned cases,  we expect similar techniques to be applicable to other extensions of the wiretap channel such as~\cite{IA:08}\cite{heYener:09}\cite{EkremUlukus:07}\cite{Ghadamali}.

The remainder of the paper is organized as follows. Section~\ref{sec:Main} described the channel model and summarizes the main results in this is paper. In section~\ref{sec:realInterf} we review the real interference alignment scheme for the scalar point-to-point Gaussian channel. Sections~\ref{sec:CWC:LB} and~\ref{sec:CWC:UB}  establish lower and upper bound on the secrecy degrees of freedom of the compound wiretap channel. Sections~\ref{sec:PBC:LB} and~\ref{sec:PBC:UB} develop analogous results for the compound private broadcast channel. Conclusions are provided in section~\ref{sec:concl}.

\section{Main Results}
\label{sec:Main}
The channel model consists of one transmitter with $M$ antennas and two receivers each 
with one antenna. We further assume that the channels coefficient vectors of the two receivers belong of a finite set i.e.,
\begin{equation}
\begin{aligned}
\bh &\in \cH =\left\{\bh_1,\bh_2,\ldots,\bh_\mrm{J_1}\right\}\\
\bg &\in \cG =\left\{\bg_{1},\bg_{2},\ldots,\bg_{J_2}\right\}\\
\end{aligned}
\end{equation}
It is assumed that each receiver knows its own channel realization whereas the remaining terminals are only aware of the sets $\cH$ and $\cG$. 
Furthermore we assume that the channel coefficients remain fixed for the entire duration of communication. In our analysis of lower and upper bounds we make one of the following two assumptions.
\begin{assump}
\label{assump:LB}
The channel vectors $\bh_1,\ldots, \bh_\mrm{J_1}$ as well as $\bg_1,\ldots, \bg_{J_2}$ are each drawn from a real valued continuous distribution. 
\end{assump}
\begin{assump}
\label{assump:UB}
Any collection of $M$ (or fewer vectors) in $\cH \cup \cG$ be linearly independent.
\end{assump}
We note that assumption~\ref{assump:LB}, almost surely implies assumption~\ref{assump:UB} and in this sense it is stronger.  The first assumption is used in the analysis of the lower bound whereas the second assumption is used in the analysis of the upper bound. 

The resulting channel model can be expressed as
\begin{equation}
\begin{aligned}
\rvy_j &= \bh_j^T \rvbx + \rvv_j,\qquad j=1,\ldots,J_1\\
\rvz_k &= \bg_k^T \rvbx + \rvw_k,\qquad k=1,\ldots,J_2 \label{eq:model1}
\end{aligned}
\end{equation}
where the channel input vector $\rvbx$ is required to satisfy the average power constraint $E[||\rvbx||^2]\le P$, the additive noise variables $\rvv_j$ and $\rvw_k$ are i.i.d.\ and distributed $\cN(0,1)$.

In the remainder of this section we separately consider two cases: the compound wiretap channel and the compound private broadcast channel.

\subsection{Compound Wiretap Channel}
A compound wiretap encoder  maps a message $\rvm$, uniformly distributed over a set of size $2^{nR}$, to the channel input sequence $\bx^n$. The decoder produces a message estimate $\hat{m}_j=g_j(\rvy_j^n;\bh_j)$. A rate $R$ is achievable if there exist a sequence of encoder and decoders of such that $\pe = \Pr(\rvm\neq \hat{\rvm}_j)\rightarrow 0$ as $n\rightarrow \infty$ for each $j=1,2,\ldots, J_1$ and $\frac{1}{n}I(\rvm;z_j^n) \rightarrow 0$ for each $j=1,2,\ldots, J_2$. The largest rate achievable under these constraints is the \emph{compound secrecy capacity}. Of particular interest in this paper is the degrees of freedom (d.o.f.) of the compound wiretap channel. We say that $d$ is an achievable secure degrees of freedom for the compound wiretap channel, if there exists a sequence of achievable rates $R(P)$, indexed by power $P$, such that\begin{equation}d = \lim_{P\rightarrow \infty}\frac{R(P)}{\frac{1}{2}\log_2 P}.\end{equation}The maximum attainable value of $d$ is the \emph{secrecy d.o.f.} of the compound wiretap channel.
\begin{figure*}
\includegraphics[scale=0.65]{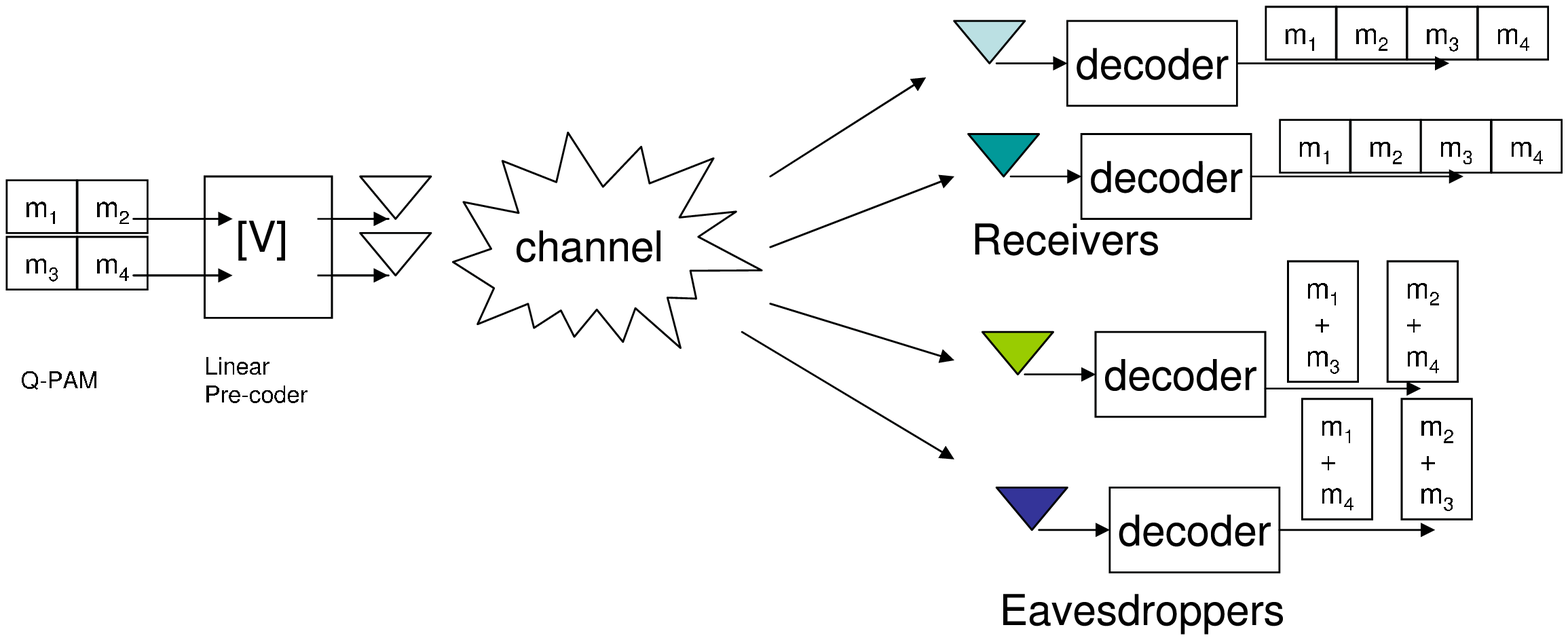}
\label{fig:CWC:IA}

\caption{Interference Alignment for the compound wiretap channel. Each of the four messages is drawn from a PAM constellation and carries a rate $\approx \frac{1}{8}\log P$. The linear precoder $V$ guarantees that while each legitimate receiver can decode all the four messages, each eavesdropper can only obtain two integer linear combinations of the messages (and no other information about the messages), thus reducing its signal dimension by a factor of $2$. Precoding matrices based on real interference alignment techniques~\cite{realInterf1,realInterf2} enable us to reduce the signal dimension at an arbitrary number of eavesdroppers by a factor of $\approx \frac{1}{M}$, thereby achieving $\approx 1-\frac{1}{M}$ degrees of freedom in Theorem~\ref{thm:CWC:LB}.}
\end{figure*}

We develop the following lower and upper bounds on the secure degrees of freedom.
\begin{thm}
\label{thm:CWC:LB}
Under assumption~\ref{assump:LB}, the following secure degrees of freedom are achievable  for the compound wiretap channel for all channel coefficient vectors, except a set of measure zero,
\begin{equation}
d_L = \begin{cases}
1, & \min(J_1,J_2)  < M,\\
\frac{M-1}{M}, & \min(J_1,J_2) \ge M.
\end{cases}
\label{eq:CWC:LB}
\end{equation}\hfill\qed
\end{thm}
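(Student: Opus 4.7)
The plan is to split the argument into two regimes according to whether $\min(J_1,J_2)<M$ or $\min(J_1,J_2)\geq M$. In the first regime I would prove $d_L=1$ by direct beamforming; only in the second would I invoke the real interference alignment machinery reviewed in \secref{sec:realInterf}.

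For the easy regime $\min(J_1,J_2)<M$: if $J_2<M$, then under Assumption~\ref{assump:LB} the subspace $\mrm{null}\{\bg_1,\ldots,\bg_{J_2}\}$ is almost surely of dimension at least $M-J_2\geq 1$, and picking any unit $\bv$ in it and transmitting $\rvbx=\bv\,\rvs$ with a Gaussian common message $\rvs$ of power $P$ delivers SNR $|\bh_j^T\bv|^2 P$ to every legitimate receiver (with $\bh_j^T\bv\neq 0$ a.s.\ by genericity of $\bh_j$) while leaking nothing to any eavesdropper, so common-message multicast gives $d_L=1$. If instead $J_1<M$, I would pick $\bu$ in the a.s.\ nonzero null space of $\{\bh_j\}$, fix a generic beam $\bv$ non-orthogonal to every $\bh_j$, and transmit $\rvbx = \bv\,\rvs + \bu\,\rvq$ with $\rvq$ i.i.d.\ Gaussian artificial noise of power $\Theta(P)$. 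The legitimate receivers see no contribution from $\rvq$, while each eavesdropper acquires an interference term of power $|\bg_k^T\bu|^2 P$ that bounds its useful SINR by a constant independent of $P$; a Negi--Goel style analysis applied to the worst of the finitely many eavesdroppers, combined with a standard Wyner wiretap outer code, then yields $d_L=1$.

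For the hard regime $\min(J_1,J_2)\geq M$: I would send $N\gg M$ i.i.d.\ PAM symbols $\rvs_1,\ldots,\rvs_N$ along precoders $\bt_1,\ldots,\bt_N\in\mathbb{R}^{M}$ whose coordinates are built as monomials in the eavesdropper channel entries, in the spirit of the MIMO alignment constructions of \cite{realInterf1,realInterf2,compoundMIMO1,compoundMIMO2} reviewed in \secref{sec:realInterf}. The construction is aimed at two simultaneous properties: (a)~at each legitimate receiver $j$ the effective coefficients $\{\bh_j^T\bt_l\}_{l=1}^{N}$ are a.s.\ rationally independent under Assumption~\ref{assump:LB}, so by a Khintchine--Groshev type estimate a joint lattice decoder recovers all $N$ PAM symbols with vanishing error probability at per-stream rate $\approx \tfrac{1}{2N}\log P$; (b)~at each eavesdropper $k$ the map $\bt_l\mapsto \bg_k^T\bt_l$ is a structured exponent shift in the underlying monomials, forcing the $N$ coefficients to collapse into only $\approx N/M$ distinct rational classes, so eavesdropper $k$ effectively observes only $\approx N/M$ rationally independent combinations of $\rvs_1,\ldots,\rvs_N$. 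Wrapping these $N$ streams in a Wyner wiretap outer code with $\approx N/M$ stochastic randomization symbols and $\approx N(M-1)/M$ message symbols then protects the message against every eavesdropper; letting $N\to\infty$ and $P\to\infty$ in the canonical order gives $d_L=(M-1)/M$.

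The main obstacle is the joint precoder design in the hard regime: the $\{\bt_l\}$ must simultaneously produce rational independence at every one of the $J_1$ legitimate receivers and monomial-shift alignment at every one of the $J_2$ eavesdroppers, for all channel realizations outside a measure-zero set. These requirements pull in opposite directions, but Assumption~\ref{assump:LB} guarantees that $\{\bh_j\}$ and $\{\bg_k\}$ are algebraically independent almost surely, so a monomial construction indexed solely by $\bg_1,\ldots,\bg_{J_2}$ automatically satisfies (b), while (a) reduces to an a.s.\ rational-independence estimate via Khintchine--Groshev. The final secrecy analysis then reduces to a standard equivocation bound for the wiretap outer code, using the aligned $\approx N/M$-dimensional observation at each eavesdropper as its side information.
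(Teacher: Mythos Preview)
Your proposal is correct and follows essentially the same route as the paper: zero-forcing (when $J_2<M$) or artificial noise in the legitimate null space (when $J_1<M$) for the easy regime, and a monomial-based real interference alignment precoder built from the eavesdropper channel gains for the hard regime, so that every legitimate receiver sees rationally independent effective coefficients while every eavesdropper's effective coefficients collapse into a set of cardinality $\approx L/M$.

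The one place where the paper is cleaner is the secrecy layer in the hard regime. You phrase it as designating $\approx N/M$ of the PAM streams as ``stochastic randomization symbols'' and the remaining $\approx N(M-1)/M$ as message symbols. Read literally as a structured split of the streams, this would require an extra argument: for each eavesdropper $k$ you would need the randomization streams alone to saturate that eavesdropper's $\approx N/M$-dimensional aligned observation, and there is no a~priori reason a fixed choice of randomization indices does this simultaneously for all $k$. The paper sidesteps this by simply evaluating the known compound-wiretap single-letter achievability $R=\min_j I(\rvu;\rvy_j)-\max_k I(\rvu;\rvz_k)$ with $\rvu=\bb$ and $\rvbx=V\bb$, bounding $I(\bb;\rvy_j)\ge(1-\eps)\tfrac12\log P$ via Fano and the Khintchine--Groshev decoding guarantee, and bounding $I(\bb;\rvz_k)\le H(S_k\bb)\le(N+1)^{MJ_2}\log(2MQ+1)\approx\tfrac{1}{2M}\log P$ via the alignment structure. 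If what you actually intend by ``Wyner wiretap outer code'' is random binning at bin-rate $\approx\tfrac{1}{2M}\log P$ over a block of channel uses, then your argument is exactly this, and the ambiguity disappears.
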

The lower bound, for the case $\min(J_1,J_2)\ge M$, is achieved by combining  real-interference alignment with a wiretap code construction. We can interpret the resulting degrees of freedom as follows.  By using interference alignment, the transmitter chooses signalling dimensions such that at each eavesdropper, the received signal dimensions are reduced by a factor of approximately $\frac{1}{M}$, whereas each intended receiver incurs no loss in the received signal dimensions. A wiretap code can then be designed to take advantage of this discrepancy to achieve $1-\frac{1}{M}$ degrees of freedom. 

It is noteworthy that interference alignment significantly outperforms a naive time-sharing based scheme where the achievable degrees of freedom approach zero as the number of states becomes large.
\begin{prop}
A scheme that combines time-sharing and noise transmission achieves the following degrees of feedom
\begin{equation}
d_L^{TS} = \begin{cases}
1, & \min(J_1,J_2) < M \\
\frac{M-1}{\min(J_1,J_2)}, & \min(J_1,J_2)\ge M \label{eq:CWC:TSLB}
\end{cases}
\end{equation}\label{prop:CWC:TSLB}
\hfill\qed\end{prop}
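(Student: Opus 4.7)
My plan is to give, for each regime of $(J_1,J_2)$ relative to $M$, an explicit time-shared scheme that combines transmit beamforming with either zero-forcing of the unintended side or artificial-noise masking of it, relying on Assumption~\ref{assump:LB} for the needed genericity (so that any vector generically drawn from a $d$-dimensional nullspace has nonzero inner product with every other fixed channel in $\cH\cup\cG$).

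When $\min(J_1,J_2)<M$, no time-sharing is required. If $J_2<M$, the joint nullspace $\bigcap_{k=1}^{J_2}\ker(\bg_k^T)$ has dimension at least $M-J_2\ge 1$, so I pick a generic $\bv$ in it with $\bh_j^T\bv\neq 0$ for every $j$ and transmit $\bx=\bv s$ with a Gaussian codebook of power $P$. Every eavesdropper observes only noise and every legitimate receiver sees SNR $\Theta(P)$, so secure d.o.f.\ $1$ is immediate. If instead $J_1<M$, I dualize: transmit $\bx=\bv s+\bU\bn$ where the columns of $\bU$ span $\bigcap_j\ker(\bh_j^T)$ and $\bn$ carries artificial noise of per-dimension power $\Theta(P)$. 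Each receiver is orthogonal to $\bU$ so its signal is clean, while each eavesdropper is drowned by $\Theta(P)$ artificial noise in its single receive dimension, giving $O(1)$ SINR; a standard worst-case Wyner codebook then achieves secure d.o.f.\ $1$.

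When $\min(J_1,J_2)\ge M$, I handle two symmetric regimes. First suppose $J_1\le J_2$. Partition the legitimate receivers into $L=\lceil J_1/(M-1)\rceil$ groups $\mathcal{J}_l$ of size $\le M-1$; then $\bigcap_{j\in\mathcal{J}_l}\ker(\bh_j^T)$ is still $\ge 1$-dimensional. In slot $l$ I transmit $\bx_l=\bv_l s+\bu_l n_l$ with $\bv_l$ generically chosen (so $\bh_j^T\bv_l\neq 0$ for $j\in\mathcal{J}_l$) and $\bu_l$ in this nullspace carrying $\Theta(P)$ artificial noise. All receivers in $\mathcal{J}_l$ decode a common codeword at per-slot d.o.f.\ $1$, while each eavesdropper sees $O(1)$ SINR. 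Sending the same codeword across the $L$ slots gives d.o.f.\ $1/L=(M-1)/J_1$ (exact when $M-1\mid J_1$, otherwise recovered by using all $\binom{J_1}{M-1}$ size-$(M-1)$ subsets symmetrically and overlaying an MDS/fountain code on the received symbols). In the dual case $J_2\le J_1$, I instead partition the eavesdroppers into $L=\lceil J_2/(M-1)\rceil$ groups $\mathcal{K}_l$, and in slot $l$ beamform along $\bv_l\in\bigcap_{k\in\mathcal{K}_l}\ker(\bg_k^T)$ (a $1$-dim line generically not orthogonal to any $\bh_j$), transmitting an independent Gaussian symbol $s_l$. Every legitimate receiver sees all $L$ slots at SNR $\Theta(P)$, but each eavesdropper is nulled in its own slot and sees only the remaining $L-1$ slots at SNR $\Theta(P)$. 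A length-$L$ Wyner wiretap code secures a rate of $\tfrac12\log P$ per block, yielding d.o.f.\ $1/L=(M-1)/J_2$.

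The main obstacle is the secrecy analysis of the time-shared schemes. For the artificial-noise variants, the $O(1)$ per-slot eavesdropper SINR reduces secrecy to a standard MISO wiretap argument with Gaussian inputs combined with a union bound over the $J_2$ eavesdroppers. The block-wiretap variant in Subcase $J_2\le J_1$ is more delicate: one must verify that for every one of the $J_2$ eavesdroppers, the $L-1$ slot views available to it leak asymptotically no bits about the message, which requires a standard random-binning Wyner construction for a parallel Gaussian channel together with a union bound. Once this is in hand the d.o.f.\ claims fall out directly from the per-slot rate and SINR accounting above.
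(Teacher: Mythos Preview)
Your proof is correct and follows essentially the same approach as the paper: time-sharing over size-$(M{-}1)$ subsets of receivers (with artificial noise in their common nullspace masking all eavesdroppers) or of eavesdroppers (with zero-forcing), combined with an outer erasure/MDS code across the slots. The paper goes directly to all $\binom{J}{M-1}$ subsets and invokes the secure-multicast construction of~\cite{khistiTchamWornell:07} for both the erasure coding and the per-slot-to-global secrecy implication, whereas you first try a coarser partition into $\lceil J/(M{-}1)\rceil$ groups (which gives the exact d.o.f.\ only when $M{-}1\mid J$) before falling back to that same all-subsets construction.
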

Comparing~\eqref{eq:CWC:LB} and~\eqref{eq:CWC:TSLB} it is clear that interference alignment provides significant gains in the degrees of freedom compared to time-sharing based lower bounds. The following example considers the case of rationally dependent channel gains and shows that higher achievable degrees of freedom can be achieved by a multilevel coding scheme. 
\begin{prop}
\label{prop:CWC:ML}
Consider a special case of channel model~\eqref{eq:model1}, where $M=J_1 = J_e =2$ and furthermore let
\begin{equation}\begin{aligned}
&\rvy_1 = \rvx_1 + \rvx_2 + \rvv_1, \qquad \rvy_2 = \rvx_1 - \rvx_2 + \rvv_2\\
&\rvz_k = \rvx_k + \rvw_k,\qquad k=1,2
\end{aligned}\label{eq:CWC:ML:Model}\end{equation} corresponding to the choice of $\bh_1 = [1;1]$, $\bh_2 = [1;-1]$, $\bg_1 = [1;0]$ and $\bg_2 = [0;1]$. There exists a multi-level coding scheme that achieves $\log_3 2 \approx 0.63$ secure d.o.f.\
\hfill\qed\end{prop}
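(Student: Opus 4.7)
\emph{Proof proposal.} The plan is to exploit the rational structure of the channel gains via a multi-level PAM construction in base $3$ with binary digits. The two arithmetic facts that drive the scheme are that, for $u_1, u_2 \in \{0,1\}$, the digit sum $u_1+u_2 \in \{0,1,2\}$ is the standard ternary digit set while $u_1-u_2 \in \{-1,0,1\}$ is the balanced-ternary digit set. Consequently, if I write $\rvx_i = \gamma \sum_{\ell=0}^{L-1} \rvu_i^{(\ell)}\, 3^\ell$ with $\rvu_i^{(\ell)} \in \{0,1\}$, then both $\rvx_1+\rvx_2$ and $\rvx_1-\rvx_2$ lie in $\gamma\nats$ and their standard (respectively balanced) ternary digits are exactly the level-wise sums and differences $\rvu_1^{(\ell)} \pm \rvu_2^{(\ell)}$. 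Here $\gamma>0$ and $L \in \nats$ are to be chosen as functions of $P$.

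I would encode as follows. For each level $\ell$ draw an independent uniform local random bit $\rvr^{(\ell)}$, and set $\rvu_1^{(\ell)} = \rvr^{(\ell)}$ and $\rvu_2^{(\ell)} = \rvr^{(\ell)} \oplus \rvm^{(\ell)}$. Each $\rvu_i^{(\ell)}$ is then marginally uniform and independent of $\rvm^{(\ell)}$, while $\rvm^{(\ell)} = \rvu_1^{(\ell)} \oplus \rvu_2^{(\ell)}$. Legitimate receiver $j$ rounds $\rvy_j/\gamma$ to the nearest integer --- an operation that succeeds whenever $|\rvv_j|<\gamma/2$ --- reads off the ternary (for $j=1$) or balanced-ternary (for $j=2$) digits, and recovers $\rvm^{(\ell)} = (\rvu_1^{(\ell)}+\rvu_2^{(\ell)}) \bmod 2 = |\rvu_1^{(\ell)} - \rvu_2^{(\ell)}|$. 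For each eavesdropper, $\rvz_k$ is a deterministic function of $\rvu_k$ plus independent noise, so the Markov chain $\rvm \to \rvu_k \to \rvz_k$ holds; the XOR masking makes $\rvm$ jointly independent of $\rvu_k$ across all $L$ levels, so $I(\rvm;\rvz_k) \le I(\rvm;\rvu_k) = 0$. Secrecy is therefore exact, not merely asymptotic.

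The remaining step is to balance $\gamma$ and $L$ against the average-power budget. A direct calculation gives $E[\|\rvbx\|^2] = \Theta(\gamma^2\, 3^{2L})$, so I would set $\gamma = P^{\delta/2}$ for arbitrarily small $\delta > 0$ and $L = \lfloor (1-\delta)\,\tfrac{1}{2}\log_3 P \rfloor$. Decoding reliability reduces to the scalar event $|\rvv_j|<\gamma/2$, whose probability tends to $1$ as $P \to \infty$; no union bound across levels is required, because a single successful rounding of $\rvy_j/\gamma$ to an integer simultaneously determines all $L$ digits. The resulting secure rate is $R = L = \tfrac{1-\delta}{2\log_2 3}\log_2 P + O(1)$, giving secure degrees of freedom $(1-\delta)\log_3 2$, which tends to $\log_3 2 \approx 0.63$ as $\delta \to 0$.

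The main technical obstacle I anticipate is calibrating the scale $\gamma$: too small a $\gamma$ destroys legitimate-receiver reliability, while too large a $\gamma$ eats into the power budget and shrinks the number of levels $L$. The choice $\gamma = P^{\delta/2}$ resolves this cleanly, and --- because the XOR construction already delivers per-symbol perfect secrecy --- the scheme bypasses the usual random-binning step of wiretap achievability and reduces the whole argument to a one-shot, one-parameter rounding analysis.
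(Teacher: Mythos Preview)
Your proposal is correct and is essentially the paper's proof: the XOR masking $(\rvu_1^{(\ell)},\rvu_2^{(\ell)})=(\rvr^{(\ell)},\rvr^{(\ell)}\oplus\rvm^{(\ell)})$ is exactly the paper's per-level wiretap code, and the base-$3$ multilevel construction with binary digits (hence no carries in either $\rvx_1+\rvx_2$ or $\rvx_1-\rvx_2$) is identical. The only cosmetic difference is noise handling---you scale by $\gamma=P^{\delta/2}$ and use levels $0,\ldots,L-1$, whereas the paper keeps unit spacing and instead reserves the lowest $T$ levels (with $T$ a fixed constant) as a guard band; these are equivalent up to the identification $\gamma\leftrightarrow 3^T$.
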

As will become apparent, the proposed multi-level coding scheme bears similarity with the interference alignment technique in that both schemes force the eavesdropper receivers to decode  linear combination of transmitted symbols. However while the interference alignment technique requires each legitimate receiver to decode every message symbol, the proposed multi-level coding scheme relaxes this constraint by taking advantage of the special channel structure in~\eqref{eq:CWC:ML:Model}.

The proof of  Theorem~\ref{thm:CWC:LB} and Prop.~\ref{prop:CWC:ML} are provided in section~\ref{sec:CWC:LB}. The proof of Prop.~\ref{prop:CWC:TSLB} appears in Appendix~\ref{app:CWC:TSLB}.
\begin{thm}
Under assumption~\ref{assump:UB}, the following expression provides an upper bound on the secure d.o.f.\ of the compound wiretap channel 
\begin{equation}
d_U = \begin{cases}1, &\min(J_1,J_2) < M\\
1-\frac{1}{M^2-M+1}, & \min(J_1, J_2) \ge M \end{cases}\label{eq:CWC:UB}
\vspace{-0.5em}\end{equation}
\label{thm:CWC:UB}\hfill\qed\end{thm}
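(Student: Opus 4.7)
The case $\min(J_1,J_2)<M$ is immediate from the single-user MISO capacity bound: since each legitimate receiver has a single antenna, $I(\rvm;\rvy_j^n)\le n\cdot\tfrac{1}{2}\log(1+P\|\bh_j\|^2)=n\cdot\tfrac{1}{2}\log P+O(n)$, so $d\le 1$.

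For the main case $\min(J_1,J_2)\ge M$, my plan uses Assumption~\ref{assump:UB} to select $M$ legitimate channels $\bh_1,\ldots,\bh_M$ and $M-1$ eavesdropper channels $\bg_1,\ldots,\bg_{M-1}$ so that every $M$-element subset of these $2M-1$ vectors spans $\reals^M$. Applying Fano's inequality at legitimate receiver $j$ together with the individual secrecy constraint at eavesdropper $k$ yields, for each $(j,k)\in\{1,\ldots,M\}\times\{1,\ldots,M-1\}$,
\begin{equation*}
nR \;\le\; I(\rvm;\rvy_j^n)-I(\rvm;\rvz_k^n)+2n\eps \;\le\; I(\rvm;\rvy_j^n\mid\rvz_k^n)+2n\eps,
\end{equation*}
giving $M(M-1)$ conditional bounds. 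Together with one additional Fano bound $nR\le I(\rvm;\rvy_{j_0}^n)+n\eps$ for a reference index $j_0$, summing the resulting $M^2-M+1$ inequalities produces
\begin{equation*}
(M^2-M+1)\,nR \;\le\; \sum_{j,k} I(\rvm;\rvy_j^n\mid\rvz_k^n) + I(\rvm;\rvy_{j_0}^n) + o(n\log P).
\end{equation*}
Dividing by $\tfrac{n}{2}\log P$ and letting $P\to\infty$, the desired bound $d\le 1-1/(M^2-M+1)$ reduces to showing that the right-hand side above has total d.o.f.\ at most $M(M-1)$.

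The main obstacle is exactly this last step. A naive per-term estimate $I(\rvm;\rvy_j^n\mid\rvz_k^n)\le n\cdot\tfrac{1}{2}\log P+o$ yields $M(M-1)+1$ d.o.f., which recovers only the weaker pairwise bound $d\le 1$. The improvement must exploit the dimensional structure afforded by Assumption~\ref{assump:UB}: for each fixed $k$, the joint conditional differential entropy $h(\rvy_1^n,\ldots,\rvy_M^n\mid\rvz_k^n)$ has d.o.f.\ exactly $M-1$, because the $M$ legitimate outputs are projections of an $M$-dimensional transmit signal one of whose directions is already captured by $\rvz_k^n$. A careful telescoping/chain-rule argument across the $M-1$ eavesdropper indices---based on the linear relations $\bh_j=\sum_{k}\alpha_{jk}\bg_k+\beta_j\bh_{j_0}$ furnished by the basis structure---should then collapse the sum by exactly one d.o.f.\ below the naive estimate, delivering the target $M(M-1)$. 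Constructing the correct sequence of chain-rule substitutions so that the telescoping actually closes, rather than merely rotating the right-hand side into an equivalent pairwise form, is the principal technical hurdle.
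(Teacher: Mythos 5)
The second case is where the real content lies, and your plan has a genuine gap: it never invokes the multicast (common-message) constraint, which is the essential ingredient that lets the paper beat the pairwise bound. You correctly observe that a naive per-term estimate on the $M(M-1)$ pairwise differences $I(\rvm;\rvy_j^n\mid\rvz_k^n)$ plus one Fano term gives $M(M-1)+1$ d.o.f.\ and hence only $d\le 1$, and you predict that a chain-rule telescoping across eavesdropper indices will save exactly one d.o.f. But that cannot work from the secrecy and single-receiver Fano inequalities alone: those inequalities are indifferent to whether \emph{one} or \emph{all} legitimate receivers must decode $\rvm$. If only one receiver needed to decode, one could aim the signal at that receiver and zero-force $M-1$ eavesdroppers, achieving 1 d.o.f., yet all of your chosen inequalities would still be satisfied. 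So no amount of rearranging them can yield a bound strictly below 1; the improvement has to come from an inequality that is violated by such a single-receiver scheme, i.e., from the requirement that the \emph{same} message reach all $J_1$ receivers.

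The paper's proof supplies exactly this missing piece. It does not work with per-pair conditionals $I(\rvm;\rvy_j^n\mid\rvz_k^n)$ at all. Instead, (i) it first shows (Lemma~\ref{lem:UB1}) the joint bound $nR \le \dent(\rvby_1,\ldots,\rvby_M) - \dent(\rvbz_k) + nc_k$ for each $k$, then averages over $k$ and uses the basis-change relation $\dent(\rvbz_1,\ldots,\rvbz_M) \ge \dent(\rvby_1,\ldots,\rvby_M) - nMd$ (valid because both $\{\bh_j\}$ and $\{\bg_k\}$ span $\reals^M$ under Assumption~\ref{assump:UB}) to get $nR \le (1-\tfrac{1}{M})\dent(\rvby_1,\ldots,\rvby_M) + O(n)$; and (ii) it separately proves the multicast constraint (Lemma~\ref{lem:JEUB}) $\dent(\rvby_1,\ldots,\rvby_M) \le \sum_i \dent(\rvby_i) - n(M-1)R + O(n)$, which encodes the correlation forced by requiring every receiver to recover $\rvm$. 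Combining (i) and (ii) and bounding each single-output entropy by $\tfrac{n}{2}\log P + O(n)$ gives $(M + (M-1)^2)\,nR \le M(M-1)\cdot \tfrac{n}{2}\log P + O(n)$, which is your $d\le 1 - \tfrac{1}{M^2-M+1}$ since $M+(M-1)^2 = M^2-M+1$. The coincidence of your constraint count with the correct denominator is suggestive but not a proof: your $M^2-M+1$ arises as a tally of inequalities, whereas the paper's arises as the algebraic combination $M+(M-1)^2$ of the two structural constraints. To repair your argument you would at minimum need to replace (or supplement) the single Fano bound $nR \le I(\rvm;\rvy_{j_0}^n)+n\eps$ with Fano applied jointly to all $M$ receivers, and convert that into an upper bound on $\dent(\rvby_1,\ldots,\rvby_M)$; once you do, you will essentially have reconstructed Lemma~\ref{lem:JEUB}, and the proof becomes the paper's rather than a telescoping of pairwise conditionals.
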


A new upper bound is derived in the proof of Theorem~\ref{thm:CWC:UB} by considering the constraints imposed due to secrecy and common message transmission. As we show below,  the single-letter upper bounds in earlier works only yield 1 d.o.f.\ In particular the pairwise upper bound on the secrecy capacity is 
\begin{equation}
C \le \max_{p_\rvbx} \min_{j,k} I(\rvbx;\rvy_j|\rvz_k)\label{eq:pairwise}
\end{equation}
This bound can be interpreted as follows: consider receiver $j$ and eavesdropper $k$. An upper bound on the secrecy capacity, in absence of all other terminals, is $I(\rvbx;\rvy_j|\rvz_k)$.  Minimizing over all such pairs results in~\eqref{eq:pairwise}. To the best of our knowledge this upper bound has been shown to be tight in some special cases~\cite{khistiTchamWornell:07,Liuprabhakaran:07,liangKramer:08}. For the compound MISO wiretap channel however this upper bound results in 1 secure d.o.f.  Indeed with $\bx \sim \cN(0,\frac{P}{M}I)$
\begin{align*}
&I(\rvbx;\rvy_i|\rvz_j) = I(\rvbx; \rvy_i,\rvz_j) - I(\rvbx; \rvz_j)\\
&= \sum_{j=1}^2 \frac{1}{2}\log\left(1+\frac{\lambda_j(H) P}{M}\right)- \frac{1}{2}\log\left(1 + \frac{P}{M}||\bg_j||^2\right)
\end{align*}
where $\lambda_1(H)$ and $\lambda_2(H)$ are the two non-zero eigen-values of the matrix $H =\left(
\begin{array}{c}
\bh_i^T  \\
\bg_j^T
\end{array}
\right)
\left(
\begin{array}{cc}
\bh_i &
\bg_j
\end{array}
\right)$. This yields 1 d.o.f.

To improve the pairwise upper bound in~\eqref{eq:pairwise} one can incorporate the fact that each receiver wants a common message i.e., 
\begin{equation}
C \le \max_{p_\rvbx} \min_{i,j} \min\{I(\rvbx;\rvy_i|\rvz_j), I(\rvbx;\rvy_i)\}\label{eq:pairwise2}
\end{equation}
however it can easily be verified that this potentially tighter bound also yields 1 d.o.f.

Instead of applying the single-letter bounds above  in Theorem~\ref{thm:CWC:UB} we start with the multi-letter characterization and carefully combine the associated constraints to get the proposed upper bound. We sketch the main steps below.

First, via the secrecy constraint, we show that
\begin{equation}
nR \le \dent(\rvby_1,\ldots, \rvby_M) - \frac{1}{M}\dent(\rvbz_1,\ldots,\rvbz_M) + n\eps_n, \label{eq:cond1}
\end{equation}
where one can interpret the  first term as the received signal at an enhanced user who observed all $(\rvby_1,\ldots, \rvby_M)$
whereas the second term corresponds to the observation at an ``average eavesdropper". Thus for the rate to be large, we need either the joint entropy of the eavesdropper observations $(\rvbz_1,\ldots,\rvbz_M)$ to be small or the joint entropy of the legitimate receivers $(\rvby_1,\ldots,\rvby_M)$ to be large.

Next we show that the joint entropy of  $(\rvbz_1,\ldots,\rvbz_M)$ cannot be too small compared to the joint entropy of $(\rvby_1,\ldots, \rvby_M)$.  Recall that with $J_1 \ge M$ and $J_2 \ge M$,  both $(\bh_1,\ldots, \bh_M)$  and $(\bg_1,\ldots,\bg_M)$ constitute a basis of $\mathbb{R}^M$. Using this property we show that
\begin{equation}
\dent(\rvbz_1,\ldots,\rvbz_M) \ge \dent(\rvby_1,\ldots,\rvby_M) - nMd,\label{eq:cond2}
\end{equation}
where $d$ is a constant that does not depend on $P$. Combining~\eqref{eq:cond1} and~\eqref{eq:cond2} we can deduce that
\begin{equation}
nR \le \left(1-\frac{1}{M}\right)\dent(\rvby_1,\ldots,\rvby_M) + nd_1\label{eq:cond3}
\end{equation}where $d_1 = d+\eps_n$. 

It thus follows that for the rate $R$ to be large, we  the joint entropy of $(\rvby_1,\ldots,\rvby_M)$ must be large. However since a common message needs to be delivered to each of the receivers, the outputs need to be sufficiently correlated. In particular we show that
\begin{equation}
n(M-1)R \le \sum_{i=1}^M\dent(\rvby_i) - \dent(\rvby_1,\ldots, \rvby_M) + nM\eps_n.\label{eq:cond4}
\end{equation}Eq.~\eqref{eq:cond3} and~\eqref{eq:cond4} illustrate the tension between the secrecy and common message transmission constraint. For~\eqref{eq:cond3} to be large we need the output sequences $(\rvby_1,\ldots, \rvby_M)$ to be as independent as possible.  However the common message constraint penalizes such independence.  Our upper bound in Theorem~\ref{thm:CWC:UB} exploits this tension. A complete proof appears in section~\ref{sec:CWC:UB}.

\subsection{Compound Private Broadcast}

An encoder for the compound private broadcast channel maps a message pair $(\rvm_1,\rvm_2)$, distributed uniformly and independently over sets of size $2^{nR_1}$ and $2^{nR_2}$ respectively, to the channel input sequence $\bx^n$. The decoders in group $1$ produces a message estimate $\hat{\rvm}_{1j}=g_{1j}(\rvy_j^n;\bh_j)$ while the decoders in group $2$ produce a message estimate $\hat{\rvm}_{2k} = g_{2k}(\rvz_k^n;\bg_k)$. A rate pair $(R_1,R_2)$ is achievable if there exist a sequence of encoder and decoders of such that $\pe = \Pr(\{\rvm_1\neq \hat{\rvm}_{1j}\}_{j=1}^{J_1} \cup \{\rvm_2 \neq \hat{\rvm}_{2k}\}_{k=1}^{J_2})\rightarrow 0$ as $n\rightarrow \infty$ and $\frac{1}{n}I(\rvm_1;\rvz_k^n) \rightarrow 0$ and $\frac{1}{n}I(\rvm_2;\rvy_j^n)\rightarrow 0$ for each $k=1,\ldots, J_2$ and $j=1,\ldots, J_1$. The set of all achievable rate pairs under these constraints constitutes the capacity region. 

Of particular interest is the sum secrecy degrees of freedom (d.o.f.). We say that $d^s$ is  achievable if there exists a sequence of achievable rate pairs $(R_1(P),R_2(P))$, indexed by power $P$, such that\begin{equation}d^s = \lim_{P\rightarrow \infty}\frac{R_1(P)+R_2(P)}{\frac{1}{2}\log_2 P}.\end{equation} The maximum attainable value of $d^s$ is the \emph{sum-secrecy d.o.f.} of the compound private broadcast channel.

We develop the following lower and upper bounds on the sum-secrecy degrees of freedom. 
\begin{thm}
\label{thm:CPB:LB}
Under assumption~\ref{assump:LB}, almost surely, the following sum-secrecy degrees of freedom are achievable for the compound private broadcast channel
\begin{equation}\begin{aligned}
d^s_L = \begin{cases}2, & \max(J_1, J_2) < M \\
2\frac{M-1}{M},  & \max(J_1, J_2) \ge M > \min(J_1, J_2) \\
2\frac{M-1}{M+1}, &\min(J_1,J_2) \ge M \end{cases}
\end{aligned}\label{eq:PBC:LB}\end{equation}
\end{thm}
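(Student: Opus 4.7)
My proof plan extends the real interference alignment construction of Theorem~\ref{thm:CWC:LB} to the two-message setting and treats the three regimes of~\eqref{eq:PBC:LB} separately.

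\textbf{Case 1} ($\max(J_1,J_2)<M$). Both groups have a nontrivial orthogonal complement in $\reals^M$, so I would use pure zero-forcing beamforming: transmit $m_1$ along a direction $\bv_1$ in the null space of $\{\bg_k\}_{k=1}^{J_2}$ and $m_2$ along $\bv_2$ in the null space of $\{\bh_j\}_{j=1}^{J_1}$. Under Assumption~\ref{assump:LB}, almost surely $\bh_j^\T\bv_1\neq 0$ and $\bg_k^\T\bv_2\neq 0$ for every $j,k$, so each message achieves $1$ d.o.f., and the other group receives an exact zero from the interfering message---yielding perfect (zero-leakage) secrecy and a sum of $2$ d.o.f.

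\textbf{Case 2} (WLOG $J_1<M\le J_2$). Here I would combine null-space signalling with the CWC real-IA scheme. Place $m_2$ in the $(M-J_1)$-dimensional null space of $\{\bh_j\}$, so $m_2$ is automatically invisible to group~$1$ and hence secret there. For $m_1$, which must be secret from the $J_2\ge M$ eavesdroppers in group~$2$, apply the precoder of Theorem~\ref{thm:CWC:LB} to align the $m_1$-streams into a single dimension at each $\bg_k$; because $m_2$ is independent of $m_1$, its presence at the group-$2$ receivers only inflates the effective noise in the $m_1$-leakage analysis and does not break secrecy. A wiretap wrapper then converts the dimensionality gap into $(M-1)/M$ secure d.o.f.\ for each message, for a sum of $2(M-1)/M$.

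\textbf{Case 3} ($\min(J_1,J_2)\ge M$). I would apply real interference alignment \emph{symmetrically} to the two messages: introduce $N$ rationally-independent PAM streams for each of $m_1$ and $m_2$ and design the precoders so that at every group-$i$ receiver the $N$ streams of $m_i$ remain separable while those of $m_{3-i}$ collapse into a single dimension---the asymptotic alignment property of~\cite{realInterf1,realInterf2,compoundMIMO1,compoundMIMO2}. Each intended receiver then consumes $N+1$ signal dimensions ($N$ of which are useful), while each eavesdropper sees the intended message collapsed into one dimension, which a wiretap wrapper around each $m_i$ converts into vanishing $\tfrac{1}{n}I(\rvm_i;\rvz_k^n)$. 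Sending $N\to\infty$, each message asymptotically achieves $(M-1)/(M+1)$ d.o.f., summing to $2(M-1)/(M+1)$.

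The hard part will be Case~3: a \emph{single} choice of rationally-independent directions must simultaneously deliver (i) separability of the $m_i$-streams at every group-$i$ receiver, (ii) alignment of the $m_{3-i}$-streams at every group-$i$ receiver, and (iii) alignment of the $m_i$-streams at every group-$(3-i)$ eavesdropper. The alignment constructions of~\cite{realInterf1,realInterf2,compoundMIMO1,compoundMIMO2} supply the required machinery, but adapting them to the symmetric two-message setting and carrying out the dimension count that yields the sharp fraction $(M-1)/(M+1)$---rather than the naive $(M-1)/M$ one would obtain by handling each message independently---is the technical crux. Converting the residual dimensional gap at each eavesdropper into vanishing information leakage then proceeds via the same wiretap-code wrapper used in the proof of Theorem~\ref{thm:CWC:LB}.
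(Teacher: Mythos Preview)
Your overall strategy---zero-forcing when null spaces exist, real interference alignment otherwise, with a wiretap wrapper evaluated through a single-letter achievability expression---is exactly what the paper does. Case~1 is identical to the paper's argument. However, two points in Cases~2 and~3 need correction before the outline goes through.

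\textbf{Case 2.} Placing $m_2$ as a scalar (or a fixed finite-dimensional vector) in the null space of $\{\bh_j\}$ is not enough. With your convention $J_1<M\le J_2$, the large group~2 must decode $m_2$ in the presence of the aligned $m_1$-interference, which occupies $L'\approx L$ rational dimensions at each $\bg_k$. If $m_2$ carries only $O(1)$ PAM streams, the common constellation parameter $Q$ is dictated by group~1's $ML$ dimensions, and the degrees of freedom of $m_2$ collapse to $O(1)/(ML)\to 0$. The paper's fix (in its mirrored labeling) is to make the zero-forced message itself a multi-stream PAM signal with roughly $(M-1)L$ rationally independent directions sitting in the null space; then the large group's total receive constellation has $(M-1)L+L'\approx ML$ dimensions, the two decoding constraints balance, and each message attains $(M-1)/M$ degrees of freedom. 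Your phrase ``only inflates the effective noise in the $m_1$-leakage analysis'' handles secrecy but not decodability of $m_2$.

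\textbf{Case 3.} The alignment here is a factor-$M$ compression, not a collapse to one dimension. With the precoders $V_1,V_2$ built from monomials in the \emph{opposite} group's channel gains, each message carries $ML$ PAM streams; at a group-$i$ receiver the own $ML$ streams remain separable while the other group's $ML$ streams align into $L'=(N+1)^{MJ}\approx L$ dimensions, giving a total of $ML+L'$ at the intended receiver and $L'$ leakage at the eavesdropper. The resulting degrees of freedom per message are $(ML-L')/(ML+L')\to (M-1)/(M+1)$, which is the number you state---but your described count ``$N+1$ dimensions, $N$ useful, eavesdropper sees one'' would instead give $(N-1)/(N+1)\to 1$. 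You correctly flag the dimension count as the crux; the specific count above is what you need to carry out. Note also that your conditions (ii) and (iii) are the same statement applied to $i$ and $3-i$: the alignment of $m_{3-i}$ at group~$i$ simultaneously serves decodability (interference suppression) and secrecy (leakage suppression for $m_{3-i}$).
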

The coding scheme, presented in section~\ref{sec:PBC:LB}, also combines wiretap codes with the interference alignment scheme. 
The following theorem provides an upper bound on the sum secrecy degrees of freedom 
\begin{thm}
\label{thm:CPB:UB}
Under assumption~\ref{assump:UB}, an upper bound on the sum secrecy degrees of freedom of the compound private broadcast channel is
\begin{equation}
\label{eq:PBC:UB}
d_U^s = \begin{cases}2, &\max(J_1, J_2) < M \\ 
\frac{2M-1}{M}, &\min(J_1, J_2) < M \le \max(J_1,J_2) \\\ 2\frac{M-1}{M}, &\min(J_1, J_2)\ge M.\end{cases}
\end{equation}
\end{thm}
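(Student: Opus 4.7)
The plan is to extend the multi-letter upper-bound machinery from the proof of Theorem \ref{thm:CWC:UB} to handle the two confidential messages of the private broadcast setting. The proof splits into three cases.

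Case $\max(J_1, J_2) < M$ is trivial: no receiver group has the compound structure, so each message obeys only the scalar MISO bound $R_i \le \tfrac{1}{2}\log P + O(1)$, giving $d_U^s \le 2$.

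For the other cases I apply, for each $i \in \{1, 2\}$ in turn, two multi-letter constraints derived exactly as in the proof of Theorem \ref{thm:CWC:UB}, treating $\rvm_i$ as a confidential common message for group $i$ and secret from group $3-i$. When $J_i \ge M$, picking $M$ linearly independent receivers (possible by assumption \ref{assump:UB}) and repeating the derivation of \eqref{eq:cond4} gives the common-message constraint
\[
n(M-1) R_1 \;\le\; \sum_{\ell=1}^M \dent(\rvby_\ell^n) - \dent(\rvby_1^n, \ldots, \rvby_M^n) + n M \eps_n
\]
(and analogously for $\rvm_2$ with $\rvbz$ in place of $\rvby$). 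When $J_{3-i} \ge M$, the averaged-secrecy argument of \eqref{eq:cond1}--\eqref{eq:cond3}, combined with the bases inequality \eqref{eq:cond2} that under assumption \ref{assump:UB} controls $|\dent(\rvby_1^n, \ldots, \rvby_M^n) - \dent(\rvbz_1^n, \ldots, \rvbz_M^n)| \le O(n)$, produces the secrecy constraint
\[
n R_1 \;\le\; \left(1 - \tfrac{1}{M}\right) \dent(\rvby_1^n, \ldots, \rvby_M^n) + O(n)
\]
(and analogously for $\rvm_2$).

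For Case 2 ($\min(J_1, J_2) < M \le \max(J_1, J_2)$, WLOG $J_1 < M \le J_2$), only the scalar bound $d_1 \le 1$ survives for $\rvm_1$; for $\rvm_2$ both multi-letter constraints apply and optimizing over $\dent(\rvbz_1^n, \ldots, \rvbz_M^n)$ gives $d_2 \le (M-1)/M$, so $d_U^s \le (2M-1)/M$. For Case 3 ($\min(J_1, J_2) \ge M$) all four multi-letter constraints are active; setting $A_n \triangleq \dent(\rvby_1^n, \ldots, \rvby_M^n)/n$ and $B_n \triangleq \dent(\rvbz_1^n, \ldots, \rvbz_M^n)/n$, the coupling $|A_n - B_n| \le O(1)$ from \eqref{eq:cond2}, together with the power bounds $A_n, B_n, \sum_i \dent(\rvby_i^n)/n, \sum_k \dent(\rvbz_k^n)/n \le \tfrac{M}{2}\log P + O(1)$, turns the four inequalities into a linear program whose optimum over $(R_1, R_2, A_n, B_n)$ yields $R_1 + R_2 \le \tfrac{M-1}{M}\log P + O(1)$, i.e.\ $d_U^s \le 2(M-1)/M$.

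The main obstacle is the tightness in Case 3: simply applying Theorem \ref{thm:CWC:UB} to each message in isolation delivers only the weaker estimate $2[1 - 1/(M^2 - M + 1)]$. Obtaining the sharper $2(M-1)/M$ requires treating all four constraints together, exploiting the bases coupling $A_n \approx B_n$ across \emph{both} messages simultaneously: increasing $A_n$ to relax $\rvm_1$'s secrecy bound automatically inflates $B_n$, which sharpens $\rvm_2$'s common-message bound, and symmetrically---a simultaneous tradeoff absent from the single-message problem and central to the proof.
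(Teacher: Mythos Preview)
Your Case 1 is fine and matches the paper. The other two cases each contain a concrete gap.

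\textbf{Case 2.} Your claim that ``for $\rvm_2$ both multi-letter constraints apply'' is false: the averaged-secrecy inequality you wrote for $\rvm_2$ needs $M$ eavesdroppers in group~$1$, but here $J_1<M$. More fundamentally, bounding the two messages separately cannot beat $d_1+d_2\le 2$ in this regime, since each message \emph{in isolation} admits a full degree of freedom (zero-force the small group). The paper therefore does not bound $R_1,R_2$ separately; it first establishes a \emph{joint} genie bound
\[
n(R_1+R_2)\;\le\; I(\rvm_1,\rvm_2;\rvby_1,\rvbz_1,\ldots,\rvbz_M)+n\eps_n\;\le\;\dent(\rvbz_1,\ldots,\rvbz_M)+O(n),
\]
and then combines this with the common-message constraint~\eqref{eq:JEUB} for $\rvm_2$ and a single-receiver bound $nR_1\le\dent(\rvby_1)+O(n)$ to obtain $nM(R_1+R_2)\le (2M-1)\tfrac{n}{2}\log P+O(n)$.

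\textbf{Case 3.} Your LP, with the constraints as you wrote them, yields only $d_U^s\le \tfrac{2M(M-1)}{M^2-M+1}$, strictly weaker than $\tfrac{2(M-1)}{M}$ (for $M=2$ it gives $4/3$ instead of $1$). The loss happens precisely when you pass from Lemma~\ref{lem:UB1} to the collapsed form $R_1\le(1-\tfrac1M)A_n$ via the step $\sum_k\dent(\rvbz_k)\ge B_n$: this throws away the individual entropies $\dent(\rvbz_k)$, which are exactly the quantities that couple to $\rvm_2$'s common-message constraint. The paper keeps Lemma~\ref{lem:UB1} in its unaverage form
\[
nR_1\le \dent(\rvby_1,\ldots,\rvby_M)-\dent(\rvbz_k)+O(n),
\]
applies the bases inequality to replace $\dent(\rvby_1,\ldots,\rvby_M)$ by $\dent(\rvbz_1,\ldots,\rvbz_M)$, and \emph{then} substitutes the common-message bound for $\rvm_2$; the single term $\dent(\rvbz_k)$ cancels one summand, leaving
\[
nR_1+n(M-1)R_2\;\le\;\sum_{i\neq k}\dent(\rvbz_i)+O(n)\;\le\;(M-1)\tfrac{n}{2}\log P+O(n).
\]
Adding the symmetric inequality gives $2(M-1)/M$. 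Your cross-coupling intuition in the last paragraph is exactly right; the execution must pair $\rvm_1$'s secrecy constraint with $\rvm_2$'s common-message constraint \emph{before} any averaging or simplification.
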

A proof provided in section~\ref{sec:PBC:UB} extends the techniques in the proof of Theorem~\ref{thm:CWC:UB}.

\begin{remark}
Throughout this paper we assume the channels to be real valued. However we do not expect the results to be different for complex valued coefficients. In particular, the upper bounds in Theorem~\ref{thm:CWC:UB} and~\ref{thm:CPB:UB} immediately extend to the complex channel coefficients as they are developed using standard techniques. The lower bounds in Theorem~\ref{thm:CWC:LB} and~\ref{thm:CPB:LB} are based on the real interference alignment scheme. Using its recent extension to complex channel coefficients sketched in~\cite{compoundMIMO1}, we expect similar results to hold for  complex valued channel coefficients  as well. 
\end{remark}

\section{Real Interference Alignment}
\label{sec:realInterf}
In this section we review the main results  of real interference alignment from~\cite{realInterf1,realInterf2}.  For simplicity, we describe this scheme for a point-to-point scalar channel. 
\begin{equation}
\rvy = \rvx + \rvz,
\end{equation}
where the input satisfies a power constraint $E[\rvx^2]\le P$ and the additive noise $\rvz \sim \cN(0,\sigma^2)$ is Gaussian. Assume that the input symbols are drawn from a PAM constellation,\begin{equation}
\cC_0 = a_0 \left\{-Q_0,-Q_0+1,\ldots, Q_0-1, Q_0\right\}.\label{eq:PAM0}
\end{equation}Two quantities associated with this constellation are the minimum distance and the rate. In particular $a_0=d_\mrm{min}(\cC_0)$  governs the error probability according to the relation
\begin{equation}
\pe \le \exp\left(-\frac{a_0^2}{8\sigma^2}\right)\label{eq:pe0},
\end{equation}
while the rate is given by
\begin{equation}\label{eq:r0}
R =\frac{1}{2}\log(1+2Q_0).
\end{equation}
Furthermore the choice of $Q_0$ and $a_0$ must satisfy the average power constraint
\begin{equation}
E[x^2] = \frac{Q_0^2 a_0^2}{12} \le P.   
\end{equation} 
For an arbitrary $\eps > 0$, select, $Q_0 = P^{\frac{1-\eps}{2}} $ and $a_0 =P^{\frac{\eps}{2}} $. Then,
\begin{equation}
\pe \le \exp\left\{-\frac{P^\eps}{\sigma^2}\right\}, \qquad R \approx \frac{1-\eps}{2}\log P
\end{equation} shows that the error probability can be made sufficiently small by selecting $P$ large enough and furthermore the rate is close to the Shannon limit.

The idea behind real interference alignment is to  have multiple PAM constellation symbols instead of a single constellation~\eqref{eq:PAM0} and thus convert the channel into a multi-input-single-output channel. In particular consider a constellation
\begin{equation}
\cC = a \left\{-Q,-Q+1,\ldots, Q-1, Q\right\}.\label{eq:PAMc}
\end{equation}
and suppose a total of $L$ points $b_1,\ldots, b_L$ are drawn independently from this constellation. The transmit vector is of the form
\begin{equation}
\rvx = \bal^T\bb = \left[\al_1,\ldots, \al_L\right]\left[\begin{array}{c}b_1\\\vdots\\b_L\end{array}\right]\label{eq:RI-Tx}
\end{equation}
where   $\al_1,\ldots, \al_L$ are rationally independent constants (see def.~\ref{def:RI}). 

\begin{figure*}
\psfrag{b1}{$b_1$}\psfrag{b2}{$b_2$}\psfrag{bM}{$b_L$}\psfrag{a1}{$\al_1$}\psfrag{a2}{$\al_2$}\psfrag{aM}{$\al_L$}
\psfrag{ctx}{$\cC$}\psfrag{crx}{$\cC_r(\bal)$}\psfrag{d}{$d_\mrm{min}$}
\includegraphics[scale=0.4]{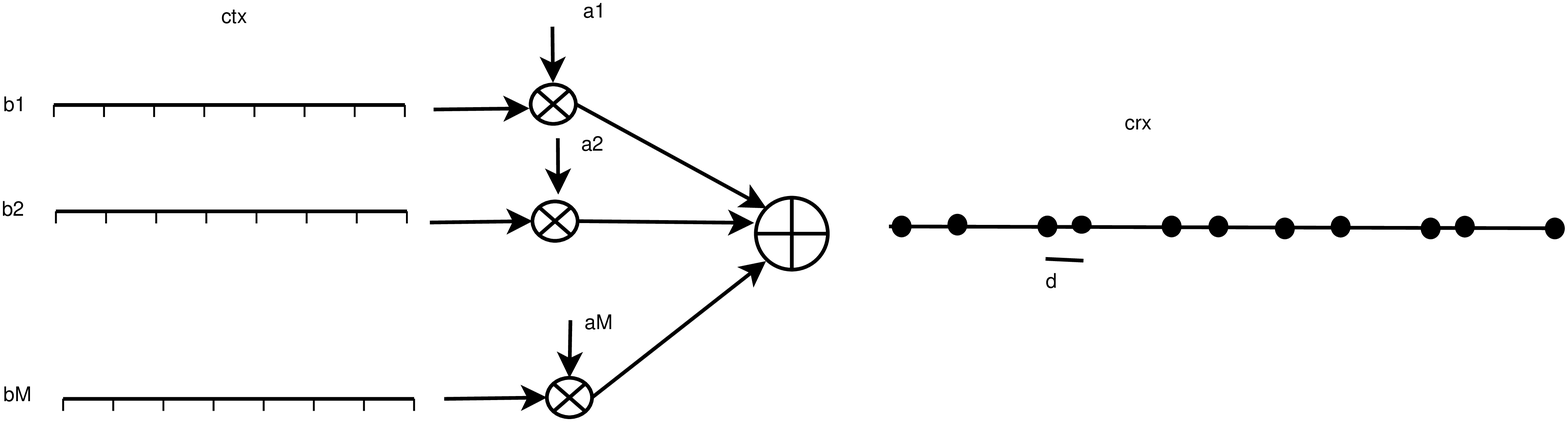}
\caption{Constellations for the real interference alignment scheme. The transmitter constellations are shown on the left hand side. Each of the $M$ points $b_1,\ldots, b_M$ are sampled independently from $\cC = a\{\-Q, \ldots, Q\}$. The receiver constellation $\cC_r(\bal)$ consists of $(2Q+1)^M$ of the form $\rvx  = \bal^T\bb$. The minimum distance in $\cC_r(\bal)$ is stated in Lemma~\ref{lem:dmin}.}\label{fig:realInter}
\end{figure*}
As shown in Fig.~\ref{fig:realInter}, while the transmit constellation is given by $\cC$ in~\eqref{eq:PAMc} and consists of $(2Q+1)$ points, the receiver constellation $\cC_r(\bal)$ consists of all $(2Q+1)^M$ points specified in~\eqref{eq:RI-Tx} i.e.,
\begin{equation}
\cC_r(\bal) =\left\{\rvx \in \mathbb{R} \bigg| \exists \bb \in \cC^M, \rvx = \bal^T\bb  \right\}\label{eq:PAMr}.
\end{equation}
In our subsequent discussion we drop the explicit dependence of $\cC_r$ on $\bal$. 

One key result in~\cite{realInterf1,realInterf2} is a minimum distance between points in $\cC_r$  that depends on the constellation parameters $Q$, $a$ and $L$ and holds for  all vectors $\bal$, except a set of measure 0.
\begin{lemma}[\cite{realInterf1,realInterf2}]
For any $\eps > 0$ there exists a constant $k_\eps$ such that 
\begin{equation}
d_\mrm{min}(\cC_r) \ge \frac{k_\eps a}{Q^{L-1+\eps}} \label{eq:dmin},
\end{equation}
for all vectors $\bal$, except a set of measure zero.
\label{lem:dmin}\end{lemma}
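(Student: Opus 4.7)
The plan is to reduce the minimum-distance estimate to a statement about simultaneous Diophantine approximation, and then invoke the Khintchine--Groshev theorem (which is the classical ingredient underlying \cite{realInterf1,realInterf2}). First I would write the minimum distance explicitly: two distinct points of $\cC_r$ have the form $\bal^T\bb$ and $\bal^T\bb'$ with $\bb,\bb' \in \cC^L$, so their separation equals $|\bal^T(\bb-\bb')|$. Since $\bb-\bb'$ is a non-zero element of $a\{-2Q,-2Q+1,\ldots,2Q\}^L$, I can write $\bb-\bb' = a\bd$ with $\bd \in \mathbb{Z}^L \setminus \{0\}$ and $\|\bd\|_\infty \le 2Q$. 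Thus
\begin{equation}
d_\mrm{min}(\cC_r) \;=\; a \cdot \min_{\substack{\bd \in \mathbb{Z}^L \setminus\{0\} \\ \|\bd\|_\infty \le 2Q}} |\bal^T \bd|,
\end{equation}
and the task reduces to lower bounding $|\bal^T \bd|$ uniformly over such integer vectors.

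Next I would quote the relevant form of the Khintchine--Groshev theorem. With the monotone approximation function $\psi(q) = q^{-(L-1+\eps)}$ we have $\sum_{q\ge 1} q^{L-1}\psi(q) = \sum_{q\ge 1} q^{-1-\eps} < \infty$, so the theorem yields: for almost every $\bal \in \mathbb{R}^L$ the inequality $|\bal^T\bd| \le \|\bd\|_\infty^{-(L-1+\eps)}$ has only finitely many solutions $\bd \in \mathbb{Z}^L \setminus\{0\}$. Call this (possibly empty) exceptional finite set $\cE(\bal)$. For each fixed $\bd \neq 0$ the linear functional $\bal \mapsto \bal^T\bd$ vanishes on a measure-zero hyperplane; taking a countable union over all $\bd \in \mathbb{Z}^L\setminus\{0\}$ still gives a null set. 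Hence, discarding a measure-zero set of $\bal$, the minimum of $|\bal^T\bd|$ over the finite set $\cE(\bal)$ is strictly positive, and combining this with the Khintchine--Groshev bound on the complement produces a positive constant $C_\eps(\bal)$ with
\begin{equation}
|\bal^T\bd| \;\ge\; C_\eps(\bal)\,\|\bd\|_\infty^{-(L-1+\eps)} \qquad \text{for all } \bd \in \mathbb{Z}^L \setminus \{0\}.
\end{equation}

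Finally I would substitute this bound back into the expression for $d_\mrm{min}(\cC_r)$. Since $\|\bd\|_\infty \le 2Q$, we have $\|\bd\|_\infty^{-(L-1+\eps)} \ge (2Q)^{-(L-1+\eps)}$, and therefore
\begin{equation}
d_\mrm{min}(\cC_r) \;\ge\; a \cdot C_\eps(\bal) \cdot (2Q)^{-(L-1+\eps)} \;=\; \frac{k_\eps\,a}{Q^{L-1+\eps}},
\end{equation}
with $k_\eps = C_\eps(\bal)\,2^{-(L-1+\eps)}$. This produces the claimed bound outside a set of $\bal$ of Lebesgue measure zero.

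The only genuinely non-trivial step is the appeal to the Khintchine--Groshev theorem; this is the engine of real interference alignment and is used as a black-box citation to \cite{realInterf1,realInterf2}. The remaining work is purely bookkeeping: the reduction $\bb-\bb' = a\bd$, the countable-union argument that removes the measure-zero set on which some $\bal^T\bd$ vanishes, and the trivial estimate $\|\bd\|_\infty \le 2Q$. I do not anticipate any additional obstacles beyond carefully packaging the $\bal$-dependent constant $C_\eps(\bal)$ into the stated $k_\eps$.
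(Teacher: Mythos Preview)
The paper does not prove this lemma; it is quoted directly from \cite{realInterf1,realInterf2} as a black-box result. Your reconstruction via the Khintchine--Groshev theorem is precisely the argument those references use, so the approach is correct and matches the source.

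One small bookkeeping slip worth fixing: with $\psi(q)=q^{-(L-1+\eps)}$ the sum $\sum_{q\ge 1}q^{L-1}\psi(q)$ equals $\sum_{q\ge 1}q^{-\eps}$, which diverges for $\eps\le 1$. The standard reduction first normalizes one coordinate of $\bal$ (almost every $\bal$ has $\al_L\neq 0$), writes $|\bal^T\bd|=|\al_L|\,\bigl|d_L+\sum_{i<L}(\al_i/\al_L)d_i\bigr|$, and applies Khintchine--Groshev to the resulting $(L-1)$-variable inhomogeneous linear form; the convergence criterion is then $\sum_{q\ge 1}q^{L-2}\psi(q)=\sum_{q\ge 1}q^{-1-\eps}<\infty$, which is the sum you actually evaluated. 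The case $d_1=\cdots=d_{L-1}=0$, $d_L\neq 0$ is handled separately by $|\bal^T\bd|\ge|\al_L|$. With this correction the rest of your argument goes through unchanged, including the packaging of the $\bal$-dependent constant into $k_\eps$ (the paper also allows $k_\eps$ to depend on $\bal$, as is clear from how the constant $\eta$ is treated in Proposition~\ref{prop:paramSelec}).
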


The second key observation in~\cite{realInterf1,realInterf2} is that there is a one-to-one mapping between $\bx$ and $\bb$ in~\eqref{eq:RI-Tx} if $\bal$ consists of rationally independent coefficients. 
\begin{defn}
We say that $\al_1,\ldots, \al_L$ are rationally independent real numbers if the equation $\sum_{i=1}^L \al_i c_i = 0$ has no  solution in $(c_1,\ldots, c_L)$ involving only rational numbers. 
\label{def:RI}
\end{defn}
Provided the vector $\bal$ consists of only rationally indpendent numbers, given an element $\bx \in \cC_r$, the decoder can uniquely identify the vector of message symbols $\bb$. Hence the error probability is given by
\begin{equation}
\pe  \le \exp\left\{-\frac{d^2_\mrm{min}(\cC_r)}{8\sigma^2}\right\} = \exp\left\{-\frac{k_\eps^2 a^2}{ 8\sigma^2Q^{2(L-1+\eps)}}\right\}\label{eq:pe:RI}
\end{equation}
where we have used the expression for $d_\mrm{min}$ stated in Lemma~\ref{lem:dmin}.
Finally with an appropriate choice of $Q$ and $a$, one can approach the Shannon limit while guaranteeing an arbitrarily small error probability in the high SNR regime.
\begin{prop}{\cite{realInterf1,realInterf2}}
Suppose that we select
\begin{equation}
Q =  P^{\frac{1-\eps}{2(L+\eps)}},\qquad a = \frac{P^{1/2}}{||\bal||Q},\label{eq:param:RI}
\end{equation}
then we have that $E[\rvx^2] \le P$ and furthermore for all values of $\bal$, except a set of measure zero, we have that
\begin{equation}
\pe \le \exp\left\{-\eta P^\eps\right\},
\end{equation}where $\eta>0 $ is a constant that depends on the channel coefficients $\bal$, but does not depend on $P$.
\label{prop:paramSelec}\end{prop}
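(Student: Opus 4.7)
The plan is to verify the two claims separately and essentially mechanically: the power constraint comes from a direct moment calculation, and the error probability bound follows by combining the standard minimum-distance AWGN bound with Lemma~\ref{lem:dmin}, after which the specific choice of $Q$ and $a$ is engineered so that the exponent scales as $P^\eps$.

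First I would verify $E[\rvx^2]\le P$. Since each $b_i$ is drawn uniformly and independently from $\cC = a\{-Q,\ldots,Q\}$, one has $E[b_i^2] = \frac{a^2 Q(Q+1)}{3} \le a^2 Q^2$, and by independence
\begin{equation}
E[\rvx^2] = \sum_{i=1}^L \al_i^2 E[b_i^2] \le \|\bal\|^2 a^2 Q^2.
\end{equation}
Substituting $a = P^{1/2}/(\|\bal\| Q)$ from~\eqref{eq:param:RI} gives $E[\rvx^2]\le P$ (in fact it yields $\le P/3$, well within the constraint), independently of the choice of $Q$.

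Next I would bound $\pe$. Because the $\al_i$ are, almost surely, rationally independent (they are drawn from a continuous distribution, and rational dependence is a measure-zero condition), the encoding map $\bb \mapsto \rvx = \bal^T\bb$ from $\cC^L$ to $\cC_r$ is injective, so correctly decoding $\rvx \in \cC_r$ from $\rvy$ is equivalent to correctly recovering $\bb$. A maximum-likelihood (nearest-neighbour) decoder applied to $\cC_r$ therefore has symbol error probability bounded by the standard union bound in terms of the minimum distance,
\begin{equation}
\pe \le \exp\!\left(-\frac{d_\mrm{min}^2(\cC_r)}{8\sigma^2}\right),
\end{equation}
which is the same Chernoff-type bound used for~\eqref{eq:pe0}.

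The main ingredient is now Lemma~\ref{lem:dmin}, which for almost every $\bal$ gives $d_\mrm{min}(\cC_r) \ge k_\eps a / Q^{L-1+\eps}$. Squaring and substituting $a^2 = P/(\|\bal\|^2 Q^2)$ yields
\begin{equation}
d_\mrm{min}^2(\cC_r) \ge \frac{k_\eps^2 \, P}{\|\bal\|^2 \, Q^{2(L+\eps)}}.
\end{equation}
The particular exponent in the choice $Q = P^{(1-\eps)/(2(L+\eps))}$ is engineered precisely so that $Q^{2(L+\eps)} = P^{1-\eps}$, and therefore $d_\mrm{min}^2(\cC_r) \ge k_\eps^2 P^\eps / \|\bal\|^2$. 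Setting $\eta = k_\eps^2/(8\sigma^2 \|\bal\|^2)$ gives the claimed bound $\pe \le \exp(-\eta P^\eps)$, with $\eta$ depending on $\bal$ and $\eps$ but not on $P$.

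There is no serious obstacle here once Lemma~\ref{lem:dmin} is taken as given; the only subtlety is ensuring the ``measure zero'' set in the conclusion of Lemma~\ref{lem:dmin} coincides with the one in the proposition (it does, since both refer to rational-dependence conditions on $\bal$), and noting that rational independence is needed for the injectivity step so that the distance bound actually translates into an error-probability bound on $\bb$.
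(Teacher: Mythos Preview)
Your proof is correct and follows essentially the same steps as the paper: a direct second-moment bound for the power constraint, then substitution of $a$ and $Q$ into the minimum-distance error bound coming from Lemma~\ref{lem:dmin}, arriving at the same constant $\eta = k_\eps^2/(8\sigma^2\|\bal\|^2)$. One small imprecision: the measure-zero exceptional set in Lemma~\ref{lem:dmin} comes from Khintchine--Groshev--type Diophantine approximation and is not literally the set of rationally dependent $\bal$; but since both the injectivity failure set and the Lemma~\ref{lem:dmin} exceptional set have measure zero, their union does too, and your conclusion stands.
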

\begin{proof}
To show that the power constraint is satisfied, note that since the elements of $\bb$ are selected independently from $\cC$, we have that
\begin{align}
E[\rvx^2] &= E\left[\left(\sum_{i=1}^L \al_i b_i\right)^2\right]\\
&= \sum_{i=1}^L \al_i^2 E[b_i^2]\\
&\le a^2 Q^2 ||\bal||^2.
\end{align}Thus the choice of $a$ in~\eqref{eq:param:RI} guarantees that $E[\rvx^2] \le P$.

Substituting the value of $a$ in~\eqref{eq:param:RI} into~\eqref{eq:pe:RI} we have that for all $\bal$, except a set of measure zero,
\begin{align}
\pe &\le  \exp\left\{-\frac{k_\eps^2 P}{ 8\sigma^2Q^{2(L-1+\eps)} Q^2 ||\bal||^2}\right\} \notag\\
&= \exp\left\{-\frac{k_\eps^2 P}{ 8\sigma^2Q^{2(L+\eps)} ||\bal||^2}\right\}\notag\\
&=\exp\left\{-\frac{k_\eps^2 P^\eps}{ 8\sigma^2 ||\bal||^2}\right\}\label{eq:subQ}\\
&=\exp\{-\eta P^\eps\}
\end{align}
where we have used the relation~\eqref{eq:param:RI} in~\eqref{eq:subQ} and 
$$\eta = \frac{k_\eps^2}{8\sigma^2||\bal||^2}>0 $$ is a constant that does not depend on $P$.  
\end{proof}

Note that the overall rate that one can achieve with this multiplexed code,
\begin{equation}
R = L \log(2Q+1) \ge \frac{L(1-\eps)}{L+\eps}\frac{1}{2}\log P,
\end{equation}
 can be made arbitrarily close to $\frac{1}{2}\log P$ by selecting $\eps$ to be sufficiently small.  

While the approach of multiple constellation points does not provide any gains over using a single PAM constellation~\eqref{eq:PAM0} in the point-to-point case, the flexibility in choosing any vector $\bal$ consisting of rationally independent elements, has been used
in~\cite{realInterf1,realInterf2} to create interference alignment for the $K-$ user interference channel and in~\cite{compoundMIMO1, compoundMIMO2} for the compound broadcast channel. In this work we show that this approach can also provide significant gains for the compound wiretap channel.

\section{Compound Wiretap Channel: Lower Bounds}
\label{sec:CWC:LB}
In this section we develop the lower bounds on the secure degrees of freedom for the compound wiretap channel.

In subsections~\ref{subsec:CWC:LB:T1a} and~\ref{subsec:CWC:LB:T1b} we provide the proof of Theorem~\ref{thm:CWC:LB} for the case when $\min(J_1,J_2)<M$
and $\min(J_1,J_2)\ge M$ respectively. Subsection~\ref{subsec:CWC:LB:PML} provides a proof of Prop.~\ref{prop:CWC:ML}.

For the proof of Theorem~\ref{thm:CWC:LB}  our approach is to evaluate the following lower bound on the secrecy capacity for a specific  input distribution.
\begin{prop}{~\cite{liangKramer:08}}
An achievable secrecy rate for the compound wiretap channel model~\eqref{eq:model1} is
\begin{align}
R &= \max_{p_{\rvu,\rvx}}\left\{\min_j I(\rvu;\rvy_j) - \max_k I(\rvu;\rvz_k) \label{eq:Rdiff}\right\}
\end{align}
for a choice of random variables $(\rvu,\rvx)$ such that $\rvu \rightarrow \rvx \rightarrow (\rvy_i, \rvz_k)$ is satisfied.
\end{prop}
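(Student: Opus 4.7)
The plan is to adapt Wyner's wiretap coding to the compound setting via random coding on the auxiliary variable $\rvu$, followed by a standard double-binning argument, and then verify that a single realization of the random code works uniformly over the finite index sets $j=1,\ldots,J_1$ and $k=1,\ldots,J_2$ by a union bound.

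First I would fix a joint distribution $p_{\rvu,\rvx}$ satisfying the Markov chain $\rvu \to \rvx \to (\rvy_j,\rvz_k)$, and generate $2^{n(R+R_0)}$ i.i.d.\ codewords $\bu^n(m,\ell)$ according to $\prod_{t=1}^n p_\rvu(u_t)$, indexed by a message bin $m \in \{1,\ldots,2^{nR}\}$ and a randomization index $\ell \in \{1,\ldots,2^{nR_0}\}$. To encode message $m$, draw $\ell$ uniformly at random and transmit the channel input $\bx^n$ generated symbol-by-symbol from $\bu^n(m,\ell)$ through the memoryless prefix channel $p(\rvx|\rvu)$. The two rate parameters are the standard ones: $R$ corresponds to the information rate and $R_0$ is the rate of randomization introduced to confuse the eavesdroppers.

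Next, for the reliability analysis, each legitimate receiver $j$ performs joint typicality decoding of $(\bu^n,\by_j^n)$. Standard random-coding arguments give that the average probability that receiver $j$ fails to recover $(m,\ell)$ vanishes provided $R + R_0 < I(\rvu;\rvy_j) - \epsilon$. Because $J_1$ is finite, a union bound over $j=1,\ldots,J_1$ ensures simultaneous reliable decoding at every legitimate receiver as soon as
\begin{equation}
R + R_0 < \min_{j} I(\rvu;\rvy_j) - \epsilon.
\end{equation}

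For the secrecy analysis, I would show that for each fixed eavesdropper $k$, the choice $R_0 > I(\rvu;\rvz_k) + \epsilon$ implies $\frac{1}{n}I(\rvm;\rvz_k^n) \to 0$. The standard argument writes $I(\rvm;\rvz_k^n) = I(\rvm,\rvL;\rvz_k^n) - I(\rvL;\rvz_k^n \mid \rvm)$, upper-bounds the first term by $I(\rvu^n;\rvz_k^n) \le n(I(\rvu;\rvz_k) + \epsilon)$, and lower-bounds the second term by $n(R_0 - \epsilon)$ via the fact that an eavesdropper who is also told $\rvm$ can decode the randomization index $\rvL$ reliably whenever $R_0 < I(\rvu;\rvz_k)$; the remaining slack in $R_0$ precisely gives the desired vanishing mutual information. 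Imposing this for every $k$ requires $R_0 > \max_k I(\rvu;\rvz_k) + \epsilon$, and a union bound over the finite set $k=1,\ldots,J_2$ again ensures simultaneous secrecy.

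Combining the two constraints on $R_0$ allows the choice $R_0 = \max_k I(\rvu;\rvz_k) + 2\epsilon$, which yields any rate $R < \min_j I(\rvu;\rvy_j) - \max_k I(\rvu;\rvz_k) - 3\epsilon$. Finally, since both the average reliability error and the average secrecy leakage vanish for the random ensemble, a standard derandomization argument (selecting a deterministic codebook below twice the ensemble averages) produces a single code that works for all $J_1+J_2$ terminals simultaneously. I expect the only delicate step to be the secrecy analysis — specifically, arguing that $\rvL$ can be decoded given $(\rvm,\rvz_k^n)$ with vanishing error for every $k$ from the \emph{same} codebook; this is where the finiteness of $J_2$ is essential, as it allows the union bound over eavesdroppers without changing the rate.
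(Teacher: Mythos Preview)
The paper does not prove this proposition; it is quoted from \cite{liangKramer:08} and used as a black box. Your overall plan---random binning on $\rvu$, typicality decoding, and union bounds over the finite index sets---is the standard argument from that reference, so in spirit you are on the right track.

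However, your secrecy step contains a genuine contradiction. You lower-bound $I(\rvL;\rvz_k^n\mid\rvm)\ge n(R_0-\epsilon)$ by arguing that the eavesdropper, given $\rvm$, can reliably decode $\rvL$; but that decoding succeeds only when $R_0 < I(\rvu;\rvz_k)$, whereas you then set $R_0 > \max_k I(\rvu;\rvz_k)+\epsilon$. Under your choice of $R_0$ no eavesdropper can decode $\rvL$ from $(\rvm,\rvz_k^n)$, and in fact $I(\rvL;\rvz_k^n\mid\rvm)\le I(\rvu^n;\rvz_k^n)\le n(I(\rvu;\rvz_k)+\epsilon) < nR_0$, so the claimed lower bound $n(R_0-\epsilon)$ is simply false. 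The standard repair takes one of two routes: either (i) set $R_0$ just \emph{below} $\max_k I(\rvu;\rvz_k)$, run the Fano argument only for the maximizing index $k^\star$, and handle every weaker eavesdropper by enhancing it with auxiliary side information until its mutual information matches that of $k^\star$ (this is the device used in \cite{liangKramer:08} and, for the private-broadcast version, in Appendix~\ref{app:PB_achiev} of the present paper); or (ii) replace the Fano step altogether by a channel-resolvability (soft-covering) argument, which shows directly that $R_0 > I(\rvu;\rvz_k)$ forces the per-bin output distribution at eavesdropper $k$ close to the unconditional one, giving $\tfrac{1}{n}I(\rvm;\rvz_k^n)\to 0$ without any decoding. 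You correctly anticipated that the secrecy analysis is the delicate point; the issue is that the two rate conditions you invoke point in opposite directions.
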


\subsection{Proof of Theorem~\ref{thm:CWC:LB}: $\min(J_1,J_2) < M$}
\label{subsec:CWC:LB:T1a}
When either $J_1 < M$ or $J_2 < M$ we achieve full 1 d.o.f. through a combination of zero-forcing and noise transmission techniques. 

When $J_2 < M$ i.e., when the number of eavesdropper states is less than $M$, we zero-force all the eavesdroppers and achieve a rate that scales as $\frac{1}{2}\log P$. In particular note that the matrix $G = [\bg_1, \bg_2,\ldots, \bg_{J_2}] \in \mathbb{R}^{M \times J_2}$ has a rank equal to $J_2 < M$. Construct a matrix $B \in \mathbb{R}^{M-J_2 \times M}$, with orthogonal rows, such that $B\cdot G = 0$.
Furthermore, almost surely, each $\bh_j$ is linearly independent of the columns of $G$ and hence $B \bh_j \neq 0$ for $j=1,\ldots, J_2$.
The transmitted vector is
\begin{equation}
\bx = B^T \rvbm, \label{eq:tzx}
\end{equation}
where the information vector $\rvbm \sim \cN\left(0,\frac{P}{M } {\bf I}\right)$ is a vector of i.i.d\ Gaussian symbols. Since any information transmitted along rows of $B$ will not be seen by any eavesdropper, setting $\rvu = \rvbm$ and $\rvbx$ in~\eqref{eq:tzx}, we have that 
\begin{align}
R &= \min_{j} I(\rvbm; y_j)\\
&= \min_{j}\frac{1}{2}\log\left(1 + {\frac{P}{M}||B\bh_j||^2}\right)
\end{align}
which scales as $\frac{1}{2}\log P$ as $B \bh_j \neq 0$ for each $j=1,2,\ldots, J_1$.

Similarly when $J_1 < M$, we achieve 1 d.o.f. by transmitting a noise signal in the common null space of the channel of legitimate receivers. While each intended receiver only observes a clean signal, each eavesdropper receives a superposition of signal and noise and its rate does not increase unboundedly with $P$.

In particular, the matrix $H=[\bh_1,\ldots, \bh_{J_1}]$ has a rank $J_1 < M$. Construct  $A \in {\mathbb{R}}^{M-J_1 \times M}$ with orthogonal rows that satisfy $A\cdot H=0$. Furthermore, almost surely,  $\bg_j$ is linearly independent of the columns of  $H$, and hence $A \bg_j \neq 0$. 
Let the transmitted vector be,
\begin{equation}
\bx = \bt \rvs + A^T {\mathbf n}, \label{eq:LowUsers}
\end{equation}
where $\bt$ is any unit norm vector such that $\bh_j^T\bt \neq 0$ for $j=1,2,\ldots, J_1$, and $\rvs \sim \cN(0,P_0)$ is the information bearing symbol, and ${\bf n} \sim \cN(0,P_0 {\bf I}_{M-J_1})$ is a vector of noise symbols transmitted in the common null-space of user matrices and where $P_0 = \frac{P}{M}$ is selected to meet the transmit power constraint. Accordingly the received signals can be expressed as, \begin{align}
\rvy_j &= \bh_j^T\bx + \rvv_j \\
&= \bh_j^T \bt \rvs + \rvv_j,
\end{align}
and
\begin{align}
\rvz_k &= \bg_k^T \bx + \rvw_k,\\
&= \bg_k^T\bt \rvs + \bg_k^T A^T {\mathbf n} + \rvw_k.
\end{align}
An achievable secrecy rate with $\rvu = \rvs$ and $\rvbx$ in~\eqref{eq:LowUsers} is
\begin{align}
R&= \min_{j} I(\rvs; \rvy_j)-\max_{k} I(\rvs; \rvz_k) \\
&= \min_{j}\frac{1}{2}\log(1 + P_0 |\bh_j^T \bt|^2) \notag\\&\qquad - \max_{k}\frac{1}{2}\log\left(1 + \frac{P_0 |\bg_k^T\bt|^2}{1+ P_0 ||A\bg_k||^2}\right),\label{eq:TsRate}
\end{align}
which scales like $\frac{1}{2}\log P$ since $||A\bg_k|| > 0$ for each $k=1,2,\ldots, J_2$ and $\bh_j^T \bt> 0$ for $j=1,\ldots, J_1$.

\subsection{Proof of Theorem~\ref{thm:CWC:LB}: $\min(J_1,J_2) \ge M$}
\label{subsec:CWC:LB:T1b}

To establish an achievable rate, we again evaluate~\eqref{eq:Rdiff}  for a certain choice of input distributed given by the real interference alignment scheme in~\cite{realInterf1,realInterf2}.

To describe this choice, we begin by defining a set\begin{equation}\label{eq:RI-Ct}
\cT = \left\{\prod_{k=1}^{J_2}\prod_{i=1}^M g_{ki}^{\alpha_{ki}} \big|~  \al_{ki} \in \{0,\ldots, N-1\}\right\},
\end{equation}where $g_{ki}$ denotes the channel gain between the $i-$th antenna and the $k-$th eavesdropper. Each selection of the tuple $\{\al_{ki}\} \in \{0,\ldots, N-1\}^{J_2M}$ results in a different element of $\cT$ and there are a total of $L = N^{MJ_2}$ elements in this set. In addition let,
\begin{equation}
V = \left[\begin{array}{cccc} \bv^T & 0 &\cdots & 0 \\ 0 & \bv^T & \cdots & 0 \\ \vdots &   & \ddots & \vdots \\ 0 & 0 &\cdots & \bv^T \end{array}\right]\in \mathbb{R}^{M \times ML}
\label{eq:RI-V}\end{equation}
be a matrix where $\bv \in \mathbb{R}^{L}$ consists of all elements in the set $\cT$. Furthermore let $\bb\in \mathbb{R}^{ML \times 1}$ be a vector whose entries are sampled independently and uniformly from a PAM constellation
\begin{equation}
\label{eq:C-PAM}
\cC= a\left\{-Q, -Q+1,\ldots, Q-1, Q\right\}.
\end{equation}Our choice of parameters in~\eqref{eq:Rdiff} is
\begin{equation}
\rvbx = V\bb, \qquad \rvu = \bb.\label{eq:params}
\end{equation}

We next state two lemmas, helpful in evaluating~\eqref{eq:Rdiff}.
\begin{lemma}
The choice of transmit vector in~\eqref{eq:params} results in the following effective channels at the legitimate receivers and the eavesdroppers:
\begin{equation}
\begin{aligned}
\rvy_j &= \bht_j^T \bb + \rvv_j \\
\rvz_k &= \bgt^T S_k \bb + \rvw_k
\end{aligned}\label{eq:RI-virtual}
\end{equation}
where the elements of $\bht_j \in \mathbb{R}^{LM}$ are rationally independent for each $j=1,\ldots, J_1$ and the elements of $\bgt$ belong to the set
\begin{equation}
\cA = \left\{\prod_{k=1}^{J_2}\prod_{i=1}^M g_{ki}^{\alpha_{ki}} \big|~  \al_{ki} \in \{0,\ldots, N\}\right\},\label{eq:RI-A}
\end{equation}
and the matrices $S_k \in \mathbb{R}^{(N+1)^{J_2  M} \times LM}$ have entries that are either $0$ or $1$ and each row has no more than $M$ ones. 
\label{lem:CWC:struct}
\end{lemma}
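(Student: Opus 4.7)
The plan is to substitute $\rvbx = V\bb$ directly into the channel model~\eqref{eq:model1} and reinterpret the resulting scalar expressions in terms of the auxiliary quantities $\bht_j$, $\bgt$ and $S_k$. Partition $\bb = (\bb_1^\T, \ldots, \bb_M^\T)^\T$ with $\bb_i \in \mathbb{R}^L$, so that the block-diagonal form of $V$ yields $V\bb = (\bv^\T\bb_1, \ldots, \bv^\T\bb_M)^\T$. I would then treat the legitimate receivers and the eavesdroppers separately.

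For the $j$-th legitimate receiver, $\rvy_j = \bh_j^\T V\bb + \rvv_j = \bht_j^\T \bb + \rvv_j$ where $\bht_j = V^\T\bh_j$, and its $ML$ entries are the products $h_{ji}\, v_l$ for $i=1,\ldots,M$ and $l=1,\ldots,L$. To argue rational independence of these entries, I would proceed in two measure-zero steps. First, conditioning on the $g$'s: distinct choices of the exponent tuple $(\al_{ki})$ give distinct monomials in the $g$'s as formal polynomials, so the $L$ values $\{v_l\}$ are linearly independent over $\mathbb{Q}$ for all $g$'s outside a countable union of algebraic hypersurfaces, hence almost surely under Assumption~\ref{assump:LB}. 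Second, conditioned on such generic $g$'s and on any non-zero rational tuple $(c_{il}) \in \mathbb{Q}^{ML}$, the relation $\sum_{i,l} c_{il} h_{ji} v_l = \sum_i h_{ji} d_i = 0$ with $d_i = \sum_l c_{il} v_l$ forces at least one $d_i \neq 0$ by the rational independence of the $v_l$'s, and therefore defines a proper affine hyperplane in the distribution of $\bh_j$. Taking a countable union over non-zero rational tuples and over $j=1,\ldots,J_1$ preserves the measure-zero property, yielding the rational-independence claim almost surely.

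For the $k$-th eavesdropper, expanding gives $\rvz_k = \sum_{i=1}^M g_{ki}\,(\bv^\T\bb_i) + \rvw_k = \sum_{i=1}^M \sum_{l=1}^L g_{ki} v_l\, b_{il} + \rvw_k$. Since each $v_l$ is a monomial $\prod_{k',i'} g_{k'i'}^{\al_{k'i'}}$ with exponents in $\{0,\ldots,N-1\}$, the product $g_{ki} v_l$ is again such a monomial but with the exponent of $g_{ki}$ promoted into $\{1,\ldots,N\}$; in particular it lies in the set $\cA$ defined in~\eqref{eq:RI-A}. I would then list the $(N+1)^{J_2 M}$ elements of $\cA$ as the coordinates of $\bgt$ and collect the $ML$ terms above by monomial value, which defines $S_k$: its $(a,(i,l))$-th entry equals $1$ precisely when $g_{ki} v_l = a$ and is $0$ otherwise. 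The row-weight bound follows from a unique-factorization observation: for a fixed row index $a \in \cA$ and fixed antenna index $i$, the equation $v_l = a/g_{ki}$ has at most one monomial solution $l \in \cT$ (and only if the exponent of $g_{ki}$ in $a$ is at least one and all other exponents of $a$ lie in $\{0,\ldots,N-1\}$), so any row of $S_k$ contains at most $M$ ones. I expect the rational-independence step to be the main, if standard, obstacle; everything else is algebraic bookkeeping on the structure of $V$ and of the monomial set $\cT$.
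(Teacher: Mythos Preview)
Your proposal is correct and follows essentially the same route as the paper: substitute $\rvbx=V\bb$ into~\eqref{eq:model1}, read off $\bht_j=[h_{j1}\bv^\T,\ldots,h_{jM}\bv^\T]$ and $\bgt_k=[g_{k1}\bv^\T,\ldots,g_{kM}\bv^\T]$, note that every entry $g_{ki}v_l$ lands in $\cA$, and bound the row weights of $S_k$ by observing that within each block $g_{ki}\bv^\T$ the entries are distinct, so a given value can repeat at most once per antenna index $i$. The only genuine difference is expository: the paper simply asserts the rational independence of the entries of $\bht_j$ (and the distinctness of the entries of $\bv$) as a consequence of the generic continuous distribution in Assumption~\ref{assump:LB}, whereas you spell out the countable-union-of-hypersurfaces argument explicitly; your extra care there is welcome but not a different method.
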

\begin{proof}
With the choice of $V$ in~\eqref{eq:RI-V}, from direct substitution,
\begin{equation}
\bht_j = \left[h_{j1}\bv^T,\ldots, h_{jM}\bv^T\right].
\end{equation}
Since the elements $h_{ji}$ are rationally independent, the elments of $\bv$ are rationally independent and independent of $h_{ji}$ it follows that all the elements of $\bht_j$ are rationally independent. Similarly we have that\begin{equation}\rvz_k = \bgt_k^T \bb +  \rvw_k,\end{equation}where \begin{equation}\bgt_k^T=\bg_k^T V = \left[g_{k1}\bv^T,\ldots,g_{kM}\bv^T\right].\label{eq:bgt}\end{equation} is a length $ML$ vector whose elements belong to the set $\cA$ in~\eqref{eq:RI-A}.

Since all elements of $\bgt_k$ belong to $\cA$ and $|\cA| = (N+1)^{J_2M}$ is, for sufficiently large $N$, smaller than $MN^{J_2M}$, it must that that the vector $\bgt_k$ has repeated elements.  Thus we can also express 
\begin{equation}
\bgt_k^T = \bg^T S_k,
\end{equation}
where $\bg$ is a vector consisting of all the elements in $\cA$ and $S_k \in \mathbb{R}^{(N+1)^{J_2 M} \times LM}$ is a matrix for which every column has exactly one element that equals $1$, and the remaining elements are zero. 

It remains to establish that each row in $S_k$ cannot have more than $M$ elements that equal $1$. Consider row $1$ in $S_k$. If the elements in columns $l_1,\ldots, l_T$ equal 1 then it follows that $g_{1} = \bgt_{k,l_1} = \ldots = \bgt_{k,l_T}$.  Thus to upper bound the number of ones in any given row, we  count the number of elements in the vector $\bgt_k$ in~\eqref{eq:bgt} that can be identical. Since each element in $\bv$ is distinct it follows that no two elements of the vector $g_{ki}\bv^T$ can be identical. Thus no more than $M$ elements in $\bgt_k$ can be identical, completing the claim. 
\end{proof}

The following lemma specifies the parameters of the PAM constellation~\eqref{eq:C-PAM} for the error probability at each legitimate receiver in the virtual channel~\eqref{eq:RI-virtual} to be arbitrarily small. The proof closely follows the proof of Prop.~\ref{prop:paramSelec} and is omitted. 
\begin{lemma}
Suppose that $\eps > 0 $ be an arbitrary constant and let $\g^2 = \frac{1}{M\sum_{t \in \cT}t^2}$. Select\begin{equation}\begin{aligned}Q &= P^{\frac{1-\eps}{2(ML+\eps)}}\\a &=\g\frac{P^{\frac{1}{2}}}{Q}\end{aligned}\label{eq:RI-QAParam}\end{equation} then we have that $||\bx||^2 \le P$ and for all channel vectors, except a set of measure zero, we have that
\begin{equation}
\pe \le \exp\left(-\eta P^\eps\right),\label{eq:pe}
\end{equation}
where $\eta$ is a constant that depends on the channel vector coefficients, but does not depend on $P$. 
\label{lem:pe}
\end{lemma}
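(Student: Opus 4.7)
The plan is to reduce Lemma~\ref{lem:pe} to a direct application of Lemma~\ref{lem:dmin} together with the argument of Proposition~\ref{prop:paramSelec}, with the effective constellation dimension $L$ replaced by $ML$. By Lemma~\ref{lem:CWC:struct}, after the precoder $\rvbx = V\bb$ the $j$-th legitimate receiver observes the scalar Gaussian channel $\rvy_j = \bht_j^T\bb + \rvv_j$, where $\bht_j \in \mathbb{R}^{ML}$ has rationally independent entries almost surely under Assumption~\ref{assump:LB}, and $\bb$ is a length-$ML$ vector with i.i.d.\ entries drawn uniformly from the PAM constellation $\cC$ in~\eqref{eq:C-PAM}. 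This is precisely the real-interference-alignment setup of Section~\ref{sec:realInterf} with $L$ replaced by $ML$.

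First I will verify the power constraint. Using the block-diagonal structure of $V$ in~\eqref{eq:RI-V}, partition $\bb$ into $M$ blocks $\bb_1,\ldots,\bb_M \in \mathbb{R}^{L}$, so that $\rvbx = (\bv^T\bb_1,\ldots,\bv^T\bb_M)^T$. Since $|b_i| \le aQ$ for every entry, $E[b_i^2] \le a^2Q^2$, and a direct computation gives $E[\|\rvbx\|^2] = M\,\|\bv\|^2\,E[b_i^2] \le M\bigl(\textstyle\sum_{t\in\cT}t^2\bigr)\,a^2Q^2 = a^2Q^2/\g^2 = P$, where the last two equalities invoke the definitions of $\g$ and $a$ in the lemma statement.

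Next I will bound the error probability. Because the entries of $\bht_j$ are rationally independent, the nearest-neighbor decoder recovers $\bb$ uniquely from $\bht_j^T\bb$, and Lemma~\ref{lem:dmin} (applied with dimension $ML$) yields $d_\mrm{min}(\cC_r(\bht_j)) \ge k_\eps a/Q^{ML-1+\eps}$ for all channel vectors outside a set of measure zero. Plugging $a^2 = \g^2 P/Q^2$ into the standard Gaussian bound $\pe \le \exp(-d_\mrm{min}^2/8)$ (recall $\sigma^2 = 1$) produces the exponent $-k_\eps^2\g^2 P/\bigl(8\,Q^{2(ML+\eps)}\bigr)$, and the choice $Q = P^{(1-\eps)/(2(ML+\eps))}$ makes $Q^{2(ML+\eps)} = P^{1-\eps}$, leaving $\pe \le \exp(-\eta_j P^\eps)$ at receiver $j$ for a constant $\eta_j$ that depends on the channel coefficients but not on $P$. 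A union bound over the $J_1$ legitimate receivers then gives the single exponent $\eta$ required in~\eqref{eq:pe}.

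I do not anticipate a genuine obstacle beyond bookkeeping: the nontrivial ingredients (rational independence of $\bht_j$, the minimum-distance estimate, and unique decodability) are already in hand from Lemma~\ref{lem:CWC:struct} and Lemma~\ref{lem:dmin}, and the parameter calibration is a verbatim transcription of Proposition~\ref{prop:paramSelec} with $L \mapsto ML$. The only small care is to combine the measure-zero exceptional sets coming from the $J_1$ applications of Lemma~\ref{lem:dmin} and from the rational-independence claim in Lemma~\ref{lem:CWC:struct} into a single exceptional set of measure zero, which is immediate by taking a finite union.
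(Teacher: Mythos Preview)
Your proposal is correct and matches the paper's approach exactly: the paper omits the proof of this lemma, stating only that it ``closely follows the proof of Prop.~\ref{prop:paramSelec},'' and what you have written is precisely that transcription with $L$ replaced by $ML$, together with the natural bookkeeping (the block-diagonal structure of $V$ for the power computation and the union of finitely many measure-zero exceptional sets). There is nothing to add.
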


To evaluate~\eqref{eq:Rdiff}, we will compute the terms $I(\bb;\rvy_j)$ and $I(\bb;\rvz_k)$ separately. From~\eqref{eq:pe} it follows via Fano's inequality that
\begin{align*}
H(\bb|\rvy_j) &= 1 + \pe H(\bb)\\
&= 1 + \pe LM \log(2Q+1)\\
&= 1 + o_P(1),
\end{align*}
where $o_P(1)$ denotes a function that goes to zero as $P \rightarrow \infty$. Thus we have that
\begin{align}
I(\bb;\rvy_j) &= H(\bb) - H(\bb|\rvy_j) \notag\\
&= LM \log(2Q+1) - 1 - o_P(1)\label{eq:indepSum}\\
&\ge LM \log Q - 1 - o_P(1) \notag\\
&= \frac{LM(1-\eps)}{2(LM+\eps)}\log P -1 -o_P(1)\label{eq:Qval}\\
&\ge \left(\frac{1}{2}-\eps\right)\log P   - 1 - o_P(1), \label{eq:mutInf1}
\end{align}where~\eqref{eq:indepSum} follows from the fact that each element of the vector $\bb$ is selected independently from $\cC$, and~\eqref{eq:Qval} follows by substituting the choice of $Q$ in~\eqref{eq:RI-QAParam} in Lemma~\ref{lem:pe}.

To upper bound the term $I(\bb;\rvz_k)$ we note that from~\eqref{eq:RI-virtual} it follows that $\rvz_k \rightarrow S_k \bb \rightarrow \bb$ holds. Hence
\begin{align}
I(\bb;\rvz_k) &\le I(\bb;S_k \bb) \notag\\
&= H(S_k \bb) \label{eq:RI-Ev1}\\
&\le \sum_{t=1}^{(N+1)^{MJ_2}} H(\{S_k\bb\}_t)\label{eq:RI-Ev2}\\
&\le (N+1)^{MJ_2}\log(2MQ+1)\label{eq:RI-Ev3}\\
&\le (N+1)^{MJ_2}\left\{\log Q + \log 4M \right\}\notag\\
&=(N+1)^{MJ_2}\left\{\frac{1-\eps}{2(ML + \eps)}\log P + \log 4M \right\}\label{eq:Qval2}
\end{align}
where~\eqref{eq:RI-Ev1} follows from the fact that $S_k\bb$ is a deterministic function of $\bb$, while~\eqref{eq:RI-Ev2} from the fact that conditioning reduces the entropy and finally~\eqref{eq:RI-Ev3} from the fact that each row of $S_k$ has at-most $M$ elements equal to 1 and the remaining elements equal to zero as stated in Lemma~\ref{lem:CWC:struct} and hence the support of $(S_k \bb)_t$ is $(-MQ,\ldots, MQ)$ and~\eqref{eq:Qval2} follows by substituting the value of $Q$ in~\eqref{eq:RI-QAParam}. Observe that
$$K_1 \defeq  (N+1)^{MJ_2}\log 4M$$ is a constant that does not depend on $P$ and substituting $L=N^{MJ_2}$,
\begin{equation}
\begin{aligned}
I(\bb;\rvz_k) &\le \frac{1}{2M(1+\eps)}\left(1 + \frac{1}{N}\right)^{MJ_2}\log P + K_1\\
&=\frac{1}{2M(1+\eps)}(1+o_N(1)) \log P + K_1,
\end{aligned}\label{eq:mutInf2}
\end{equation}
where $o_N(1)$ is a term that goes to zero as $N\rightarrow\infty$. Finally substituting~\eqref{eq:mutInf1} and~\eqref{eq:mutInf2} in~\eqref{eq:Rdiff} we have that
\begin{align}
\lim_{P\rightarrow\infty}\frac{R}{\frac{1}{2}\log P} &= 1 - 2\eps - \frac{1}{M(1+\eps)}(1+o_N(1))
\end{align}
which can be made arbitrarily close to $1-1/M$, by selecting $N$ sufficiently large and $\eps$ sufficiently close to zero.

\subsection{Proof of Prop.~\ref{prop:CWC:ML}}
\label{subsec:CWC:LB:PML}
The achievabilty scheme in this example employs a  multi-level coding scheme. We first propose a coding scheme for a linear deterministic channel over $\mathbb{F}_3$ and then extend it to the Gaussian case using multi-level coding. 

\subsubsection{Coding over a deterministic channel}
\begin{prop}
Consider a linear deterministic channel over $\mathbb{F}_3$ with two input symbols $\rvx_1$ and $\rvx_2$ and with output symbols described as follows:
\begin{equation}\begin{aligned}
&\rvy_1 = \rvx_1 + \rvx_2 \\
&\rvy_2 = \rvx_1 - \rvx_2\\
&\rvz_i = \rvx_i,\qquad i=1,2
\end{aligned}\label{eq:mLModel}\end{equation}where the addition and subtraction is defined over the group in $\mathbb{F}_3$. Then we can achieve a secrecy rate of $R = 1$~b/s for this channel. 
\label{prop:F3}
\end{prop}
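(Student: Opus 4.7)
The plan is to exhibit a single-shot stochastic encoder that achieves $R=1$ bit per channel use with perfect (single-letter) secrecy; since the channel is deterministic and noiseless, no block coding or Fano-style argument is needed and I can simply evaluate the achievable rate $\min_j I(\rvu;\rvy_j)-\max_k I(\rvu;\rvz_k)$ recalled above with $\rvu=\rvm$. The key idea is to exploit the mismatch between the $\mathbb{F}_3$ arithmetic used by the channel and a separate $\mathbb{F}_2$ structure used to mask the message, by restricting the channel inputs to the binary subset $\{0,1\}\subset\mathbb{F}_3$.

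Concretely, I would take a uniform binary message $\rvm\in\{0,1\}$ together with an independent uniform auxiliary bit $\rvb\in\{0,1\}$, and transmit $\rvx_1=\rvm\oplus\rvb$ and $\rvx_2=\rvb$ with $\oplus$ denoting modulo-$2$ addition. Under this encoding the pair $(\rvx_1,\rvx_2)\in\{0,1\}^2$ lies on the ``diagonal'' $\{(0,0),(1,1)\}$ when $\rvm=0$ and on the ``anti-diagonal'' $\{(0,1),(1,0)\}$ when $\rvm=1$. A direct evaluation of the outputs in $\mathbb{F}_3$ shows that $\rvy_1=\rvx_1+\rvx_2$ equals $0$ or $2$ on the diagonal and always equals $1$ on the anti-diagonal, while $\rvy_2=\rvx_1-\rvx_2$ always equals $0$ on the diagonal and equals $1$ or $2$ on the anti-diagonal. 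Hence receiver $1$ recovers $\rvm$ by testing whether $\rvy_1=1$ and receiver $2$ by testing whether $\rvy_2=0$, giving $I(\rvm;\rvy_1)=I(\rvm;\rvy_2)=H(\rvm)=1$ bit.

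For the secrecy constraint I would simply observe that each individual coordinate is a uniform bit independently of $\rvm$: $\rvz_1=\rvx_1=\rvm\oplus\rvb$ is a uniform bit for every value of $\rvm$, and $\rvz_2=\rvx_2=\rvb$ is independent of $\rvm$ by construction, so $I(\rvm;\rvz_1)=I(\rvm;\rvz_2)=0$. Plugging everything into the achievable formula then yields the claimed secrecy rate of $1$~b/s. The only nontrivial step is spotting this encoder, and what makes it work is precisely the interplay between the two incompatible algebraic structures: modulo-$3$ addition produces distinct output patterns on the diagonal and anti-diagonal of $\{0,1\}^2$, yet each coordinate remains a uniform bit in both cases, so no single-coordinate observer can tell which subset the input came from.
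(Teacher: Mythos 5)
Your construction is exactly the paper's wiretap code: setting $\rvx_1=\rvm\oplus\rvb$, $\rvx_2=\rvb$ with a uniform independent mask $\rvb$ reproduces the paper's table (message $0\mapsto\{(0,0),(1,1)\}$, message $1\mapsto\{(0,1),(1,0)\}$), and your decoding and secrecy analyses match the paper's. The only difference is presentational — you write the encoder as an explicit $\mathbb{F}_2$ one-time pad rather than a lookup table — so this is the same proof.
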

\begin{proof}
The key idea behind the proof is to enable the legitimate receivers to take advantage of the field $\mathbb{F}_3$ in decoding while we limit the observation of the eavesdroppers to binary valued symbols. The wiretap code  is illustrated below:
\begin{equation}
\begin{array}{c|c}
 \mrm{msg.} & (\rvx_1,\rvx_2)    \\\hline
  0 & (0,0), (1,1)     \\
  1&   (0,1),(1,0)  \\\hline 
\end{array}\label{eq:code}\end{equation}
When message  bit $0$ needs to be transmitted the sender  selects one of the two tuples $(0,0)$ and $(1,1)$ at random and transmit the corresponding value of $(\rvx_1,\rvx_2)$. Likewise when bit $1$ needs to be transmitted one of the two tuples $(0,1)$ and $(1,0)$ will be transmitted. Note that when $b=0$ is transmitted $\rvy_1 \in \{0,2\}$ while $\rvy_2  = 0$ whereas when $b=1$ we have that $\rvy_1 =1$ and $\rvy_2 \in \{1,2\}$. It can be readily verified that each receiver is able to recover either message. Assuming that the messages are equally likely, it can also be readily verified that the message bit is independent of both $\rvx_1$ and $\rvx_2$ and thus the secrecy condition with respect to each eavesdropper is satisfied. 
 \end{proof}

\subsubsection{Multilevel Coding Scheme}
Fix integers $T$ and $M$ with the following properties: $T$ is the smallest integer such that for a given $\eps> 0$, $\Pr(\cE) \le \eps$ where $\cE$ denotes the error event\begin{equation}\cE =  \left\{\rvv_1, \rvv_2: \max_{i\in\{1,2\}}|\rvv_i|\ge 3^{T-1}\right\}\end{equation}and $M$ is the largest integer such that $3^{2M} \le P/2$. We  construct a multi-level code with a rate of  $M-T$ information bits and error probability at-most $\eps$.

Let the information bits be represented by the vector $\bb = (b_{T},\ldots, b_{M-1})$.  For each $i \in \{T,\ldots, M-1\}$,  we map the bit $b_i \in \{0,1\}$ into symbols  $(\xt_1(i),\xt_2(i))$ according to the code construction in~\eqref{eq:code}. The transmitted symbols are given by
\begin{equation}
\rvx_i = \sum_{l=T}^{M-1}\xt_i(l)3^l, \qquad i=1,2
\end{equation}
and the received symbols at the two receivers can be expressed as,
\begin{equation}
\rvy_1 = \sum_{l=T}^{M-1}\yt_1(l)3^l +\rvv_1,\qquad \rvy_2 = \sum_{l=T}^{M-1}\yt_2(l)3^l+\rvv_2
\end{equation}
where we have introduced
\begin{equation}
\begin{aligned}
\yt_1(l) &=\xt_1(l) + \xt_2(l) \in \{0,1,2\}, \\ \yt_2(l) &=\xt_1(l) - \xt_2(l) \in \{-1,0,1\}.
\end{aligned}\end{equation}
With the choice of $M$, it follows that $E[||\rvbx||^2] \le P$. Furthermore in the analysis of decoding, we declare an error if $\max_{i}|\rvv_i|> 3^{T-1}$. Conditioned on $\cE^c$,  note that
\begin{align}
&\rvy_i - \rvy_i~\mrm{mod} ~3^{T-1}\label{eq:comp}\\ &= \rvy_i  - \left(\sum_{l=T}^{M-1}\yt_i(l)~3^l\right)\mrm{mod}~3^T  - \rvv_1~\mrm{mod}~3^T \\
&= \rvy_i  - \rvv_1=\sum_{l=T}^{M-1}\yt_i(l)~3^l.\label{eq:remNoise}
\end{align} 
where we have used the fact that $\left(\sum_{l=T}^{M-1}\yt_i(l)~3^l\right)\mrm{mod}~3^T=0$ since each term in the summation is an integer multiple of $3^{T}$.

Thus by computing~\eqref{eq:comp} it is possible to retrieve $\sum_{l=T}^{M-1}\yt_i(l)~3^l$ (assuming the error event does not happen).  Since there is no carry over across levels we in turn retrieve $(\yt_i(T),\ldots, \yt_i(M))$ at each receiver.  Then applying the same decoding scheme{ as in Prop.~\ref{prop:F3} at each level, each receiver can recover the underlying bits $(b_T,\ldots, b_M)$.  If however we have that $|\rvv_i|\ge 3^{T-1}$,  then the above analysis leading to~\eqref{eq:remNoise} fails and an error is declared. Since $T$ is selected to be sufficiently large, this event happens with a probability that is less than $\eps$.

In order to complete the analysis it remains to show that 
$H(\bb|\rvz_i) = M-T$. We first enhance each eavesdropper by removing the noise variable  in~\eqref{eq:mLModel} i.e., $\zt_j = x_j$ for $j=1,2$.  Now consider
\begin{align*}
&H(b_T,\ldots, b_M|\zt_i) \\
=&H(b_T,\ldots, b_M|\xt_i(T),\ldots, \xt_i(M)) \\
=&\sum_{l=T}^M H(b_l|\xt_i(l))
=M-T,
\end{align*}
where the last relation follows from $H(b_l|\xt_i(l))=1$ since we use the code construction in~\eqref{eq:code} in mapping $b_l \rightarrow (\xt_1(l),\xt_2(l))$.

The resulting d.o.f.\ achieved by the multi-level code is given by
\begin{align}
d&=\frac{R}{\frac{1}{2}\log P}\\
&= \frac{M-T}{1/2\left(2M\log_2{3} +1\right)}\\
&= \log_3 2 + o_M(1),
\end{align}
where $o_M(1)\rightarrow 0$ as $M\rightarrow \infty$.

\section{Compound Wiretap Channel: Upper Bound}
\label{sec:CWC:UB}

In this section provide a proof of Theorem~\ref{thm:CWC:UB}. We use the convenient notation where the transmit vectors and received symbols are concatenated together i.e., $X = [\bx(1),\ldots, \bx(n)]$ and likewise $\rvby_j = [\rvy_j(1),\ldots,\rvy_j(n)]$, $\rvbz_k = [\rvz_k(1),\ldots,\rvz_k(n)]$ etc. In this notation the channel can be expressed as
\begin{equation}
\begin{aligned}
\rvby_j &= \bh_j^T X + \rvbv_j,\qquad j=1,\ldots,J_1\\
\rvbz_k &= \bg_k^T X + \rvbw_k,\qquad k=1,\ldots,J_2 \label{eq:model2}
\end{aligned}
\end{equation}

Note that it suffices to consider the case when $\min(J_1, J_2)\ge M$ in Theorem~\ref{thm:CWC:UB}. Otherwise the upper bound equals $1$, which continues to hold even in absence of secrecy constraints. 

Secondly we will assume that $J_1 = J_2 = M$ in deriving the upper bound. In all other cases, it is clear that the upper bound continues to hold as we only reduce the number of states.

For any code there exists a sequence $\eps_n$ that approaches zero as $n \rightarrow \infty$ such that
\begin{align}
&\frac{1}{n}I(\rvm;\rvbz_k) \le \eps_n, \qquad k=1,2,\ldots, M \label{eq:secrecy}\\
&\frac{1}{n}H(\rvm|\rvby_j) \le \eps_n,\qquad j=1,2,\ldots, M. \label{eq:fano}
\end{align}
where~\eqref{eq:secrecy} is a consequence of the secrecy constraint whereas~\eqref{eq:fano} is a consequence of Fano's inequality applied to receiver $j = 1,2,\ldots, J_r$. 

The proof is rather long and hence divided into the following subsections.

\subsubsection*{Upper bound from secrecy constraint}
\begin{lemma}
The rate of any compound wiretap code is upper bounded by the following expression:
\begin{equation}
nR \le \dent(\rvby_1,\ldots, \rvby_M) - \dent(\rvbz_k) + nc_k, \label{eq:UB1}
\end{equation}
where $c_k$ is a constant that does not depend on $P$. 
\label{lem:UB1}\end{lemma}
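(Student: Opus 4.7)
The plan is to combine the secrecy constraint $\tfrac{1}{n}I(\rvm;\rvbz_k)\le\eps_n$ with Fano's inequality applied to any single legitimate receiver, and then exploit the structural observation that, under Assumption~\ref{assump:UB} with $J_1\ge M$, the matrix $H=[\bh_1,\ldots,\bh_M]^T$ is invertible. This invertibility will let me argue that $\rvbz_k$ is determined by $(\rvby_1,\ldots,\rvby_M)$ up to a Gaussian perturbation of fixed (i.e.\ $P$-independent) variance, which is exactly the channel-dependent constant $c_k$ appearing in the statement.

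First I would write, starting from Fano at receiver $1$ and then subtracting the secrecy term,
\begin{align*}
nR = H(\rvm) &\le I(\rvm;\rvby_1,\ldots,\rvby_M)+n\eps_n \\
&\le I(\rvm;\rvby_1,\ldots,\rvby_M)-I(\rvm;\rvbz_k)+2n\eps_n\\
&= \dent(\rvby_1,\ldots,\rvby_M)-\dent(\rvbz_k)+\bigl[\dent(\rvbz_k|\rvm)-\dent(\rvby_1,\ldots,\rvby_M|\rvm)\bigr]+2n\eps_n,
\end{align*}
so it suffices to bound the bracketed difference by $nc_k'$ for a constant depending only on the channel. Expanding $\dent(\rvby_1,\ldots,\rvby_M,\rvbz_k|\rvm)$ two ways via the chain rule gives
\[
\dent(\rvbz_k|\rvm)-\dent(\rvby_1,\ldots,\rvby_M|\rvm)=\dent(\rvbz_k|\rvby_1,\ldots,\rvby_M,\rvm)-\dent(\rvby_1,\ldots,\rvby_M|\rvbz_k,\rvm),
\]
reducing the task to upper-bounding the first term on the right and lower-bounding the second by $P$-independent constants.

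The main obstacle, and the only nontrivial step, is the upper bound on $\dent(\rvbz_k|\rvby_1,\ldots,\rvby_M,\rvm)$. Since $H$ is invertible, substituting $X=H^{-1}((\rvby_1,\ldots,\rvby_M)^T-V)$ into $\rvbz_k=\bg_k^T X+\rvbw_k$ yields
\[
\rvbz_k-\bg_k^T H^{-1}(\rvby_1,\ldots,\rvby_M)^T = \rvbw_k-\bg_k^T H^{-1} V,
\]
a Gaussian vector with i.i.d.\ components of variance $1+\|H^{-T}\bg_k\|^2$, independent of $X$ and $\rvm$. Since translation by a function of the conditioning variables preserves conditional differential entropy and further conditioning only decreases it, we get $\dent(\rvbz_k|\rvby_1,\ldots,\rvby_M,\rvm)\le \tfrac{n}{2}\log(2\pi e\,(1+\|H^{-T}\bg_k\|^2))$. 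For the matching lower bound, conditioning additionally on $X$ gives $\dent(\rvby_1,\ldots,\rvby_M|\rvbz_k,\rvm)\ge\dent(V)=\tfrac{nM}{2}\log(2\pi e)$, since $V$ is independent of $(X,\rvm,\rvbz_k)$. Combining the two displays establishes the lemma, with $c_k$ determined by $\bh_1,\ldots,\bh_M$ and $\bg_k$ (plus a vanishing $\eps_n$ contribution).
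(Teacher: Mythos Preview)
Your proof is correct and uses the same key structural idea as the paper---invertibility of the matrix $H=[\bh_1,\ldots,\bh_M]^T$ to show that $\rvbz_k$ differs from a linear combination of $(\rvby_1,\ldots,\rvby_M)$ by a Gaussian of $P$-independent variance---but your decomposition is more direct than the paper's. The paper first augments receiver $1$ with the genie side-information $(\rvby_2,\ldots,\rvby_{M-1},\rvbz_k)$, obtaining $nR\le I(\rvm;\rvby_1,\ldots,\rvby_{M-1}\mid\rvbz_k)+2n\eps_n$, and then has to patch in $\rvby_M$ via two further chain-rule manipulations, producing two separate residual terms $\dent(\rvby_M\mid\rvby_1,\ldots,\rvby_{M-1},\rvbz_k)$ and $\dent(\rvbz_k\mid\rvby_1,\ldots,\rvby_M)$ that must each be bounded. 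You instead go straight to $I(\rvm;\rvby_1,\ldots,\rvby_M)-I(\rvm;\rvbz_k)$ and use the single chain-rule identity $\dent(\rvbz_k\mid\rvm)-\dent(\rvby_1,\ldots,\rvby_M\mid\rvm)=\dent(\rvbz_k\mid\rvby_1,\ldots,\rvby_M,\rvm)-\dent(\rvby_1,\ldots,\rvby_M\mid\rvbz_k,\rvm)$, arriving at essentially the same two conditional-entropy bounds (with an extra, harmless, conditioning on $\rvm$) in fewer steps. Your route is shorter; the paper's route has the minor expository advantage of making the ``genie-aided wiretap'' interpretation $I(\rvm;\rvby_1,\ldots,\rvby_{M-1}\mid\rvbz_k)$ explicit as an intermediate quantity.
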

The proof of Lemma~\ref{lem:UB1} follows by considering the secrecy constriaint between each receiver and eavesdropper. A proof is provided in Appendix~\ref{app:UB1}.

Using~\eqref{eq:UB1} in Lemma~\ref{lem:UB1} for each eavesdropper $k=1,2,\ldots, M$ and adding up the resulting upper bounds we get that
\begin{align}
nR &\le \dent(\rvby_1,\ldots, \rvby_M) - \frac{1}{M} \sum_{k=1}^M \dent(\rvbz_k) + nc_0,\label{eq:UBk0}\\
&\le \dent(\rvby_1,\ldots, \rvby_M) - \frac{1}{M} \dent(\rvbz_1,\ldots, \rvbz_M) + nc_0,\label{eq:UBk1}\\
&\le \left(1-\frac{1}{M}\right)\dent(\rvby_1,\ldots,\rvby_M) + nd + nc_0,\label{eq:UBk2}
\end{align}
where $c_0 = \frac{1}{M}\sum_{k=1}^M c_k$ and $d$ are constants that do not depend on $P$. Here~\eqref{eq:UBk1} follows from the fact that conditioning reduced differential entropy and~\eqref{eq:UBk2} follows from 
\begin{equation}
\dent(\rvbz_1,\ldots,\rvbz_M) \ge \dent(\rvby_1,\ldots,\rvby_M) - nMd, \label{eq:LB}
\end{equation}
where $d$ is a constant that does not depend on $P$ as shown in Appendix~\ref{app:LB}.

\subsubsection*{Upper bound from multicast constraint}
We obtain the following upper bound on the joint entropy  $\dent(\rvby_1,\ldots,\rvby_M)$ 
\begin{lemma}
\begin{equation}
\dent(\rvby_1,\ldots,\rvby_M) \le \sum_{i=1}^M \dent(\rvby_i) - n(M-1)R + Mn\eps_n
\label{eq:JEUB}
\end{equation}
\label{lem:JEUB}
\end{lemma}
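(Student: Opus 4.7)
The plan is to exploit Fano's inequality together with subadditivity of (conditional) entropy. The intuition is that since every receiver must decode the common message $\rvm$ from its own output, each $\rvby_i$ must carry roughly $nR$ bits of the same information, so the outputs are forced to be highly correlated, which limits the joint entropy.

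First, I would invoke Fano's inequality~\eqref{eq:fano} to write, for each $i = 1,\ldots,M$,
\begin{equation}
I(\rvm;\rvby_i) = H(\rvm) - H(\rvm\mid\rvby_i) \ge nR - n\eps_n,
\end{equation}
and then expand the marginal differential entropy of each received vector conditioned on the message:
\begin{equation}
\dent(\rvby_i) = I(\rvm;\rvby_i) + \dent(\rvby_i\mid\rvm) \ge nR - n\eps_n + \dent(\rvby_i\mid\rvm).
\end{equation}
Summing over $i$ gives
\begin{equation}
\sum_{i=1}^M \dent(\rvby_i) \ge MnR - Mn\eps_n + \sum_{i=1}^M \dent(\rvby_i\mid\rvm).
\end{equation}

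Next, I would use the subadditivity bound $\sum_{i=1}^M \dent(\rvby_i\mid\rvm) \ge \dent(\rvby_1,\ldots,\rvby_M\mid\rvm)$, which holds because conditioning on the common $\rvm$ does not destroy the chain-rule upper bound $\dent(\rvby_1,\ldots,\rvby_M\mid\rvm) \le \sum_i \dent(\rvby_i\mid\rvm)$. Then I would relate the conditional joint entropy back to the unconditional one using
\begin{equation}
\dent(\rvby_1,\ldots,\rvby_M\mid\rvm) = \dent(\rvby_1,\ldots,\rvby_M) - I(\rvm;\rvby_1,\ldots,\rvby_M) \ge \dent(\rvby_1,\ldots,\rvby_M) - H(\rvm) = \dent(\rvby_1,\ldots,\rvby_M) - nR,
\end{equation}
since mutual information with a discrete variable is bounded by its entropy.

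Combining these inequalities yields
\begin{equation}
\sum_{i=1}^M \dent(\rvby_i) \ge MnR - Mn\eps_n + \dent(\rvby_1,\ldots,\rvby_M) - nR = (M-1)nR + \dent(\rvby_1,\ldots,\rvby_M) - Mn\eps_n,
\end{equation}
which after rearrangement is exactly~\eqref{eq:JEUB}. The argument is entirely single-letter after the Fano step and uses only standard entropy manipulations, so I do not anticipate any serious obstacle; the only care needed is that mixing differential entropies of the $\rvby_i$ with the discrete entropy $H(\rvm)$ is legitimate, which it is because the mutual information $I(\rvm;\rvby_i)$ is well defined for mixed discrete/continuous pairs.
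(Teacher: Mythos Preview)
Your proposal is correct and essentially identical to the paper's own proof: both use Fano's inequality, the bound $I(\rvm;\rvby_1,\ldots,\rvby_M)\le H(\rvm)=nR$, and subadditivity of the conditional entropy $\dent(\rvby_1,\ldots,\rvby_M\mid\rvm)\le\sum_i\dent(\rvby_i\mid\rvm)$. The only cosmetic difference is that the paper starts from the joint entropy and expands downward, while you start from the marginals and build up; the ingredients and the chain of inequalities are the same.
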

\begin{proof}
Our upper bound derivation uses the fact that the same message must be delivered to all the receivers and hence the output at the $M$ receivers must be sufficiently correlated. Note that
\begin{align}
&\dent(\rvby_1,\ldots,\rvby_M) =  \dent(\rvby_1,\ldots,\rvby_M|\rvm) + I(\rvm;\rvby_1,\ldots,\rvby_M)\notag\\
&\le \dent(\rvby_1,\ldots,\rvby_M|\rvm) + H(\rvm)\notag\\
&=\dent(\rvby_1,\ldots,\rvby_M|\rvm) + nR \label{eq:ma1}\\
&\le\sum_{j=1}^M\dent(\rvby_j|\rvm) + nR\label{eq:ma2}\end{align}\begin{align}
&=\sum_{j=1}^M \{\dent(\rvby_j)-H(\rvm)+H(\rvm|\rvby_j)\} + nR \notag\\
&\le\sum_{j=1}^M \dent(\rvby_j)- M H(\rvm)+ nM\eps_n + nR \notag\\
&\le \sum_{j=1}^M \dent(\rvby_j)- (M-1)nR + nM\eps_n  \label{eq:ma3}
\end{align}
where~\eqref{eq:ma1} and~\eqref{eq:ma3} follow from the fact that since the message is uniformly distributed over the set of size $2^{nR}$, it follows that $H(\rvm)= nR$ while~\eqref{eq:ma2} is a consequence of the fact that conditioning reduces differential entropy.
\end{proof}

\subsubsection*{Combining secrecy and multicast constraints}
In the final step we combine~\eqref{eq:UBk2} and~\eqref{eq:JEUB}. In particular, multipying both sides of~\eqref{eq:UBk2} by $M$ we get, with $d_0 = d+c_0$,
\begin{align}
&MnR = (M-1)\dent(\rvby_1,\ldots,\rvby_M) + nMd \notag\\
&\le (M-1)\left(\sum_{j=1}^M \dent(\rvby_j)- (M-1)nR + nM\eps_n \right) \notag\\ &\qquad + nMd_0 \\
&\le (M-1)\sum_{j=1}^M \dent(\rvby_j) \notag \\&\qquad - (M-1)^2nR + nM(d_0 + (M-1)\eps_n)\label{eq:UB-comb}
\end{align}
Rearranging the terms in~\eqref{eq:UB-comb}, we have that,
\begin{equation}
\begin{aligned}
&nR(M+(M-1)^2) \\&\qquad \le (M-1)\sum_{j=1}^M \dent(\rvby_j) + nM(d_0 + (M-1)\eps_n)\label{eq:UB-comb0}
\end{aligned}
\end{equation}
An upper bound for $\dent(\rvby_j)$ provided in Appendix~\ref{app:EntBnd}, (ref.~\eqref{eq:EntBnd} gives that
\begin{align}
&nR(M+(M-1)^2)\\ &\le (M-1)\left\{\sum_{j=1}^M
\frac{n}{2}\max(\log P,0) + nd_1\right\} \notag\\&\qquad+ nM(d_0 + (M-1)\eps_n)\\
&= (M-1)M
\frac{n}{2}\max(\log P,0)\notag\\&\qquad + nM(d_0 + (M-1)\eps_n + (M-1)d_1)
\end{align}
Thus with $d_2 = M(d_0 + (M-1)\eps_n + (M-1)d_1)$, a constant that does not depend on $P$ we have that
\begin{align}
R &\le \frac{M(M-1)}{M+(M-1)^2}\left(\frac{1}{2}\log P , 0\right) + d_2
\end{align}
which yields the desired upper bound on the degrees of freedom as stated in Theorem~\ref{thm:CWC:UB}.

\section{Compound Private Broadcast: Lower Bounds}
\label{sec:PBC:LB}
In this section we provide a proof for Theorem~\ref{thm:CPB:LB}.

When $\max(J_1,J_2) < M$ the transmitter achieves two degrees of freedom by zero-forcing the undesired groups. In particular, it finds two vectors $\bv_1$ and $\bv_2$ such that $\bh_j^T\bv_2 =0$ and $\bg_k^T \bv_1=0$ for $j=1,\ldots, J_1$ and $k=1,\ldots, J_2$. By transmitting $\bx = \bv_1 m_1 + \bv_2 m_2$, the effective channels at the two groups are given by\begin{equation}
\begin{aligned}
\rvy_j &= \bh_j^T \bv_1 m_1 + v_j \\
\rvz_k &= \bg_k^T \bv_2 m_2 + w_k \\
\end{aligned}
\end{equation}
Furthermore since, almost surely $\bh_j^T \bv_1 \neq 0$ and $\bg_k^T \bv_2 \neq 0$, it follows that one degree of freedom is achievable for each of the two groups.

To establish the degrees of freedom in the remaining two cases in Theorem~\ref{thm:CPB:LB}, we combine the real interference alignment scheme with wiretap coding. In particular we evaluate the following single-letter acheivable rate-pair for specific choice of auxiliary random variables that result from the real-interference alignment scheme. 
\begin{prop}
\label{prop:PB_achiev}
An achievable rate for the private memoryless broadcast channel $p_{\rvy_1,\ldots, \rvy_{J_1}, \rvz_1,\ldots, \rvz_{J_2} | \rvx}(\cdot)$ is as follows:
\begin{equation}
\begin{aligned}
R_1 &= \min_{j} I(\rvu_1; \rvy_j) - \max_{k}I(\rvu_1; \rvz_k, \rvu_2)\\
R_2 &= \min_{k} I(\rvu_2; \rvz_k) - \max_{j}I(\rvu_2; \rvy_j, \rvu_1),
\end{aligned}\label{eq:Rachiev}
\end{equation}where $(\rvu_1,\rvu_2)$ are \emph{mutually independent} random variables. The joint distribution satisfies the Markov condition
\begin{equation}
(\rvu_1,\rvu_2) \rightarrow \rvx \rightarrow (\rvy_1,\ldots, \rvy_{J_1},\rvz_1,\ldots, \rvz_{J_2}) 
\end{equation}
and $E[\rvx^2] \le P$.
\end{prop}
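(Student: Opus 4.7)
The plan is to prove this via a standard double-random-binning construction, with one independent wiretap-style codebook per message. For each $i\in\{1,2\}$, construct a codebook $\mathcal{C}_i$ of $2^{n(R_i+R_i')}$ i.i.d.\ codewords drawn from $p_{\rvu_i}$, partitioned into $2^{nR_i}$ bins of size $2^{nR_i'}$ indexed by $\rvm_i$. Given $(\rvm_1,\rvm_2)$, the encoder picks $\rvu_1^n$ and $\rvu_2^n$ uniformly at random from their respective bins and then generates $\rvbx^n$ symbol-by-symbol from $p_{\rvx\mid\rvu_1,\rvu_2}$. Since the two codebooks and bin-selection randomness are independent and $\rvu_1\perp\rvu_2$, the induced pair $(\rvu_1^n,\rvu_2^n)$ is marginally i.i.d.\ $\sim p_{\rvu_1}p_{\rvu_2}$, which in turn yields $\rvbx^n$ i.i.d.\ $\sim p_\rvx$; the average power constraint is then met by standard arguments.

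For reliability, each receiver $\rvy_j$ in group~$1$ performs joint-typicality decoding against $\mathcal{C}_1$ with respect to the induced marginal $p_{\rvu_1,\rvy_j}$; i.e.\ the codeword $\rvu_2^n$ is treated as part of the effective channel noise from $\rvu_1$ to $\rvy_j$. A standard analysis yields vanishing error probability provided
\begin{equation}
R_1+R_1' < \min_j I(\rvu_1;\rvy_j),
\end{equation}
and symmetrically $R_2+R_2' < \min_k I(\rvu_2;\rvz_k)$ for group~$2$. A union bound over the finite number of receivers $J_1$ and $J_2$ (which do not depend on $n$) preserves the constraint.

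The main technical step, and the principal obstacle, is the secrecy analysis: showing $\tfrac{1}{n}I(\rvm_1;\rvbz_k^n)\to 0$ for every $k=1,\dots,J_2$ (and symmetrically for $\rvm_2$ at every group-$1$ receiver). My plan is to exploit the independence of $\rvm_1$ and $\rvu_2^n$ to write
\begin{equation}
I(\rvm_1;\rvbz_k^n) \;\le\; I(\rvm_1;\rvbz_k^n,\rvu_2^n) \;=\; I(\rvm_1;\rvbz_k^n\mid \rvu_2^n),
\end{equation}
and then apply the usual wiretap bin-equivocation argument \emph{conditionally} on $\rvu_2^n$, which reduces the eavesdropper's channel from $\rvu_1$ to $\rvz_k$ to the single-letter transition $p_{\rvz_k\mid\rvu_1,\rvu_2}$. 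Provided
\begin{equation}
R_1' > I(\rvu_1;\rvz_k\mid \rvu_2) \;=\; I(\rvu_1;\rvz_k,\rvu_2),
\end{equation}
the bin index $\rvm_1$ is asymptotically uniform given $(\rvbz_k^n,\rvu_2^n)$, so the conditional mutual information per symbol vanishes. Taking $R_1'$ slightly above $\max_k I(\rvu_1;\rvz_k,\rvu_2)$ and using a union bound over the $J_2$ eavesdropper states handles the compound requirement; symmetric analysis of $\rvm_2$ yields $R_2' > \max_j I(\rvu_2;\rvy_j,\rvu_1)$.

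Finally, eliminating the dummy rates $R_1'$ and $R_2'$ from the reliability constraints by substituting the secrecy lower bounds gives exactly the rate pair in~\eqref{eq:Rachiev}, and a standard code-expurgation step converts the ensemble-average reliability and secrecy guarantees into a single deterministic codebook achieving the same rates. The subtle point that must be checked carefully is that the conditional wiretap analysis applied uniformly over realizations of $\rvu_2^n$ (and uniformly over the finite collection of eavesdropper states) indeed produces a single sequence of codes simultaneously satisfying all constraints; this follows by concentration of typical-set measures and a union bound, as the number of constraints is $J_1+J_2$, independent of $n$.
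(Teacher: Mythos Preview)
Your construction and analysis are correct and follow the same overall architecture as the paper's proof: independent random-binning (wiretap) codebooks for $\rvu_1$ and $\rvu_2$, transmission through the test channel $p_{\rvx\mid\rvu_1,\rvu_2}$, reliability by treating the other codeword as noise, and secrecy by revealing $\rvu_2^n$ (resp.\ $\rvu_1^n$) as genie side information to the eavesdropper before running the equivocation calculation. The one genuine difference is how the compound secrecy constraint is discharged. You take $R_1'>\max_k I(\rvu_1;\rvz_k,\rvu_2)$ and invoke a union bound over the $J_2$ states, which is the more elementary route and works because $J_2$ is finite and independent of $n$. The paper instead fixes the bin rate at exactly $I(\rvu_1;\rvu_2,\rvz_{k^\star})$, proves secrecy only against the maximizing state $k^\star$, and then handles every other $k$ via an \emph{enhancement lemma}: it constructs an auxiliary $\rvzt_k$ with $(\rvu_1,\rvu_2)\rightarrow\rvx\rightarrow(\rvz_k,\rvz_{k^\star})\rightarrow\rvzt_k$ such that $I(\rvu_1;\rvu_2,\rvz_k,\rvzt_k)=I(\rvu_1;\rvu_2,\rvz_{k^\star})$, so the same secrecy proof applies to the enhanced receiver $(\rvz_k,\rvzt_k)$ and hence a fortiori to $\rvz_k$. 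Your approach is shorter; the paper's buys a single secrecy calculation at the cost of the extra lemma, and is the device one would reach for if the eavesdropper set were infinite.
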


The proof of Prop.~\ref{prop:PB_achiev} is presented in Appendix~\ref{app:PB_achiev}.

\subsubsection*{Case $\min(J_1,J_2) < M \le \max(J_1, J_2)$}

We assume without loss of generality that $J_1 \ge M > J_2$. Let $\bv_1$ be a vector such that $\bg_k^T\bv_1 = 0$ for $k=1,\ldots, J_2$ and furthermore $\bh_j^T \bv_1 \neq 0$ for $j=1,\ldots, J_1$. The transmit vector $\bx$ is given by
\begin{equation}
\bx = \bv_1 m_1 + V_2 \bb_2 \label{eq:ds2x}
\end{equation}
where the precoding matrix $V_2$ and information symbols $m_1$ and $\bb_2$ are selected based on the real interference alignement scheme as described below. Let $N_2 \in \mathbb{N}$ be a sufficiently large integer and define
\begin{align}
\cT_2 &= \left\{\prod_{i=1}^M\prod_{j=1}^{J_1} h_{ji}^{\beta_{ji}}
~\bigg|~0 \le \beta_{ji} \le N_2-1\right\}\label{eq:dsT2}\\
\cA_2 &= \left\{\prod_{i=1}^M\prod_{j=1}^{J_1} h_{ji}^{\beta_{ji}}
~\bigg|~0 \le \beta_{ji} \le N_2\right\}\label{eq:dsA2}
\end{align}consisting of $L_2 = N_2^{MJ_1}$ and $L_2' = (N_2 + 1)^{MJ_1}$ elements respectively. Let $\bv_2 \in \mathbb{R}^{L_2 \times 1}$ be a vector consisting of all  elements in $\cT_2$ and let
\begin{equation}
V_2 = \left[\begin{array}{cccc} \bv_2^T & 0 &\ldots & 0 \\ 0 & \bv_2^T & \ldots & 0 \\ \vdots & \vdots &\ddots & \vdots \\ 0 & 0 &\ldots & \bv_2^T \end{array}\right] \in \mathbb{R}^{M \times ML_2}.
\end{equation}
The elements $b_{21},\ldots, b_{2\{ML_2\}}$ of the information vector $\bb_2 \in \mathbb{R}^{ML_2}$ for group 2 in~\eqref{eq:ds2x} are sampled independently and uniformly from the PAM constellation of the form \begin{equation}
\cC = a\left\{-Q,-Q+1,\ldots, Q-1, Q\right\}\label{eq:dsPAM}
\end{equation} while the information symbol for group $1$ is of the form 
\begin{equation}
\rvm_1 = \bal^T \bb_1 = [\begin{array}{ccc}\al_1 &\ldots & \al_{(M-1)L_2}\end{array}]\left[\begin{array}{c}b_{11}\\ \vdots \\ b_{1{(M-1)L_2}}\end{array}\right], 
\end{equation}where the elements $\al_j$ are selected to be rationally independent of all other coefficients and their monomials and the symbols in $\bb_1$ are also sampled independently and uniformly from the PAM constellation~\eqref{eq:dsPAM}. 

Substituting the choice of $\bx$ in~\eqref{eq:ds2x} the channel model reduces to\begin{equation}\begin{aligned}
\rvy_j  &= (\bh_j^T\bv_1) \bal^T \bb_1 + \bh_j^T V_2 \bb_2 + \rvv_j\\
\rvz_k &= \bg_k^T V_2 \bb_2 + \rvw_k
\end{aligned}\label{eq:ds2Vc}\end{equation}
Following the same line of reasoning as in the proof of Lemma~\ref{lem:CWC:struct}, our choice of $\rvbx$ in~\eqref{eq:ds2x} reduces the channel model as stated below.
\begin{lemma}
With choice of $\bx$ in~\eqref{eq:ds2x} the output symbols at the receivers can be expressed as
\begin{equation}\begin{aligned}
\rvy_j  &= \bht_j^T\bb_1 + \bht^T T_j \bb_2 + \rvv_j\\
\rvz_k &= \bgt_k^T \bb_2 + \rvw_k
\end{aligned}\label{eq:ds2Vc2}\end{equation}
where $\bht \in \mathbb{R}^{L_2'}$ is a vector consisting of all the elements in $\cA_2$, (cf~\eqref{eq:dsA2}), and the entries of vectors $\bht_j$  and $\bgt_k$ are rationally independent and also independent of  elements in $\cA_2$. The entries of matrix $T_j \in \mathbb{R}^{L_2' \times ML_2}$ are either 0 or 1 and there are no more than $M$ ones in each row of $T_j$.
\label{lem:dsVc2}
\end{lemma}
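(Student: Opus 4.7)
\textbf{Proof proposal for Lemma~\ref{lem:dsVc2}.} The plan is to substitute the transmit vector~\eqref{eq:ds2x} into the channel model and reorganize the resulting expressions so that the two parts of the received signal at each legitimate receiver (the $\bb_1$ part and the $\bb_2$ part) are exposed with the required algebraic structure, while the zero-forcing property $\bg_k^T\bv_1=0$ eliminates the $\bb_1$ contribution at each eavesdropper. Throughout, almost-sure rational independence of monomials in the channel coefficients (which are drawn from a continuous distribution) and the rational independence of $\{\al_i\}$ from all such monomials are the key enablers.

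First I would substitute $\bx = \bv_1 m_1 + V_2 \bb_2$ into $\rvy_j = \bh_j^T\bx + \rvv_j$ and $\rvz_k = \bg_k^T\bx + \rvw_k$, obtaining~\eqref{eq:ds2Vc}. For the $\bb_1$ term in $\rvy_j$, I set $\bht_j := (\bh_j^T\bv_1)\bal$, so that $(\bh_j^T\bv_1)\bal^T\bb_1 = \bht_j^T\bb_1$. Since $\bh_j^T\bv_1\neq 0$ by construction of $\bv_1$, a relation $\sum_i c_i(\bh_j^T\bv_1)\al_i=0$ with rational $c_i$ forces $\sum_i c_i\al_i=0$, hence all $c_i=0$ by rational independence of $\{\al_i\}$; the same argument combined with the fact that the $\al_i$ are chosen rationally independent from all monomials in the channel coefficients shows that the entries of $\bht_j$ are also rationally independent from the entries of $\bht$ (elements of $\cA_2$).

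Next, I would treat the $V_2\bb_2$ contribution exactly as in the proof of Lemma~\ref{lem:CWC:struct}. Using the block-diagonal form of $V_2$, one sees that $\bh_j^T V_2 = [h_{j1}\bv_2^T,\ldots,h_{jM}\bv_2^T]$, and every entry is a monomial $h_{ji}\cdot t$ with $t\in\cT_2$. By the definitions of $\cT_2$ and $\cA_2$ in~\eqref{eq:dsT2}--\eqref{eq:dsA2}, each such monomial lies in $\cA_2$, so I can write $\bh_j^T V_2 = \bht^T T_j$ where $\bht$ enumerates $\cA_2$ and $T_j\in\mathbb{R}^{L_2'\times ML_2}$ is a $0$/$1$ matrix having exactly one $1$ per column. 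To bound the row weights, I fix a row corresponding to some $a\in\cA_2$ and count pairs $(i,t)\in\{1,\ldots,M\}\times\cT_2$ with $h_{ji}\,t=a$; for each of the $M$ choices of $i$, there is at most one $t$ that yields $a$ (the exponent of $h_{ji}$ in $a$ must be decremented by one, which either gives a unique element of $\cT_2$ or is infeasible), so every row has at most $M$ ones.

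Finally, for the eavesdropper side, the equality $\bg_k^T\bv_1=0$ removes the $\bb_1$ term and yields $\rvz_k = \bgt_k^T\bb_2 + \rvw_k$ with $\bgt_k := V_2^T\bg_k = [g_{k1}\bv_2^T,\ldots,g_{kM}\bv_2^T]^T$. Each entry of $\bgt_k$ is a monomial of the form $g_{ki}\cdot t$, $t\in\cT_2$. These monomials are pairwise distinct (they differ either in the factor $g_{ki}$ or in the $h$-monomial $t$), and they each carry exactly one factor from $\{g_{k1},\ldots,g_{kM}\}$ whereas the entries of $\bht$ (elements of $\cA_2$) involve no $g$ variables; a standard fact about distinct monomials in independent continuous random variables then gives, almost surely, rational independence among the entries of $\bgt_k$ and rational independence from the entries of $\bht$. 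The main delicate point, and the only real obstacle, is keeping the various rational-independence bookkeeping consistent: one must invoke the assumption that the $\al_i$ were chosen to be rationally independent of all monomials in $\{h_{ji},g_{ki}\}$, and the almost-sure rational independence of any finite collection of distinct monomials in the channel coefficients. Once these are in place, the four structural claims of the lemma follow directly.
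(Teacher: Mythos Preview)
Your proposal is correct and follows essentially the same approach as the paper, which simply states that the lemma is established ``following the same line of reasoning as in the proof of Lemma~\ref{lem:CWC:struct}.'' You have in fact filled in more detail than the paper provides: your exponent-decrement argument for the row-weight bound on $T_j$ (at most one $t\in\cT_2$ per choice of $i$) is a slightly cleaner way to see the ``at most $M$ ones per row'' claim than the corresponding step in Lemma~\ref{lem:CWC:struct}, and your explicit separation of the $g$-dependent monomials in $\bgt_k$ from the pure $h$-monomials in $\cA_2$ makes the rational-independence bookkeeping transparent.
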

The following Lemma can be established along the lines of Prop.~\ref{prop:paramSelec}.
\begin{lemma} Suppose that $\eps > 0 $ be an arbitrary constant and let $\g^2 = \frac{1}{\sum_{t \in \cT_2}t^2 + \sum_{j=1}^{(M-1)L_2}\al_j^2}$ be a normalizing constant that does not depend on $P$. If we select\begin{equation}\begin{aligned}Q &= \left(\frac{P}{2}\right)^{\frac{1-\eps}{2((M-1)L_2+L_2'+\eps)}}\\a &=\g\frac{\left(\frac{P}{2M}\right)^{\frac{1}{2}}}{Q}\end{aligned}\label{eq:RIds2-QAParam1}\end{equation} then we have that $||\bx||^2 \le P$ and for all channel vectors, except a set of measure zero, we have that
\begin{equation}
\pe \le \exp\left(-\eta P^\eps\right),\label{eq:ds2pe1}
\end{equation}
where $\eta$ is a constant that depends on the channel vector coefficients, but does not depend on $P$. 
\label{lem:ds2pe1}
\end{lemma}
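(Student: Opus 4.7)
The proof extends Prop.~\ref{prop:paramSelec} from a scalar codeword to the two-layer signal $\rvbx = \bv_1 \rvm_1 + V_2 \bb_2$, and I would structure the argument in three steps mirroring its proof.

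\emph{Power constraint.} First I would verify $E[\|\rvbx\|^2]\le P$. Since $\bb_1$ and $\bb_2$ are drawn independently with each entry uniform on the PAM constellation~\eqref{eq:dsPAM}, we have $E[b^2]\le a^2Q^2$ for every entry, and the two layers decouple. The group-$1$ contribution is $\|\bv_1\|^2 a^2 Q^2 \sum_{j=1}^{(M-1)L_2}\al_j^2$, while the group-$2$ contribution is $M a^2 Q^2 \sum_{t\in \cT_2} t^2$ --- the factor $M$ arising because $V_2$ places $\bv_2^T$ on each of its $M$ rows, each of variance $a^2 Q^2 \|\bv_2\|^2$. Normalizing $\|\bv_1\|=1$ and substituting $a$ from~\eqref{eq:RIds2-QAParam1}, whose pre-factor involving $\g$ and $1/\sqrt{M}$ is chosen precisely to absorb both sums, makes the total at most $P$.

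\emph{Minimum distance at each legitimate receiver.} Second, I would invoke Lemma~\ref{lem:dsVc2} to rewrite $\rvy_j = \bht_j^T \bb_1 + \bht^T T_j \bb_2 + \rvv_j$. Concatenating $\bht_j$ and $\bht$ yields a vector of $L = (M-1)L_2 + L_2'$ coefficients which, under Assumption~\ref{assump:LB}, are almost surely rationally independent. The underlying symbol on each of these $L$ directions lies in the enlarged alphabet $a\{-MQ,\ldots,MQ\}$ (since every row of $T_j$ has at most $M$ ones), which inflates only multiplicative constants relative to the base alphabet $\cC$ used in Section~\ref{sec:realInterf}. Lemma~\ref{lem:dmin} then yields $d_\mrm{min}\ge k_\eps a / Q^{L-1+\eps}$, almost surely.

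\emph{Error exponent.} Combining this minimum-distance estimate with the standard Gaussian tail bound $\pe \le \exp\{-d_\mrm{min}^2/(8\sigma^2)\}$ and substituting $Q,a$ from~\eqref{eq:RIds2-QAParam1} yields, after the same algebra as the chain culminating in~\eqref{eq:subQ}, $\pe\le \exp(-\eta P^\eps)$ with $\eta>0$ a channel-dependent constant that does not depend on $P$. The exponent of $Q$ in~\eqref{eq:RIds2-QAParam1} is calibrated so that $a^2 / Q^{2(L-1+\eps)} \propto P^\eps$, which is the key identity underlying the scaling.

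\emph{Main obstacle.} The non-routine point is the rational-independence of the combined $L$-tuple of coefficients: entries of $\bht_j$ are monomials in the gains $\{h_{ji}\}$ scaled by the auxiliary $\al_j$'s, whereas entries of $\bht$ are monomials in $\{g_{ki}\}$ drawn from $\cA_2$. Because the $\al_j$'s are chosen rationally independent of every channel monomial and Assumption~\ref{assump:LB} places the $(\bh,\bg)$ on continuous distributions, any nontrivial rational relation would force the gains into a proper algebraic subvariety, hence a measure-zero set. Granted this, the remainder of the proof is bookkeeping identical to the scalar case treated in Prop.~\ref{prop:paramSelec}.
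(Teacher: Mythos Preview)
Your proposal is correct and follows exactly the route the paper indicates: the paper itself gives no detailed proof, stating only that the lemma ``can be established along the lines of Prop.~\ref{prop:paramSelec},'' and your three steps (power bookkeeping, minimum distance via Lemma~\ref{lem:dmin} applied to the $(M-1)L_2+L_2'$ combined directions, and the error-exponent substitution) are precisely that derivation. Two minor corrections that do not affect the validity of the argument: the entries of $\bht$ lie in $\cA_2$, which by~\eqref{eq:dsA2} consists of monomials in the $h_{ji}$'s rather than the $g_{ki}$'s; and since the lemma's $\pe$ bound is later invoked for decoding $\bb_2$ at the group-$2$ receivers as well (cf.~\eqref{eq:ds2j2}), you should note that the same analysis at $\rvz_k=\bgt_k^T\bb_2+\rvw_k$, with only $ML_2 < (M-1)L_2+L_2'$ rationally independent directions, yields an even stronger bound.
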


The achievable rate pair $(R_1,R_2)$ is obtained by evaluating~\eqref{eq:Rachiev} in Prop.~\ref{prop:PB_achiev}  
with $\rvu_1 = \bb_1$, $\rvu_2 = \bb_2$ and with $\rvbx$ in~\eqref{eq:ds2x}.
\begin{align}
&R_1 = \min_j I(\bb_1; \rvy_j) - \max_k I(\bb_1;\rvz_k,\bb_2) \notag\\
&= \min_j I(\bb_1; \rvy_j)\label{eq:ds2i1}\\
&= H(\bb_1) - \max_j H(\bb_1|\rvy_j) \notag\\
&= H(\bb_1) - o_P(1)\label{eq:ds2i2}\\
&= \sum_{i=1}^{(M-1)L_2}H(b_{1i})-o_P(1) \label{eq:ds2i3}\\
&= (M-1)L_2\log(2Q+1) -o_P(1)\label{eq:ds2i4a}\\
&\ge (M-1)L_2\log(Q) -o_P(1)\notag\\
&= (M-1)L_2\frac{1-\eps}{2((M-1)L_2 + L_2'+ \eps)}\log P - o_P(1) \label{eq:ds2i5}
\end{align} where~\eqref{eq:ds2i1} follows from the fact that $(\bb_2, \rvz_k)$ are independent of $\bb_1$ in~\eqref{eq:ds2Vc2},
and~\eqref{eq:ds2i2} follows from Fano's inequality via~\eqref{eq:ds2pe1}, and~\eqref{eq:ds2i3} and~\eqref{eq:ds2i4a} follow from the fact that the entries of $\bb_1$ are selected independently and uniformly from the constellation $\cC$ in~\eqref{eq:dsPAM} and finally~\eqref{eq:ds2i5} follows by substituting the expression for $Q$ in~\eqref{eq:RIds2-QAParam1}. 

\begin{align}
&R_2 = \min_k I(\bb_2; \rvz_k) - \max_j I(\bb_2;\rvy_j,\bb_1) \notag\\
&\ge \min_k I(\bb_2; \rvz_k) - \max_j I(\bb_2; T_j \bb_2) \label{eq:ds2j1}\\
&\ge H(\bb_2) - \max_{k} H(\bb_2|\rvz_k) - \max_j H(T_j\bb_2) \label{eq:ds2j2}\\
&\ge H(\bb_2) - o_P(1) - \max_j H(T_j\bb_2) \label{eq:ds2j2}\\
&= ML_2 \log(2Q+1) - o_P(1) - \max_j H(T_j\bb_2) \label{eq:ds2j3}\\
&\ge ML_2 \log(2Q+1) -  o_P(1) - \max_j \sum_{k=1}^{L_2'}H(\{T_j\bb_2\}_k) \label{eq:ds2j4}\end{align}
where~\eqref{eq:ds2j1} follows from the fact that since $\bb_2$ and $\bb_1$ are indpendent, we have the Markov chain
$(\bb_2,\rvz_k) \rightarrow T_j \bb_1 \rightarrow \bb_1$ for $\rvz_k$ in \eqref{eq:ds2Vc2}, and~\eqref{eq:ds2j2} follows from Fano's inequality via~\eqref{eq:ds2pe1}, and
~\eqref{eq:ds2j3} follows from the fact that the entries of $\bb_2$ are i.i.d.\ and uniformly distributed over $\cC$ in~\eqref{eq:dsPAM}, and~\eqref{eq:ds2j4} follows from the fact that conditioning reduces entropy.

We further simplify the last term in~\eqref{eq:ds2j4} as follows:
\begin{align}
H(\{T_j\bb_2\}_k) &\ge L_2'\log(2MQ+1) \label{eq:ds2j5}\\
&\ge L_2'\log(2Q+1) + L_2'\log M\label{eq:ds2j4a}
\end{align}
~\eqref{eq:ds2j5} from the fact that as stated in Lemma~\ref{lem:dsVc2} each row of the matrix $T_j$ has no more than $M$ ones and thus the support of each element $\{T_j\bb_2\}_k$ is contained in $\{-MQ,\ldots, MQ\}$. Substiuting~\eqref{eq:ds2j4a} into~\eqref{eq:ds2j4} and defining $K = o_P(1) + L_2'\log M$, a constant that does not depend on $P$, we have that
\begin{align}
R_2&\ge (ML_2 - L_2') \log(2Q+1) - K \\
&\ge (ML_2 - L_2') \log Q - K \notag\\
&\ge (ML_2 - L_2') \frac{1-\eps}{2((M-1)L_2 + L_2' + \eps)}\log P - K \label{eq:ds2j6}\end{align}
where the last term follows by substituting the expression for $Q$ in~\eqref{eq:RIds2-QAParam1}.
Using~\eqref{eq:ds2i5} and~\eqref{eq:ds2j6} we have that
\begin{align}
&\lim_{P\rightarrow\infty}\frac{R_1+R_2}{\frac{1}{2}\log P} \notag\\
&= (1-\eps)\frac{(M-1)L_2 + ML_2 - L_2'}{(M-1)L_2 + L_2' + \eps}\\
&= (1-\eps)\frac{2M - 1 - \left(1+\frac{1}{N_2}\right)^{MJ_1}}{(M-1)+ \left(1 + \frac{1}{N_2}\right)^{MJ_1} + \frac{\eps}{N_2^{MJ_1}}}\label{eq:ds2t2}
\end{align}
where we have substituted $L_2 = N_2^{MJ_1}$ and $L_2' = (N_2+1)^{MJ_1}$ in the last expression.  Finally note that the expression~\eqref{eq:ds2t2} can be made arbitrarily close to $2\frac{M-1}{M}$ by choosig $N_2$ sufficiently large and $\eps$ sufficiently close to zero. 

\subsubsection*{Case $\min(J_1, J_2) \ge M$}

When $\min(J_1,J_2)\ge M$, we need to do signal alignment to both groups of users. Let $J = \max(J_1,J_2)$. We design our scheme assuming $J_1 = J_2 = J$. Clearly this coding scheme can also be used in the original case.  We define
\begin{align}
\cT_1 &= \left\{\prod_{i=1}^M\prod_{j=1}^{J} g_{ji}^{\al_{ji}}
~\bigg|~1 \le \al_{ji} \le N \right\}\label{eq:dsT1}\\
\cT_2 &= \left\{\prod_{i=1}^M\prod_{j=1}^{J} h_{ji}^{\al{ji}}
~\bigg|~1 \le \al_{ji} \le N \right\}\label{eq:dsT2a}\end{align}\begin{align}
\cA_1 &= \left\{\prod_{i=1}^M\prod_{j=1}^{J} g_{ji}^{\al_{ji}}
~\bigg|~1 \le \al_{ji} \le N+1\right\}\label{eq:dsA1}\\
\cA_2 &= \left\{\prod_{i=1}^M\prod_{j=1}^{J} h_{ji}^{\al_{ji}}
~\bigg|~1 \le \al_{ji} \le N+1\right\}\label{eq:dsA1}
\end{align}where the sets $\cT_1$ and $\cT_2$ consist of $L = N^{MJ}$ elements whereas the sets $\cA_1$ and $\cA_2$ consist of $L' = (N + 1)^{MJ}$ elements . Let $\bv_1,\bv_2 \in \mathbb{R}^{L \times 1}$ be vectors consisting of all  elements in $\cT_1$ and $\cT_2$ respectively and let
\begin{equation}
V_k = \left[\begin{array}{cccc} \bv_k^T & 0 &\ldots & 0 \\ 0 & \bv_k^T & \ldots & 0 \\ \vdots & \vdots &\ddots & \vdots \\ 0 & 0 &\ldots & \bv_k^T \end{array}\right],\qquad k=1,2
\end{equation}
be the precoding matrices and let the transmit vector be expressed as
\begin{equation}
\bx = V_1 \bb_1 + V_2 \bb_2,\label{eq:ds3bx}
\end{equation}where the vectors $\bb_1, \bb_2 \in \mathbb{R}^{ML}$ consist of information symbols for group $1$ and $2$ respectively. Each entry in these vectors is sampled independently and uniformly from the PAM constellation of the form
\begin{equation}
\cC = a\left\{-Q,\ldots, Q\right\}.\label{eq:ds3PAM}
\end{equation}
Following the line of reasoning in Lemma~\ref{lem:CWC:struct} and Lemma~\ref{lem:dsVc2} we have the following.
\begin{lemma}
With the choice of $\bx$ in~\eqref{eq:ds3bx} we can express the resulting channel output symbols at each receiver as follows,
\begin{equation}
\begin{aligned}
\rvy_j &= \bht_j^T \bb_1 + \bht^T T_j \bb_2 + \rvv_j\\
\rvz_k &= \bgt_k^T\bb_2 + \bgt^T S_k \bb_2 + \rvw_k,
\end{aligned}\label{eq:ds3vc}
\end{equation}where the vectors $\bht, \bgt \in \mathbb{R}^{L'}$ consist of all elements belonging to the sets $\cA_2$ and $\cA_1$ respectively. The elements of vector $\bht_j$ are rationally independent and independent of the the elements of $\cA_1$ and likewise the elements of $\bgt_k$ are rationally independent and independent of the elements of $\cA_2$. The matrices $T_j, S_k \in \mathbb{R}^{L' \times M L}$ have elements that take values of either $0$ or $1$ and there are no more than $M$ elements whose value equals $1$ in any given row of these matrices.
\label{lem:chanReduction}\end{lemma}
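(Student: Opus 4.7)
The plan is to follow the two-step strategy used in the proofs of Lemma~\ref{lem:CWC:struct} and Lemma~\ref{lem:dsVc2}: first, substitute the precoder $\bx = V_1 \bb_1 + V_2 \bb_2$ directly into~\eqref{eq:model1} to expose the desired-signal and interference coefficient vectors at each receiver; second, ``compress'' each interference coefficient vector onto the enlarged monomial alphabet $\cA_2$ or $\cA_1$ in order to extract the sparse $\{0,1\}$ selection matrices $T_j$ and $S_k$. Direct substitution gives
\begin{align*}
\rvy_j &= (\bh_j^T V_1)\bb_1 + (\bh_j^T V_2)\bb_2 + \rvv_j, \\
\rvz_k &= (\bg_k^T V_2)\bb_2 + (\bg_k^T V_1)\bb_1 + \rvw_k.
\end{align*}
By the block-diagonal structure of $V_1$, the desired coefficient vector at receiver $j$ of group $1$ is $\bht_j^T \defeq \bh_j^T V_1 = [h_{j1}\bv_1^T, \ldots, h_{jM}\bv_1^T]$, whose entries are monomials of the form $h_{ji}\cdot t$ with $t\in\cT_1$; by symmetry $\bgt_k^T \defeq \bg_k^T V_2$ gives the desired coefficients at group $2$.

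For the rational-independence claim I will invoke assumption~\ref{assump:LB}: the joint vector of channel entries $(\bh_1,\ldots,\bh_{J_1},\bg_1,\ldots,\bg_{J_2})$ is drawn from a continuous real distribution, hence almost surely its entries are algebraically independent over $\mathbb{Q}$. A standard consequence is that any finite collection of pairwise distinct monomials in these entries is linearly independent over $\mathbb{Q}$. The entries of $\bht_j$ are distinct monomials (the pair of indices $(i,\bal)$ can be read off uniquely from any such monomial) and each carries exactly one power-one factor $h_{ji}$; the elements of $\cA_1$ are, by contrast, pure $g$-monomials with no $h$ factor. Hence the combined list of entries of $\bht_j$ together with $\cA_1$ is still a set of distinct monomials, so they are rationally independent almost surely. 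The corresponding statement for $\bgt_k$ and $\cA_2$ follows by swapping the roles of $h$ and $g$.

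For the cross-group interference, the entries of $\bh_j^T V_2 = [h_{j1}\bv_2^T,\ldots,h_{jM}\bv_2^T]$ have the form $h_{ji}\cdot t$ with $t\in\cT_2$. Because each element of $\cT_2$ has $h$-exponents in $\{1,\ldots,N\}$, multiplication by $h_{ji}$ bumps the exponent of $h_{ji}$ into $\{2,\ldots,N+1\}\subseteq\{1,\ldots,N+1\}$, so every such entry lies in $\cA_2$. I can therefore write $\bh_j^T V_2 = \bht^T T_j$ with $\bht$ enumerating $\cA_2$ and $T_j\in\{0,1\}^{L'\times ML}$ the selection matrix whose $(r,c)$-entry is $1$ precisely when the $r$-th element of $\cA_2$ equals the $c$-th entry of $\bh_j^T V_2$. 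The identical construction produces $S_k$ from $\bg_k^T V_1 = \bgt^T S_k$, giving the corrected form of the second line of~\eqref{eq:ds3vc} (the second $\bb_2$ on the right-hand side of that display should read $\bb_1$).

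The per-row bound on $T_j$ is the only combinatorial point to verify. Fix any row of $T_j$ indexed by some $a\in\cA_2$. For each antenna index $i\in\{1,\ldots,M\}$ there is at most one column $\ell$ within the $i$-th block of length $L$ satisfying $h_{ji}(\bv_2)_\ell = a$, because this equation forces $(\bv_2)_\ell = a/h_{ji}$ and the entries of $\bv_2$ are pairwise distinct (their exponent tuples enumerate $\cT_2$). Summing over the $M$ blocks gives the claimed bound of $M$ ones per row of $T_j$; the argument for $S_k$ is identical after swapping the roles of $h$ and $g$. The main obstacle throughout is the bookkeeping between the two monomial alphabets $\cA_1$ and $\cA_2$, but once the symmetry between $h$- and $g$-factors is properly leveraged, no new ingredients beyond those already developed in Lemmas~\ref{lem:CWC:struct} and~\ref{lem:dsVc2} are required.
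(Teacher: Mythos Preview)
Your proposal is correct and follows exactly the approach the paper indicates: the paper itself offers no separate proof for this lemma, stating only that it follows ``the line of reasoning in Lemma~\ref{lem:CWC:struct} and Lemma~\ref{lem:dsVc2},'' which is precisely the two-step substitute-then-compress argument you carry out in detail. Your observation that the second $\bb_2$ in the display~\eqref{eq:ds3vc} should be $\bb_1$ is also correct and worth noting.
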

The choice of parameters $a$ and $Q$ stated below can be derived along the lines of Prop.~\ref{prop:paramSelec}.
\begin{lemma}
Suppose that $\eps > 0 $ be an arbitrary constant and let $\g^2 = \frac{1}{\sum_{t \in \cT_2}t^2 + \sum_{t\in \cT_1}t^2 }$ be a normalizing constant that does not depend on $P$. If we select\begin{equation}\begin{aligned}Q &= \left(\frac{P}{2}\right)^{\frac{1-\eps}{2(ML+L'+\eps)}}\\a &=\g\frac{\left(\frac{P}{2M}\right)^{\frac{1}{2}}}{Q}\end{aligned}\label{eq:RIds3-QAParam1}\end{equation} then we have that $||\bx||^2 \le P$ and for all channel vectors, except a set of measure zero, we have that
\begin{equation}
\pe \le \exp\left(-\eta P^\eps\right),\label{eq:ds2pe1}
\end{equation}
where $\eta$ is a constant that depends on the channel vector coefficients, but does not depend on $P$. 
\label{lem:ds3pe}
\end{lemma}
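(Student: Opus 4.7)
\textbf{Proof proposal for Lemma~\ref{lem:ds3pe}.} My plan is to mirror the argument for Prop.~\ref{prop:paramSelec}, treating the channel reduction from Lemma~\ref{lem:chanReduction} as the analogue of the scalar multiple-access model~\eqref{eq:RI-Tx}. Concretely, at each legitimate receiver $j$ (the argument for each eavesdropper $k$ is symmetric), the channel output after the precoding step can be written as $\rvy_j = \bht_j^T \bb_1 + \bht^T T_j \bb_2 + \rvv_j$. Because the entries of $\bht_j$ are rationally independent of each other and of every entry of the vector $\bht$ (whose entries are monomials in the $\cA_2$-basis), the $ML + L'$ coefficients that multiply the effective message symbols $(\bb_1, T_j\bb_2)$ are themselves rationally independent. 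Hence Lemma~\ref{lem:dmin} applies with dimension parameter $L_{\mrm{eff}} = ML + L'$, giving $d_\mrm{min}(\cC_r) \ge k_\eps a/Q^{ML+L'-1+\eps}$ almost surely in the channel realization.

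The first task is to check the power constraint. Because $\bb_1$ and $\bb_2$ are independent and each of their entries is i.i.d.\ uniform on the PAM constellation~\eqref{eq:ds3PAM} with second moment at most $a^2 Q^2$, the block-diagonal structure of $V_k$ yields $E[\|V_k \bb_k\|^2] \le M \|\bv_k\|^2 a^2 Q^2 = M \bigl(\sum_{t\in\cT_k} t^2\bigr) a^2 Q^2$. Summing over $k=1,2$ and using the definition of $\g^2$ gives $E[\|\rvbx\|^2] \le M \g^{-2} a^2 Q^2$. Substituting $a = \g (P/(2M))^{1/2}/Q$ from~\eqref{eq:RIds3-QAParam1} then bounds this by $P/2 \le P$, so the constraint holds with margin (the extra factor of $1/2$ will be convenient but is not strictly required).

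For the error probability, I would follow the same chain of substitutions as in the proof of Prop.~\ref{prop:paramSelec}. Plugging the minimum-distance bound into the standard Gaussian tail estimate gives
\begin{equation*}
\pe \le \exp\!\left\{-\frac{k_\eps^2\, a^2}{8\sigma^2\, Q^{2(ML+L'-1+\eps)}}\right\}
= \exp\!\left\{-\frac{k_\eps^2\, a^2 Q^2}{8\sigma^2\, Q^{2(ML+L'+\eps)}}\right\}.
\end{equation*}
Substituting $a^2 Q^2 = \g^2 P/(2M)$ and $Q^{2(ML+L'+\eps)} = (P/2)^{1-\eps}$ from~\eqref{eq:RIds3-QAParam1} collapses the exponent to $\eta P^\eps$ with $\eta = k_\eps^2 \g^2\, 2^{-\eps}/(8 M \sigma^2)$, which is positive and independent of $P$, as claimed.

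The only genuine obstacle is justifying that $L_{\mrm{eff}} = ML + L'$ is the correct effective dimension to feed into Lemma~\ref{lem:dmin}, i.e., that the composite coefficient vector $(\bht_j, T_j^T \bht)$ has rationally independent entries. This is where Assumption~\ref{assump:LB} is used: since the channel gains $\{h_{ji}, g_{ki}\}$ are drawn from a continuous distribution, almost surely no nontrivial rational relation exists among the finite collection of monomials appearing in $\bht_j$ and $\bht$, so the generic-position hypothesis of Lemma~\ref{lem:dmin} is satisfied. Everything else is bookkeeping parallel to Prop.~\ref{prop:paramSelec} and Lemma~\ref{lem:pe}, and I would omit the routine algebra accordingly.
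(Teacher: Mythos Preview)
Your proposal is correct and is exactly what the paper intends: it explicitly states that Lemma~\ref{lem:ds3pe} ``can be derived along the lines of Prop.~\ref{prop:paramSelec},'' and you have faithfully carried out that derivation, checking the power constraint via the block-diagonal structure of $V_1,V_2$ and then applying Lemma~\ref{lem:dmin} with effective dimension $ML+L'$ at each receiver. The one small point you gloss over is that the aggregated symbols $T_j\bb_2$ live in $a\{-MQ,\ldots,MQ\}$ rather than $a\{-Q,\ldots,Q\}$, so strictly speaking Lemma~\ref{lem:dmin} should be invoked with $MQ$ in place of $Q$; since $M$ is a fixed constant this only perturbs $\eta$ and does not affect the conclusion.
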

Finally to compute the achievable rate pair we substitute in~\eqref{eq:Rachiev}, $\rvu_1 = \bb_1$, $\rvu_2 = \bb_2$ and $\rvx$ as specified in~\eqref{eq:ds3bx}. Following analogous calculations that lead to~\eqref{eq:ds2j6} we have that
\begin{align}
R_1 &\ge \frac{(ML-L')(1-\eps)}{ML + L' + \eps}-K\\
R_1 &\ge \frac{(ML-L')(1-\eps)}{ML + L' + \eps}-K,
\end{align}
where $K$ is a constant that does not depend on $P$. Substituting $L = N^{MJ}$ and $L' = (N+1)^{MJ}$,\begin{align}
\lim_{P\rightarrow \infty}\frac{R_1+R_2}{\frac{1}{2}\log P} = 2(1-\eps)\frac{M - \left(1+\frac{1}{N}\right)^{MJ}}{M + \left(1+\frac{1}{N}\right)^{MJ} + \frac{\eps}{N^{JM}}}\end{align}which can be made arbitrarily close to $2\frac{M-1}{M+1}$ by selecting $N$ sufficiently large and $\eps$ sufficiently close to zero.

\section{Compound Private Broadcast: Upper Bounds}
\label{sec:PBC:UB}
When $\max(J_1, J_2) \le M$, the stated upper bound is $2$. It holds even when $J_1=J_2=1$.

When $\min(J_1,J_2)< M \le \max(J_1, J_2)$ we assume without loss of generality that $J_1 < J_2$. The upper bound is developed assuming $J_1 =1$ and $J_2 = M$.
Any sequence of codes that achieves a rate-pair $(R_1, R_2)$ satisfies, via Fano's inequality,
\begin{align}
\frac{1}{n}H(\rvm_1|\rvby_1) \le \eps_n, \frac{1}{n}H(\rvm_2|\rvbz_k) \le \eps_n, k=1,\ldots, M,
\label{eq:sumFano}\end{align}
and the secrecy constraints
\begin{equation}
\frac{1}{n}I(\rvm_2; \rvby_1)\le \eps_n, \frac{1}{n}I(\rvm_1;\rvbz_k) \le \eps_n,  k=1,\ldots, M,
\end{equation}
We can upper bound the sum-rate of the messages as:
\begin{align}
&n(R_1+R_2) \le I(\rvm_1, \rvm_2;\rvby_1, \rvbz_1,\ldots, \rvbz_M) + n\eps_n\notag\\
&= \dent(\rvbz_1,\ldots, \rvbz_M) + \dent(\rvby_1|\rvbz_1,\ldots, \rvbz_M) \notag\\
&\qquad- \dent(\rvby_1, \rvbz_1,\ldots, \rvbz_M|\rvm_1,\rvm_2)+n\eps_n\label{eq:diff}
\end{align}
\
Since the channel vectors $\bg_1,\ldots, \bg_M$ are linearly independent, we can express
$$\bh_1 = \sum_{k=1}^M \lambda_k \bg_k$$
and hence
\begin{align}
&\dent(\rvby_1|\rvbz_1,\ldots, \rvbz_M) \le \dent\left(\rvby_1 - \sum_{i=1}^M \lambda_i \rvbz_i\right) \\
&=\dent\left(\rvbv_1 - \sum_{i=1}^M \lambda_i \rvbw_i\right) = nK_1
\end{align}
where $K_1$ is a constant that does not depend on $P$. We also have that
\begin{align}
&\dent(\rvby_1, \rvbz_1,\ldots, \rvbz_M|\rvm_1,\rvm_2) \\&\ge \dent(\rvby_1, \rvbz_1,\ldots, \rvbz_M|\rvm_1,\rvm_2,X) \\
&=\dent(\rvbv_1,\rvbw_1,\ldots, \rvbw_M) = nK_2,
\end{align}
where $K_2$ is a constant that does not depend on $P$. From~\eqref{eq:diff} with $K_3 = K_1 -K_2 +\eps_n$, we have that
\begin{align}
n(R_1 + R_2) &\le \dent(\rvbz_1,\ldots, \rvbz_M) + nK_3\label{eq:sumUB1}.
\end{align}

Since message $2$ needs to be delivered to $M$ receivers, we have from  Lemma~\ref{lem:JEUB}, that
\begin{equation}
\label{eq:sumUB2}
(M-1)nR_2 \le \sum_{i=1}^M \dent(\rvbz_i) - \dent(\rvbz_1,\ldots, \rvbz_M) +n \eps_n,
\end{equation}
and it follows from~\eqref{eq:sumFano} that
\begin{align}
nR_1 &\le I(\rvm_2;\rvby_1) \notag\\
&= \dent(\rvby_1) -\dent(\rvby_1|\rvm_2)\notag\\
&=\dent(\rvby_1)  + K_4 \label{eq:sumUB3}
\end{align}
Combining~\eqref{eq:sumUB1},~\eqref{eq:sumUB2} and~\eqref{eq:sumUB3} we have that
\begin{align}
nM(R_1+R_2) &\le \sum_{i=1}^M\dent(\rvbz_i) + (M-1)\dent(\rvby_1) + nK_5,
\end{align}
where $K_5 = (M-1)K_4 + K_3$ is a constant that does not depend on $P$. 
Using the upper bound in Appendix~\ref{app:EntBnd} on the entropy of received vector we have that
\begin{equation}
nM(R_1+R_2) \le (2M-1)\frac{n}{2}\max\left(\log P, 0\right) + nK_6,
\end{equation}
and hence,
\begin{equation}
\lim_{P\rightarrow\infty}\frac{R_1 + R_2}{\frac{1}{2}\log P} = \frac{2M-1}{M},
\end{equation}as required.

In the final case when $\min(J_1,J_2)\ge M$, we develop the upper bound assuming that $J_1 = J_2 = M$. The upper bound continues to hold when $J_1 \ge M$ and $J_2 \ge M$ as we are only reducing the number of states. 
For any private broadcast code, there exists a sequence $\eps_n$ such that 
\begin{align}
\frac{1}{n}H(\rvm_1|\rvbz_k) \le \eps_n,\quad \frac{1}{n}I(\rvm_1;\rvbz_k) \le \eps_n
\end{align}
It follows via~\eqref{eq:UB1} in Lemma~\ref{lem:UB1} that
\begin{equation}
nR_1 \le \dent(\rvby_1,\ldots, \rvby_M) - \dent(\rvbz_k) + nc_k, \label{eq:UBp1}
\end{equation}
where $c_k$ is a constant that does not depend on $P$. 
Similarly applying Fano's Inequality
\begin{equation}
\frac{1}{n}H(\rvm_2|\rvbz_k) \le \eps_n
\end{equation}
for message $\rvm_2$ we can establish along the lines of Lemma~\ref{lem:JEUB}, that
\begin{equation}
\dent(\rvbz_1,\ldots,\rvbz_M) \le \sum_{i=1}^M \dent(\rvbz_i) - n(M-1)R_2 + Mn\eps_n
\label{eq:JEUBp}
\end{equation}
Furthermore, using~\eqref{eq:LB} we have that for a constant $d$ that does not depend on $P$, we have that
\begin{align}
&\dent(\rvby_1,\ldots,\rvby_M) \le \dent(\rvbz_1,\ldots,\rvbz_M) + nMd \\
&\le \sum_{i=1}^M \dent(\rvbz_i) - n(M-1)R_2 + Mn\eps_n + nMd, \label{eq:UBp2}
\end{align}
where the last relation follows by substituting~\eqref{eq:JEUBp}. Combining~\eqref{eq:UBp1} and~\eqref{eq:UBp2} and rearranging, we have that
\begin{align}
&nR_1 + n(M-1)R_2 \\&\le \sum_{i=2}^M\dent(\rvbz_i) + nMd + nc_1 + Mn\eps_n\\
&\le (M-1)\frac{n}{2}\max\left(\log P, 0\right) + nK, \label{eq:term1}
\end{align}
where $nK = n(M-1) d_1 + nMd + Mn\eps_n$ is a constant that does not depend on $P$. Note that the last relation follows by the upper bound on the entropy of each received vector as derived in Appendix~\ref{app:EntBnd}. Using a symmetric argument it follows that
\begin{align}
nR_2 + n(M-1)R_1 &\le (M-1)\frac{n}{2}\max\left(\log P, 0\right) + nK, \label{eq:term2}.
\end{align}
Combining~\eqref{eq:term1} and~\eqref{eq:term2}, we have that
\begin{equation}
\lim_{P\rightarrow\infty}\frac{R_1+R_2}{\frac{1}{2}\log P} \le 2\frac{M-1}{M}
\end{equation}
as required.

\section{Conclusions}
\label{sec:concl}
This paper develops new upper and lower bounds on the degrees of freedom of the compound wiretap channel. The upper bound is developed through a new technique that captures the tension between the secrecy and common message constraints and strictly improves the pairwise upper bound. A lower bound, that achieves non-vanishing degrees of freedom for arbitrary number of receiver states, is established based on the real interference alignment technique.  These techniques  are extended to a related problem: the private broadcast channel and again new upper and lower bounds on the degrees of freedom are established.

Our results suggest that interference alignment can potentially play a significant role in designing robust physical layer secrecy protocols. This  technique provides the mechanism to reduce the number of  dimensions occupied by an interfering signal at a receiver, thus increasing the number dimensions that are available for the signal of interest. We apply this technique  to  reduce the observed signal dimensions at multiple eavesdroppers, thus significantly enhancing the rates achieved for the compound wiretap channel compared to traditional techniques.  Nevertheless, we illustrate by an example that unlike the $K-$ user interference channel~\cite{realInterf1} and the compound MIMO broadcast channel~\cite{compoundMIMO1, compoundMIMO2}, a direct application of interference alignment cannot achieve the secrecy capacity of the compound wiretap channel. In terms of future work it will be interesting to close the gap between the upper and lower bounds. Another promising direction is to investigate recent ideas on  practical techniques based on reconfigurable antennas~\cite{jafar:10} for the compound  wiretap channel model.

\appendices

\section{Proof of Prop.~\ref{prop:CWC:TSLB}}
\label{app:CWC:TSLB}

It suffices to consider the case when $\min(J_1,J_2)\ge M$, since the other case is covered in Theorem~\ref{thm:CWC:LB}.
We separately show how to attain $\frac{M-1}{J_1}$ and $\frac{M-1}{J_2}$ degrees of freedom. 

\subsection{Attaining $\frac{M-1}{J_1}$ degrees of freedom}
Our coding scheme is described as follows.
\begin{enumerate}
\item Let $T= {{J_1} \choose{M-1}}$ denote all possible subsets of users of size $M-1$.  We label these subsets as $\cS_1,\ldots,\cS_T$. Note that each user belongs to $T_0 ={{J_1-1}\choose{M-2}} $ subsets. 

\item Let $n$ be a sufficiently large integer. A message $\rvm$ consists of $n T_0 R_0$ information bits where\begin{equation}R_0 = \frac{1}{2}\log P - \Theta, \label{eq:Ts1}\end{equation} and where $\Theta$ is a sufficiently large constant, (which will be specified later) that does not grow with $P$. The message is mapped into a codeword of a $(T,T_0)$ erasure code $\cC$ i.e., $m \rightarrow (m_1, m_2,\ldots, m_{T})$ where each symbol $m_i$ consists of $nR_0$ information bits. Each receiver  retrieves the message $m$ provided it observes any $T_0$ symbols. Furthermore as established in~\cite{khistiTchamWornell:07} suitable wiretap code constructions exist such that provided each of the $T$ symbols are individually protected, the overall message remains protected. i.e.,
\begin{multline}
I(\rvm_t;\rvz_k^n) \le n\eps_n, \qquad \forall t = 1,\ldots, T  \\ \Rightarrow I(\rvm;\rvz_k^n) \le Tn\eps_n
\label{eq:secMC}\end{multline}

The overall rate  $R = \frac{T_0}{T} R_0 $ results in the following degrees of freedom:
\begin{align*}
d &= \frac{T_0}{T}=\frac{{J_1 - 1 \choose M -2}}{{J_1 \choose M-1}}\\
&= \frac{M-1}{J_1}
\end{align*}  as required.

It remains to show how to transmit message $\rvm_t$ such that each user in a subset $\cS_t$  decodes it with high probability while satisfying $I(\rvm_t;\rvz_k^n) \le n\eps_n$.

\item  Each subset $\cS_t$ is served over $n$ channel uses. The message $\rvm_t$ is transmitted to $M-1$ users belonging to this subset along the lines of Theorem~\ref{thm:CWC:LB} when $J_1 < M$ i.e.,  by transmitting information symbols in the common range space and noise in the common null space of these users  (c.f.~\eqref{eq:LowUsers}). 
\begin{equation}
\bx_t = \bu_t \rvs + \ba_t{ \rvn},
\end{equation}
where $\ba_t$ and $\bu_t$ are unit norm vectors such that  $\bh_i^T \bu_t \neq 0$  and $\bh_i^T \ba_t = 0$ for each $i \in \cS_t $ and $\rvs$ and $\rvn$ are information bearing and noise symbols respectively.

Following the analysis leading to~\eqref{eq:TsRate} we can see that the following rate is achievable:
\begin{equation}
\begin{aligned}
R_t &= \frac{1}{2}\log P - \Theta_t, \\ \Theta_t  &= - \min_{i \in \cS_t }\frac{1}{2}\log |\bh_i^T\bu_t|^2 + \\&\qquad \max_{k  }\frac{1}{2}\log\left(1 + \frac{|\bg_k^T\bu_t|^2}{ |\ba_t^T\bg_k|^2}\right) ,
\end{aligned}\end{equation}
where $\Theta_t$ is a constant that does not scale with $P$. Furthermore we let $\Theta$ in~\eqref{eq:Ts1} to be
\begin{equation}
\Theta = \max_{t} \Theta_t.
\end{equation}  
\item With the choice of rate in~\eqref{eq:Ts1} every user in each subset $\cS_t$ can decode the message $\rvm_t$ with high probability. Each user will have access to $T_0$ elements of the codeword $(\rvm_1,\ldots, \rvm_T)$ and hence   recover the original message $\rvm$.  Furthermore each individual message is protected from each eavesdropper\begin{equation}
I(\rvm_t; \rvz_k^n) \le n\eps_n
\end{equation} and hence from~\eqref{eq:secMC} it follows that $I(m;\rvz_k^n)\le nT\eps_n$. Since $\eps_n$ can be made sufficiently small, the secrecy condition is satisfied.  
\end{enumerate}

\subsection{Attaining $\frac{M-1}{J_2}$ degrees of freedom} 
Our coding scheme is described as follows.
\begin{enumerate}
\item We considers all possible $T^e= {{J_2}\choose{M-1}}$ subsets of $M-1$ eavesdroppers and label them as $\cS^2_1,\ldots, \cS^2_{T^e} $. Note that each eavesdropper belongs to a total of $T^e_1 = {{J_2-1}\choose{M-2}}$ subsets.
\item Consider a parallel noise-less wiretap channel consisting of $T^e$ links,  where each link supports a rate $nR_1$, where \begin{equation}R_1 = \frac{1}{2}\log P - \Omega\end{equation} and $\Omega$ is a sufficiently large constant that will be specified later.  Each eavesdropper is absent on a total of $T^e_1$ links while each legitimate receiver observes all the $T^e$ links. Following the scheme in~\cite{khistiTchamWornell:07} we can transmit a message $\rvm$ of rate $nR_1{T^e_1}$ by mapping the message $\rvm\rightarrow(\rvm_1,\ldots, \rvm_{T^e})$.  The symbol  $\rvm_k$, consists of $nR_1$ bits and  forms the input message on channel $k$.  

\item  For each choice of $\cS^2_t$, we transmit information in the common null-space of the eavesdroppers in this selected set. Let $\bb_t$ be a vector such that $\bb_t^T \bg_j = 0$ for each $j \in \cS^2_t$ and transmit $$\bx_t = \bb_t \rvs,$$ where $\rvs$ is the information bearing symbol. Since each vector $\bh_i$ is linearly independent of any collection of $M-1$ eavesdropper channel vectors it follows that $\bh_i^T\bb_t \neq 0$, and one can achieve a rate
\begin{multline}R_1 = \frac{1}{2}\log\left(1 + \min_{\substack{t \in \{1,\ldots, T^e\}\\ i \in \{1,\ldots, J_1\}}} |\bh_i^T \bb_t|^2 P\right)\\ \ge \frac{1}{2}\log P - \Omega,\end{multline}where\begin{equation}\Omega = -\min_{\substack{t \in \{1,\ldots, T^e\}\\ i \in \{1,\ldots, J_1\}}} \frac{1}{2}\log|\bh_i^T \bb_t|^2. \end{equation} 

With this choice of $\Omega$, each receiver decodes each of the messages $m_1,\ldots, m_T$ with high probability. Furthermore, each of the eavesdropper does not have access to $T_1$ sub-messages corresponding to the subsets $\cS_t$ to which it belongs. By virtue of our code construction, this ensures that $I(\rvm;\rvz_k^n)\le n\eps_n$.

The overall achievable rate is given by $R = \frac{T^e_1}{T^e}R_1$ and hence the achievable degrees of freedom  are given by\begin{align*}d &=\frac{T_1^e}{T^e}= \frac{{J_2-1\choose M-2}}{{J_2 \choose M-1}}= \frac{M-1}{J_2}\end{align*}
as required.\end{enumerate}\

\section{Proof of Lemma~\ref{lem:UB1}}
\label{app:UB1}
From the secrecy constraint and Fano's inequality (c.f.~\eqref{eq:secrecy} and~\eqref{eq:fano}) we have that for any length $n$ code with rate $R$:
\begin{align}
&nR= H(\rvm) \le H(\rvm|\rvby_1) + I(\rvm;\rvby_1)\notag\\
&\le I(\rvm;\rvby_1) + n\eps_n \label{eq:f1}\\
&\le I(\rvm;\rvby_1)-I(\rvm;\rvbz_k) + 2n\eps_n \label{eq:f2}\\
&\le I(\rvm;\rvby_1,\rvby_2,\ldots, \rvby_{M-1},\rvbz_k) - I(\rvm;\rvbz_k) + 2n\eps_n \notag\\
&\le I(\rvm;\rvby_1,\ldots, \rvby_{M-1}| \rvbz_k) + 2n\eps_n \notag\\
&= \dent(\rvby_1,\ldots, \rvby_{M-1}|\rvbz_k) - \notag\\&\qquad\dent(\rvby_1,\ldots, \rvby_{M-1}|\rvbz_k,\rvm) + 2n\eps_n \notag\\
&\le\dent(\rvby_1,\ldots, \rvby_{M-1}|\rvbz_k) - \notag\\&\qquad\dent(\rvby_1,\ldots, \rvby_{M-1}|\rvbz_k,X,\rvm) + 2n\eps_n \notag\\
&= \dent(\rvby_1,\ldots, \rvby_{M-1}|\rvbz_k) - \notag\\&\qquad\dent(\rvbv_1,\ldots, \rvbv_{M-1}) + 2n\eps_n \label{eq:f3}\\
&= \dent(\rvby_1,\ldots, \rvby_{M-1},\rvbz_k) - \dent(\rvbz_k) - \notag \\ &\qquad\frac{n(M-1)}{2}\log (2\pi e) + 2n\eps_n\label{eq:f4}\\
&= \dent(\rvby_1,\ldots, \rvby_M, \rvbz_k)  - h(\rvby_M|\rvby_1,\ldots,\rvby_{M-1},\rvbz_k)\notag\\
&\qquad- \dent(\rvbz_k) - \frac{n(M-1)}{2}\log (2\pi e) + 2n\eps_n\notag\\
&= \dent(\rvby_1,\ldots,\rvby_M)  - \dent(\rvbz_k)\notag\\
&~- h(\rvby_M|\rvby_1,\ldots,\rvby_{M-1},\rvbz_k)+\dent(\rvbz_k|\rvby_1,\ldots,\rvby_M) \notag\\
&+ 2n\eps_n - \frac{n(M-1)}{2}\log (2\pi e)\label{eq:f5}
\end{align}
where~\eqref{eq:f1} and~\eqref{eq:f2} are consequences of Fano's Lemma~\eqref{eq:fano} and the secrecy constraint~\eqref{eq:secrecy} respectively,~\eqref{eq:f3} follows
from the fact that the noise variables $(\rvbv_1,\ldots,\rvbv_{M-1})$ in~\eqref{eq:model2} are independent of $(\rvbx,\rvbz_K)$ and finally~\eqref{eq:f4} follows from the fact that the noise variables are i.i.d. $\cN(0,1)$.

To complete the argument it suffices to show that there exist constants $d_{1k}$ and $d_{2k}$, that are independent of $P$ such that 
\begin{align}
\dent(\rvby_M|\rvby_1,\ldots,\rvby_{M-1},\rvbz_k) \label{eq:dep1} &\ge nd_{1k}\\
 \dent(\rvbz_k|\rvby_1,\ldots,\rvby_M) &\le nd_{2k}\label{eq:dep2}.
\end{align}

To establish~\eqref{eq:dep1} we observe that
\begin{align*}
\dent(\rvby_M|\rvby_1,\ldots,\rvby_{M-1},\rvbz_k) &\ge \dent(\rvby_M|\rvby_1,\ldots,\rvby_{M-1},\rvbz_k,X)\\
&=\dent(\rvbv_M|\rvby_1,\ldots,\rvby_{M-1},\rvbz_k,X)\\
&=\frac{n}{2}\log 2\pi e  \defeq nd_{1k}\end{align*}
where the last relation follows from the fact that the noise variable $\rvbv_M$ is independent of all other variables.

To establish~\eqref{eq:dep2} we observe that  the collection of vectors $(\bh_1,\ldots, \bh_{M})$ constitutes a basis for $\mathbb{R}^M$ i.e., we can express
\begin{equation}
\bg_k = H_{k}\bbe_k \label{eq:dependence}
\end{equation}
where $H_{k}= [\bh_1,\bh_2,\ldots,\bh_{M}] \in\mathbb{R}^{M\times M}$  is a matrix  obtained by stacking the channel vectors of the $M$ legitmate receivers, $\bbe_k \in \mathbb{R}^{M}$ is a vector. Hence we have that
\begin{align}
& \dent(\rvbz_k|\rvby_1,\ldots,\rvby_{M})\\
&=\dent\left(\rvbz_k -  \bbe_k^T\left[\begin{array}{c} \rvby_1\\\vdots\\\rvby_{M}\end{array}\right] \Biggm| \rvby_1,\ldots,\rvby_{M}\right) \notag\\
&\le \dent\left(\rvbz_k - \bbe_k^T\left[\begin{array}{c} \rvby_1\\\vdots\\\rvby_{M}\end{array}\right] \right)\notag\\
&=\dent\left(\rvbw_k -\bbe_k^T\left[\begin{array}{c} \rvbv_1\\\vdots\\\rvbv_{M}\end{array}\right]\right)\label{eq:nc1}\\
&=\frac{n}{2}\log 2 \pi e \left(1 + ||\bbe_k||^2 \right) \defeq nd_{2k} \label{eq:nc2}
\end{align}
where we use~\eqref{eq:dependence} in~\eqref{eq:nc1} and the last relation follows from the fact that the noise vectors consist of independent Gaussian entries $\cN(0,1)$. 

Thus we have from~\eqref{eq:f5} that
\begin{equation}
nR \le \dent(\rvby_1,\ldots,\rvby_M)-\dent(\rvbz_k) + nc_k,
\end{equation}
where
\begin{equation}
c_k = 2\eps_n +\frac{M}{2}\log 2\pi e + \frac{1}{2}\log 2\pi e (1+||\bbe_k||^2)
\end{equation}
is a constant that does not depend on $P$.

\section{Proof of \eqref{eq:LB}}
\label{app:LB}

Define the channel matrices
\begin{equation}
H = \left[\begin{array}{c}\bh_1^T \\\vdots\\\bh_M^T\end{array}\right],\qquad G = \left[\begin{array}{c}\bg_1^T \\\vdots\\\bg_M^T\end{array}\right],
\end{equation}
and the noise matrices
\begin{equation}
\rvW = \left[\begin{array}{c}\rvbw_1 \\\vdots\\\rvbw_M\end{array}\right],\qquad \rvV = \left[\begin{array}{c}\rvbv_1 \\\vdots\\\rvbv_M\end{array}\right],
\end{equation}
so that we can express
\begin{equation}
\left[\begin{array}{c}\rvby_1 \\\vdots\\\rvby_M\end{array}\right] = HX+V,\qquad \left[\begin{array}{c}\rvbz_1 \\\vdots\\\rvbz_M\end{array}\right] = GX+W.\label{eq:channelVectors}
\end{equation}
Then~\eqref{eq:LB} is equivalent to showing that,
\begin{equation}
\dent(GX+W) \ge \dent(HX+V) -nMd,\label{eq:LB-mod}
\end{equation}
for some constant $d$ that does not depend on $P$.  
\begin{align}
&\dent(GX+W) \notag\\
&= \dent(HX+V) - \dent(HX+V|GX+W) \notag\\&\qquad + \dent(GX+W|HX+V) \notag\\
&\ge \dent(HX+V) -\dent(HX+V|GX+W) \notag\\&\qquad+ \dent(GX+W|HX+V,X)\label{eq:nex1}\\
&=\dent(HX+V) - \dent(HX+V|GX+W) + \dent(W) \label{eq:nex2}\\
&=\dent(HX+V) - \dent(HX+V|GX+W) + \frac{n}{2}\log 2\pi e \label{eq:nex3}
\end{align}
where in~\eqref{eq:nex1} we use the fact that conditioning reduces the differential entropy, in~\eqref{eq:nex2} and~\eqref{eq:nex3}, we use the fact that  noise variables in $W$ are independent of $V$ and i.i.d.\ $\cN(0,1)$.

Further note that\begin{align}
&\dent(HX+V|GX+W) \notag\\ &= \dent(V-HG^{-1}W|GX+W)\label{eq:nex4}\\
&\le \dent(V-HG^{-1}W)\notag\\
&=\frac{n}{2}\log2\pi e\det\left(I + HG^{-1}G^{-T}H^T\right)\label{eq:nex5},
\end{align}
where we use the fact that the channel matrices $H$ and $G$ are full rank and invertible in~\eqref{eq:nex4} and that they have i.i.d. $\cN(0,1)$ entries in~\eqref{eq:nex5}. Substituting in~\eqref{eq:nex3} it follows that
\begin{align}
\dent(GX+W) &\ge \dent(HX+V)\notag\\
&\qquad -\frac{n}{2}\log \det\left(I + HG^{-1}G^{-T}H^T\right) 
\end{align}
and thus we can select 
$$d =\frac{1}{2M}\log\det\left(1 + HG^{-1}G^{-T}H^T\right).$$
in~\eqref{eq:LB}.

\section{Bound on $\dent(\rvby_j)$}
\label{app:EntBnd}
Define the input covariance at time $t$, $K_\rvx(t) = E[\rvbx(t)\rvbx(t)^T]$  and let $P_t = \mrm{trace}(K_\rvx(t))$. Recall from the power constraint that $\frac{1}{n}\sum_{t=1}^n P_t \le P $.
Since
$$\rvy_j(t) = \bh_j^T \rvbx(t) + \rvv_j(t)$$
we have that
\begin{align}
\var(\rvy_j(t))&= \bh_j^T K_\rvx(t) \bh_j + 1 \label{eq:o1}\\
&\le \lambda_\mrm{max}(K_\rvx(t))||\bh_j||^2 + 1\label{eq:o2}\\
&\le P_t||\bh_j||^2 + 1\label{eq:o3}
\end{align}where~\eqref{eq:o1} follows from the fact the noise variable $\rvv_j$ is indpendent of the input $\rvbx(t)$,~\eqref{eq:o2} from the variational characterization of eigen values (see e.g.,~\cite{golubVanLoan}) and finally~\eqref{eq:o3} follows from the fact that $P_t \ge \mrm{trace}(K_\rvx(t))$ exceeds the sum of the eigen values and hence exceeds the largest eigen value.  
\begin{align}
\dent(\rvby_j)&=\sum_{t=1}^n \dent(\rvby_j(t))\label{eq:oa1}\\
&=\sum_{t=1}^n \frac{1}{2}\log 2\pi e \left(P_t||\bh_j||^2+1\right)\label{eq:oa2}\\
&\le \frac{n}{2}\log 2\pi e\left(||\bh_j||^2\frac{1}{n}\sum_{t=1}^nP_t +1 \right)\label{eq:oa3}\\
&\le \frac{n}{2}\log 2\pi e\left(||\bh_j||^2P+1 \right)\label{eq:oa4}\\
&\le \frac{n}{2}\max\left(\log P,0\right) + \frac{n}{2}\log 2\pi e(1+||\bh_j||^2) 
\end{align}where~\eqref{eq:oa1} follows from the fact that conditioning reduces the differential entropy,~\eqref{eq:oa2} from the fact that a Gaussian random variable maximizes the differential entropy among all random variables with a fixed variance and~\eqref{eq:oa3} follows from Jensen's inequality, since the $\log(\cdot)$ function is concave and~\eqref{eq:oa4} follows from the power constraint. To get an upper bound on all $1\le j \le M$ we have that
\begin{equation}
\dent(\rvby_j) \le \frac{n}{2}\max\left(\log P,0\right) + nd_1\label{eq:EntBnd}
\end{equation} 
where
\begin{equation}
d_1 = \frac{1}{2}\log 2\pi e(1+\max_{1\le j \le M}||\bh_j||^2)
\end{equation}

\section{Proof of Prop.~\ref{prop:PB_achiev}}
\label{app:PB_achiev}

For any joint distribution of the form\begin{multline}p_{\rvu_1,\rvu_2,\rvx,\rvy_1,\ldots, \rvy_{J_1},\rvz_1,\ldots, \rvz_{J_2}} \\= p_{\rvu_1}p_{\rvu_2}p_{\rvx|\rvu_1,\rvu_2} p_{\rvy_1,\ldots,\rvy_{J_1},\rvz_1,\ldots,\rvz_{J_2}|\rvx}.\end{multline} we show that the following rate pair is achievable
\begin{align}
R_1 &= \min_{j} I(\rvu_1;\rvy_j) - I(\rvu_1;\rvu_2,\rvz_k^\star) - \delta\label{eq:R1a}\\
R_2 &= \min_{k} I(\rvu_2;\rvz_k) - I(\rvu_2;\rvu_1,\rvy_j^\star) - \delta,\label{eq:R2a}
\end{align}where $\delta > 0$ is an arbitrary constant and where we have introduced, $$ k^\star = \arg\max_k I(\rvu_1;\rvu_2,\rvz_k),\quad j^\star = \arg \max_j I(\rvu_2;\rvu_1,\rvy_j).$$

We start by constructing codebooks
\begin{equation}
\begin{aligned}
\cC_1 = \left\{u_{1ab}^n: a=1,\ldots, 2^{nR_1}, b=1,\ldots, 2^{n I(\rvu_1;\rvu_2,\rvz_k^\star)}\right\}\\
\cC_2 = \left\{u_{2ab}^n: a=1,\ldots, 2^{nR_2}, b=1,\ldots, 2^{n I(\rvu_2;\rvu_1,\rvy_j^\star)}\right\}\end{aligned}\label{eq:cC}
\end{equation}
We assume that the codewords in $\cC_1$ belong to the set $T_\eps^n(\rvu_1)$ of strongly typical sequences whereas the codewords in $\cC_2$ belong to the set $T_\eps^n(\rvu_2)$. Given messages $\rvm_1$ and $\rvm_2$ the encoder sets $a_1 = \rvm_1$, $a_2 = \rvm_2$ and selects 
$b_1$, uniformly from $\{1,\ldots, 2^{nI(\rvu_1;\rvu_2,\rvz_k^\star)}\}$ and $b_2$ uniformly from the set $\{1,\ldots, 2^{nI(\rvu_2;\rvu_1,\rvy_j^\star)}\}$. It selects codewords $(u_{1a_1b_1}^n$ and $u_{2a_2b_2}^n)$ from $\cC_1$ and $\cC_2$ respectively and then transmits $\rvx^n$ generated by passing  codewords through the memoryless, fictitious channel $p_{\rvx|\rvu_1,\rvu_2}(\cdot)$.

With the choice of $R_1$ and $R_2$ specified in~\eqref{eq:R1a} and~\eqref{eq:R2a} it can be shown, (see e.g.,~\cite{LiuMaricSpasojevicYates:07}), that there exist codebooks $\cC_1$ and $\cC_2$ in~\eqref{eq:cC}  such that the error probability at each receiver is smaller than any target value. Furthermore receiver $k^\star$ in group 2 can decode the index $b_1$ with high probability if it is revealed $\rvm_1$ in addition to $(\rvz_{k^\star}^n, \rvu_2^n)$. Similarly  eavesdropper $j^\star$ can decode the index $b_2$ if it is revealed $\rvm_2$ in addition to $(\rvy_{j^\star}^n,\rvu_1^n)$ i.e.,
\begin{align}
\max\!\left\{\frac{1}{n}H(\rvu_1^n|\rvz_{k^\star}^n,\rvu_2^n,\rvm_1),\frac{1}{n}H(\rvu_2^n|\rvy_{j^\star}^n,\rvu_1^n,\rvm_2) \right\}\!\le\!\eps_n \label{eq:fanoAC}
\end{align}For such a codebook, we show that for some sequence $\eps_n$ that approaches zero as $n\rightarrow \infty$, we have that
\begin{equation}\begin{aligned}
\frac{1}{n}I(\rvm_1;\rvz_k^n) &\le \eps_n,\qquad
\frac{1}{n}I(\rvm_2;\rvy_j^n) \le \eps_n \\
\end{aligned}\end{equation}
for $k=1,\ldots, J_2$ and $j=1,\ldots, J_1$. We first observe that it suffices for us to show
that\begin{equation}\begin{aligned}\frac{1}{n}I(\rvm_1;\rvz_{k^\star}^n) &\le \del_n,\qquad
\frac{1}{n}I(\rvm_2;\rvy_{j^\star}^n) \le \del_n. \label{eq:secAc}\\
\end{aligned}\end{equation}
As we show below, one can provide sufficient side information to enhance each eavesdropper so that it is equivalent to the strongest eavesdropper. Then it is easy to verify that if the coding scheme guarantees~\eqref{eq:secAc} then the messages are also secure from the enhanced eavesdroppers and hence the original eavesdroppers. 
\begin{lemma}
For each $k=1,\ldots, J_2$ there exists a variable $\rvzt_k$ that satisfies $(\rvu_1,\rvu_2) \rightarrow \rvx \rightarrow (\rvz_k,\rvz_{k^\star}) \rightarrow \rvzt_k$ such that
$I(\rvu_1;\rvu_2,\rvz_{k^\star}) = I(\rvu_1;\rvu_2,\rvz_k,\rvzt_k)$. Likewise for each $j=1,\ldots, J_1$ there exists a variable $\rvyt_j$ that satisfies $(\rvu_1,\rvu_2) \rightarrow \rvx \rightarrow (\rvy_j,\rvy_{j^\star}) \rightarrow\rvyt_j$ such that $I(\rvu_2;\rvu_1,\rvy_{j^\star})= I(\rvu_2;\rvu_1,\rvy_j,\rvyt_j)$.
\label{lem:enhance}\end{lemma}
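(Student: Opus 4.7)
\emph{Plan.} My approach would be to construct $\rvzt_k$ as a randomized erasure of the strongest eavesdropper's observation $\rvz_{k^\star}$, using a time-sharing parameter $\lambda$, and then invoke the intermediate value theorem to tune $\lambda$ so that the mutual-information target is hit exactly. Specifically, I would introduce a Bernoulli random variable $Q$ with $\Pr(Q=1)=\lambda$, independent of $(\rvu_1,\rvu_2,\rvx,\rvz_k,\rvz_{k^\star})$, and define $\rvzt_k=\rvz_{k^\star}$ when $Q=1$ and $\rvzt_k=e$ (a fixed erasure symbol outside the support of $\rvz_{k^\star}$) when $Q=0$. Since $\rvzt_k$ is a deterministic function of $(\rvz_{k^\star},Q)$ with $Q$ independent of $(\rvu_1,\rvu_2)$, the required Markov condition $(\rvu_1,\rvu_2)\to\rvx\to(\rvz_k,\rvz_{k^\star})\to\rvzt_k$ holds automatically.

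\emph{Matching the target mutual information.} The key observation is that the erasure symbol is distinguishable from any realization of $\rvz_{k^\star}$, so $Q$ is a function of $\rvzt_k$ and therefore $I(\rvu_1;\rvu_2,\rvz_k,\rvzt_k)=I(\rvu_1;\rvu_2,\rvz_k,\rvzt_k,Q)$. Applying the chain rule with $I(\rvu_1;Q)=0$ and then conditioning on the two values of $Q$ produces the linear relation
\begin{equation}
f(\lambda)\defeq \lambda\, I(\rvu_1;\rvu_2,\rvz_k,\rvz_{k^\star})+(1-\lambda)\,I(\rvu_1;\rvu_2,\rvz_k).
\end{equation}
By the defining property of $k^\star$, $f(0)\le I(\rvu_1;\rvu_2,\rvz_{k^\star})$, whereas $f(1)\ge I(\rvu_1;\rvu_2,\rvz_{k^\star})$ because conditioning on extra variables cannot shrink a mutual information. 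Since $f$ is linear (and a fortiori continuous) in $\lambda$, the intermediate value theorem delivers the desired $\lambda^\star\in[0,1]$ with $f(\lambda^\star)=I(\rvu_1;\rvu_2,\rvz_{k^\star})$, giving the equality claimed.

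\emph{Second statement and main subtlety.} The companion claim for $\rvyt_j$ would follow from the identical construction with the roles of the two groups swapped: erase $\rvy_{j^\star}$ with a suitable Bernoulli coin, interchanging $(\rvz_k,\rvz_{k^\star},\rvu_1,\rvu_2)$ with $(\rvy_j,\rvy_{j^\star},\rvu_2,\rvu_1)$. The only delicate step I foresee is justifying the identity $I(\rvu_1;\rvu_2,\rvz_k,\rvzt_k)=I(\rvu_1;\rvu_2,\rvz_k,\rvzt_k,Q)$, which relies on $H(Q\mid\rvzt_k)=0$. This is precisely why I prefer the erasure (time-sharing) construction over alternatives such as additively corrupting $\rvz_{k^\star}$ by noise of tunable variance: in the latter, $Q$ would be entangled with the corrupting noise and this clean equality would be lost, forcing a more delicate continuity argument.
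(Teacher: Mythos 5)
Your construction is essentially the one in the paper: the paper takes a time-sharing index $I_k\in\{k,k^\star\}$, independent of everything else with $\Pr(I_k=k^\star)=p$, and sets $\rvzt_k=(\rvz_{I_k},I_k)$; yours replaces $(\rvz_k,k)$ by an explicit erasure symbol, which carries exactly the same information since $\rvz_k$ is already available, and both reduce the target mutual information to the same affine function of the mixing parameter on which the intermediate value theorem is invoked. The only cosmetic difference is that you spell out the linearity of $f(\lambda)$ via the chain rule and the $H(Q\mid\rvzt_k)=0$ identity, whereas the paper appeals to continuity implicitly; your proof is correct and matches the paper's argument.
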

The proof of Lemma~\ref{lem:enhance} will be provided at the end of this section.  To establish~\eqref{eq:secAc} consider:
\begin{align}
&\frac{1}{n}H(\rvm_1|\rvz_{k^\star}^n) \ge\frac{1}{n}H(\rvm_1|\rvz_{k^\star}^n,\rvu_2^n)\label{eq:condRed}\\
&\ge \frac{1}{n} H(\rvu_1^n,\rvm_1|\rvz_{k^\star}^n,\rvu_2^n) - \del_n \label{eq:fano1}\\
&=\frac{1}{n} H(\rvu_1^n|\rvz_{k^\star}^n,\rvu_2^n) - \del_n \label{eq:detmsg}\\
&= \frac{1}{n}H(\rvu_1^n|\rvu_2^n) - \frac{1}{n}I(\rvu_1^n;\rvz_{k^\star}^n|\rvu_2^n) -\del_n 
\label{eq:step1}\end{align}
where~\eqref{eq:condRed} follows from the fact that conditioning on $\rvu_2^n$ reduces the entropy, \eqref{eq:fano1} follows from via~\eqref{eq:fanoAC},~\eqref{eq:detmsg} follows from the fact that $\rvm_1$ is a deterministic function of $\rvu_1^n$ and hence can be dropped from the conditioning. We separately bound the two terms in~\eqref{eq:step1}.
\begin{align}
\frac{1}{n}H(\rvu_1^n|\rvu_2^n) =\frac{1}{n}H(\rvu_1^n) = \min_{j} I(\rvu_1;\rvy_{j^\star}) - \delta, \label{eq:t1}
\end{align} where we have used the fact that the codeword $\rvu_1^n$ is independently selected of $\rvu_2^n$ in our construction and is uniformly distributed over the set $\cC_1$. We upper bound the second term in~\eqref{eq:step1} as follows. Since the cascade channel $(\rvu_1,\rvu_2)\rightarrow \rvz_{k^\star}$ is memoryless, it an be easily verified that
\begin{align}
I(\rvu_1^n;\rvz_{k^\star}^n|\rvu_2^n) &\le \sum_{i=1}^n I(\rvu_{1i};\rvz_{k^\star i}|\rvu_{2i})
\end{align}
Furthermore for each codebook in the ensemble of typical codebooks we have from the weak law of large numbers that the summation on the right hand side converges to $n I(\rvu_1;\rvz_{k^\star}|\rvu_2)$ i.e., the mutual information evaluated with the original input distribution.  Thus we can write
\begin{align}
I(\rvu_1^n;\rvz_{k^\star}^n|\rvu_2^n) &\le  n I(\rvu_{1};\rvz_{k^\star }|\rvu_{2}) - n o_n(1),\label{eq:t2}
\end{align}where $o_n(1)$ converges to zero as $n\rightarrow\infty$.
Substituting~\eqref{eq:t1} and~\eqref{eq:t2} into~\eqref{eq:step1} we establish the first half of~\eqref{eq:secAc}. The second half of~\eqref{eq:secAc} can be established in an analogous manner. 

It remains to provide a proof of Lemma~\ref{lem:enhance} which we do below.

\begin{proof}[Lemma~\ref{lem:enhance}]
The construction of random variable $\rvzt_k$ follows the same approach as in the case of compound wiretap channel~\cite{liangKramer:08}. In particular suppose that $I_k \in\{k,k^\star\}$ is a random variable independent of all other variables. Let $\Pr(I_k=k^\star)=p$.

Define a new random variable, $\rvz_{I_k} = (\rvz_{I_k},I_k)$ and consider the function $f(p) = I(\rvu_1;\rvu_2,\rvz_k,\rvz_{I_k})$. It is clear that $f(0) = I(\rvu_1;\rvu_2,\rvz_k) \le I(\rvu_1;\rvu_2,\rvz_{k^\star})$ whereas $f(1) = I(\rvu_1;\rvu_2,\rvz_k,\rvz_{k^\star}) \ge I(\rvu_1;\rvu_2,\rvz_{k^\star})$. Thus there exists a value of $p^\star \in [0,1]$ such that $$I(\rvu_1;\rvu_2,\rvz_k,\rvz_{I_k}) = I(\rvu_1;\rvu_2,\rvz_{k^\star})$$ and the resulting random variable in Lemma~\ref{lem:enhance} is given by $\rvzt_k = (\rvz_{I_k},I_k)$. The construction of $\rvyt_k$ follows in an analogous manner. 
\end{proof}

\section*{Acknowledgements}
Insightful discussions with  Frank Kschischang and Suhas Diggavi during the early stages of this work are gratefully acknowledged. The author also thanks T.~Liu for careful reading of the proof in Theorem~\ref{thm:CWC:UB} in an earlier preprint of this work.

\bibliographystyle{IEEEtran}

\end{document}